\documentclass{article}

\usepackage[utf8]{inputenc}
\usepackage[T1]{fontenc}

\usepackage{amsmath, amssymb}
\usepackage{amsthm}
\usepackage[disable]{todonotes}
\usepackage{thmtools}
\usepackage{thm-restate}
\newtheorem{theorem}{Theorem}
\newtheorem{lemma}{Lemma}

\theoremstyle{remark}
\newtheorem*{remark}{Remark}

\usepackage{mathrsfs}
\usepackage{float}
\usepackage{mathtools}
\usepackage{graphicx}
\usepackage{parskip}
\usepackage{hyperref}
\usepackage{xcolor}
\usepackage{subcaption}
\usepackage{booktabs}
\usepackage{caption}
\newtheorem{axiom}{Axiom}[section]

\title{Minimal Effort to Consensus (MEC) polarization measure}
\author{
  Jes\'{u}s Aranda\\Universidad del Valle
  \and
  Juan Francisco D\'{\i}az\\Universidad del Valle
  \and
  Juan Camilo Narv\'{a}ez\\Universidad del Valle
  \and
  Catuscia Palamidessi \\ INRIA-Saclay
  \and
  Carlos Pinz\'{o}n\\INRIA-Saclay
  \and
  Frank Valencia\\CNRS-LIX, \'Ecole Polytechnique
  \and
  Oscar Vargas\\Universidad Javeriana Cali
}

\begin{document}
\newtheorem{definition}{Definition}
\newtheorem{example}{Example}

\newtheorem{proposition}{Proposition}

\newcommand{\Reals}{\mathbb{R}}
\newcommand{\Nats}{\mathbb{N}}
\newcommand{\bvec}[1]{\boldsymbol{#1}}
\newcommand{\Dset}{\mathcal{D}}
\newcommand{\Pol}{\mathbf{P}}

\newcommand{\Argmax}{\text{argmax}}
\newcommand{\PositiveReals}{\Reals_{\geq 0}}

\newcommand{\Mass}[1]{\Sigma #1}
 \newcommand{\defsymbol}{\stackrel{\text{def}}{=}}

\newcommand{\support}{\boldsymbol{s}}

\def\oo{\infty}
\def\eps{\epsilon}
\def\calX{{\mathcal{X}}}


\NewDocumentCommand{\ER}{O{\alpha}}{\operatorname{ER}_{#1}}

\NewDocumentCommand{\ERLabel}{o}{\ensuremath{\mathrm{ER}\IfValueT{#1}{_{#1}}}}
\newcommand{\EMDLabel}{\ensuremath{\mathrm{EMD}}}
\newcommand{\VdELabel}{\ensuremath{\mathrm{VdE}}}
\newcommand{\TWDLabel}{\ensuremath{\mathrm{TWD}}}
\newcommand{\KILabel}{\ensuremath{\mathrm{KI}}}

\NewDocumentCommand{\EC}{O{\alpha,\beta} O{\pi}}{%
\operatorname{EC}_{#1}^{#2}%
}

\NewDocumentCommand{\MEC}{O{\alpha,\beta}}{\operatorname{MEC}_{#1}}

\NewDocumentCommand{\ECEMD}{O{\alpha,\beta} O{\pi}}{%
\operatorname{EC}^{#2}%
}

\NewDocumentCommand{\MECEMD}{O{\alpha,\beta}}{\operatorname{MEC}}

\def\EMD{\operatorname{EMD}}

\NewDocumentCommand{\set}{s m}{\IfBooleanTF{#1}{{\{#2\}}}{{\ensuremath{\left\{#2\right\}}}}}
\NewDocumentCommand{\lr}{s m}{\IfBooleanTF{#1}{{(#2)}}{{\ensuremath{\left(#2\right)}}}}
\NewDocumentCommand{\eqdef}{}{\stackrel{{\scriptscriptstyle \mathrm{def}}}{=}}
\NewDocumentCommand{\argmin}{}{\operatorname*{arg\,min}}
\NewDocumentCommand{\bpi}{}{{\boldsymbol{\pi}}}
\NewDocumentCommand{\bx}{}{\boldsymbol{x}}
\NewDocumentCommand{\midrange}{}{{\sigma_\oo}}
\NewDocumentCommand{\lrfrac}{m m}{\lr{\frac{#1}{#2}}}

\newcommand{\doubleCheckmark}[2][1]{%
  \scalebox{#1}{\checkmark\hspace{-#2}\textcolor{white}{\checkmark}\hspace{-#2}\checkmark}%
}
\def\checkcheck{{\doubleCheckmark{1em}}}

\newcommand{\Var}{\mathrm{Var}}

\newcommand{\Disc}{\Delta
}

\newcommand{\proofpart}[1]{\par\medskip\noindent\textbf{#1.}\ }

\maketitle

\section{Introduction}

Polarization is a central feature of contemporary public discourse.
It appears in electoral competition, climate-policy debates, and
attitudes toward institutions, where populations are often described
as divided into mutually antagonistic camps. To study such phenomena,
one first needs a measure. Claims about the causes of polarization,
its persistence, and its possible reduction depend on what the measure
is taken to express. The problem is therefore not merely to assign a
number to a distribution of opinions, but to say what that number
means.

Several answers have been proposed. The Esteban--Ray measure
(\ERLabel)~\cite{esteban1994measurement} defines polarization
through a pairwise sum combining intra-group identification and
inter-group alienation. The Earth Mover's Distance
(\EMDLabel), from optimal-transport
theory~\cite{geher2020isometric,vallender1974calculation}, measures
the minimum cost of transforming one distribution into another.
The Tastle--Wierman Dissention measure
(\TWDLabel)~\cite{tastle2007consensus} adapts
information-theoretic ideas to ordinal scales. The Van der Eijk index
(\VdELabel)~\cite{van2001measuring} treats polarization as the
complement of agreement under a unimodality criterion. The
Koudenburg--Kiers--Kashima Index
(\KILabel)~\cite{koudenburg2021new} is an empirical
regression-based measure fitted to expert judgments of polarization.
These measures
differ in motivation and form, but they share a common feature:
polarization is computed directly from the distribution, either as a
pairwise aggregate, an ordinal dispersion index, or a transport cost.
They do not, however, identify a point at which consensus could be
reached with minimum effort, nor do they provide the corresponding
optimization structure.

We propose such a structure. Our starting point is the following
interpretation: polarization is resistance to consensus. A population
is highly polarized when much effort is needed to bring its members to
a common position; it is weakly polarized when little effort is needed.
This idea leads to an optimization problem. Given an opinion
distribution, we define its polarization as the minimum effort required
to transform it into a consensus distribution. We call this quantity
the \emph{Minimum Effort to Consensus} (MEC). The minimum is taken over
all possible consensus points. A minimizer, denoted \(y^*\), is
therefore not an external parameter, but an output of the measure: it
is the consensus point that is least costly to reach.

The definition is connected to transportation theory. In the basic
case, where effort is mass times distance, MEC is the
\(1\)-Wasserstein distance from \(\pi\) to its nearest consensus
distribution, equivalently the Earth Mover's Distance
between \(\pi\) and the least costly consensus configuration. More
generally, for parameters \(\alpha,\beta\geq 1\), MEC is the minimum,
over consensus points, of a weighted \(L^\beta\) cost in which the
weights are the \(\alpha\)-power of the population masses. This
parametric form includes two familiar dispersion measures as special
cases: mean absolute deviation about a median when
\((\alpha,\beta)=(1,1)\), and variance-like dispersion when
\((\alpha,\beta)=(1,2)\). The parameters also have a natural social
interpretation. The exponent \(\alpha\) controls the effect of group
size and corresponds to the identification dimension emphasized by
Social Identity Theory~\cite{TajfelTurner1979}; the exponent \(\beta\)
controls the cost of large opinion shifts and corresponds to the
alienation dimension, related to the cognitive cost of substantial
belief revision~\cite{Festinger1957}.

The contributions of the paper are as follows.

\paragraph{Conceptual foundation.}
We define polarization as minimum effort to consensus. Formally, this
amounts to measuring the distance from an opinion distribution to the
set of consensus distributions. The definition yields both a scalar
polarization value and an optimal consensus point. It also recovers
classical dispersion measures in special cases, while allowing two
parameters corresponding to identification and alienation.

\paragraph{Theoretical results.}
We establish a geometric characterization of MEC in the transport
case. In particular, Theorem~\ref{th:mec-as-emd} shows that MEC can be
expressed as \(\Mass{\pi}/2\) minus the Earth Mover's Distance from
\(\pi\) to the extremal distribution. Thus, in this case,
polarization is proximity to maximum disagreement. The optimization
formulation also yields structural information beyond the scalar
value. The optimal point \(y^*\) identifies the consensus position that
minimizes total effort. We prove a Minority Principle
(Theorem~\ref{thm:minority-principle-general}), according to which, in
a two-group distribution, the smaller group bears the greater share of
the effort to consensus. We also introduce a Tipping Point method
(Section~\ref{sec:tipping-point}) that gives, in closed form, the
critical mass at which moving opinions toward an extreme begins to
increase rather than decrease polarization. Together, these results
reveal that polarization is not monotone in extremism, nor even in
movement away from the optimal consensus point. Displacing the whole
of a group away from the optimal consensus point raises polarization,
but transferring only a portion of the group can lower it: the detached
fragment may be too small to exert influence at its new position,
while the original group is weakened. The Tipping Point method
makes this dependence explicit and computable. Finally,
we show that MEC
satisfies the three axioms of Esteban and Ray~\cite{esteban1994measurement},
as well as a central-split monotonicity property, under natural
conditions on the parameters.

\paragraph{Empirical and computational results.}
We compare MEC with the expert-judgment benchmark of Koudenburg,
Kiers, and Kashima~\cite{koudenburg2021new}, in which sixty researchers in
polarization studies rated fifteen hypothetical opinion distributions.
The parametrization \(\MEC[2,1.15]\) obtains Kendall's
\(\tau\approx 0.89\) against the expert ranking. This matches the
strongest parametrization of \ERLabel{} and outperforms
the dispersion-based alternatives \VdELabel{} and
\TWDLabel. MEC is also computationally
tractable. For support size \(n\), the convex objective can be
minimized by bisection in \(O(n\log(1/\varepsilon))\) time to precision
\(\varepsilon\). This is asymptotically faster than the \(O(n^2)\)
cost of pairwise measures. The difference is small on short ordinal
scales, but becomes relevant for fine-grained opinion spaces and
large-scale corpus analyses such as those in
Section~\ref{sec:experiments}.

MEC and the Esteban--Ray measure are conceptually different.
\ERLabel{} measures pairwise antagonism between groups. MEC measures
the resistance of the whole distribution to being brought into
agreement. The two measures coincide in spirit on canonical cases,
such as the symmetric two-extreme distribution, but MEC adds an
optimization structure and an endogenous consensus point while
remaining compatible with the axiomatic perspective introduced by
Esteban and Ray.

The remainder of the paper is organized as follows.
Section~\ref{sec:preliminaries} fixes notation and defines opinion
distributions and polarization measures. Section~\ref{sec:mec}
introduces MEC, motivates the parameters \(\alpha\) and \(\beta\), and
discusses computation. Section~\ref{sec:structural} proves structural
properties, including convexity, the \(L^\beta\)-dispersion
characterization, symmetry, and population scaling.
Section~\ref{sec:theoretical-applications} presents the Minority
Principle and the Tipping Point method.
Section~\ref{sec:shifting-away} proves that MEC is a polarization
measure in the standard axiomatic sense.
Section~\ref{sec:er-comparison} compares MEC with the Esteban--Ray
measure and proves axiomatic compatibility.
Section~\ref{sec:experiments} reports the empirical validation.
Section~\ref{sec:conclusion} concludes and situates the contribution
in related work.
Proofs are collected in the appendices.

\section{Preliminaries}\label{sec:preliminaries}

We will define a family of polarization measures for populations of individuals (agents) with opinions on some underlying proposition (e.g., \emph{climate change is a hoax}). These opinions are represented as numbers in $[0,1]$, where a value of $0$ (a value of $1$) indicates complete disagreement (complete agreement) with the proposition, and the higher the value, the stronger the agreement.

We shall use mass \emph{distributions} with finite support to represent how agents are grouped according to their opinions. More precisely,

\begin{definition}[Opinion Distribution]  
An \emph{opinion distribution} is a function $\pi : [0,1] \to \PositiveReals$ with finite, non-empty support $\support(\pi)=\{x \in [0,1] \mid \pi(x) > 0\}$. For each $x \in \support(\pi)$, the value $\pi(x)$ is the \emph{weight} (or \emph{mass}) of the group of agents holding opinion $x$. The (total) \emph{mass} (or \emph{total weight}) of $\pi$ is $\Mass{\pi} \eqdef \sum_{x \in \support(\pi)} \pi(x)$. We write $\Dset$ for the set of all opinion distributions, and $\Dset_k$ for those with mass $k$.  
\end{definition}

We shall often use $(\bvec{\pi},\bvec{x})=((\pi_1,\ldots,\pi_n),(x_1,\ldots,x_n)) \in \Reals_{\geq 0}^n \times [0,1]^n$, with $x_1 < \cdots < x_n$ and $\sum_{i=1}^n \pi_i > 0$, to denote the unique opinion distribution $\pi$ satisfying $\pi(x_i)=\pi_i$ for each $i$ and $\Mass{\pi}=\sum_{i=1}^n \pi_i$.  

Notice that opinion distributions do not need to be normalized (a probabilistic distribution). Thus, we define the following. 

\begin{definition}[Power and Normalized Distributions]
Let $\pi$ be an opinion distribution. The $\alpha$-power opinion distribution of $\pi$ is given by  $\pi^\alpha(x) \defsymbol \pi(x)^\alpha$ for every $x\in[0,1]$ where $\alpha \in \PositiveReals.$ 
We shall use $\overline{\pi}(x)\defsymbol \frac{\pi(x)}{\Mass{\pi}}$ for every $x \in [0,1]$ for the normalized opinion distribution obtained from $\pi$. 
\end{definition}

It will also be convenient to recall the notion of median for an opinion distribution. A real number $m\in[0,1]$ is a \emph{median} of $\pi\in\Dset$ if $\sum_{x\le m}\pi(x)\ge \Mass{\pi}/2$ and $\sum_{x\ge m}\pi(x)\ge \Mass{\pi}/2$. Medians need not be unique; the set of medians of $\pi$ is a closed subinterval of $[0,1]$.

Let us now introduce some representative families of one-point and two-point distributions for polarization. An extremal distribution is a two-point distribution placing half of its total mass at the minimum opinion value (0) and the other half at the maximum opinion value (1). More precisely, 

\begin{definition}[Consensus and Extremal Distributions]  
An opinion distribution $\pi$ is \emph{(two-point) extremal} if $\pi(0)=\pi(1)=\Mass{\pi}/2$. For any $\pi$, we write $\pi^*$ for the extremal distribution with the same mass, i.e.\ $\Mass{\pi^*}=\Mass{\pi}$. An opinion distribution $\pi$ is a \emph{consensus (distribution) at $y \in [0,1]$} if $\support(\pi)=\{y\}$, and a \emph{consensus} if this holds for some $y \in [0,1]$. If $\pi \in \Dset$, we write $\pi_y$ for the consensus distribution at $y$ with mass $\Mass{\pi_y}= \Mass{\pi}$.
\end{definition}

We can now introduce the notion of polarization measure.

\begin{definition}[Polarization Measure]\label{pol:def}  
A \emph{polarization measure} is a function $\Pol:\Dset \to \Reals_{\geq 0}$ such that:  
(1) if $\pi \in \Dset$ is a consensus then $\Pol(\pi)=0$; and 
(2) if $\pi \in \Dset_k$ is extremal, then $\Pol(\pi) \geq \Pol(\pi')$ for all $\pi' \in \Dset_k$.  
\end{definition}

The two basic conditions in the definition above are intuitive and standard properties in the literature \cite{esteban1994measurement} for polarization measures: if there is consensus, there is no polarization, and extremal distributions maximize polarization among distributions with the same total mass.

\section{The MEC Measure}\label{sec:mec}

In this section we introduce our notion of polarization and establish
some of its properties. For clarity of exposition, we begin with a
simpler formulation, which will later be generalized to obtain the
measure presented in this paper.

As argued in the Introduction, the greater the polarization of a
population, the more difficult it is to achieve consensus. It is
therefore natural to define polarization in terms of the \emph{minimal
effort} required to reach consensus. Inspired by transportation
theory, we can measure the effort required to transform an opinion
distribution $\pi$ into a consensus at $y$ by
\begin{equation}\label{eq:emd}   
\ECEMD(y)\defsymbol\sum_{x \in \support(\pi)}\pi(x) |x - y|.
\end{equation}
Thus, the minimal effort to a consensus (MEC) of $\pi$ would be:  
\begin{equation}\label{eq:mecemd}     
\MECEMD(\pi) \defsymbol  \min_{y\in[0,1]} {\ECEMD}(y)=\min_{y\in[0,1]}\sum_{x \in \support(\pi)}\pi(x) |x - y|.
\end{equation}

The quantity $\ECEMD(y)$ is precisely the \emph{Earth Mover's Distance} (\EMDLabel) between
$\pi$ and the consensus distribution $\pi_y$; that is,
\[
\ECEMD(y) = \EMD(\pi,\pi_y).
\]
\todo{FV: Juan Camilo please double check this} A formal definition of \EMDLabel{} on opinion distributions, together with a proof of this identity, is given in Appendix~\ref{appendix:emd-definition} (Lemma~\ref{lem:ecemd-as-emd}). Intuitively, \EMDLabel{} represents the least amount of work required
to transform one distribution into another. By Cor.~\ref{cor:mec-special-cases} in Sec.~\ref{section:special-cases}, this work is minimal when 
$y$ is a median $m$ of $\pi$. Hence,
\begin{equation}
\MECEMD(\pi) = \EMD(\pi,\pi_m).\label{eq:mec-emd-pi-m}
\end{equation}

For example, consider the distribution 
$\pi=((20,30,50),(0,0.8,1))$. 
Reaching consensus at the median $y=0.8$ requires an effort of $
\ECEMD(0.8)=26,
$  
and this value is minimal among the efforts required to reach a consensus. 
Hence,
$
\MECEMD(\pi)=\ECEMD(0.8)=26
= \EMD(\pi,\pi_{0.8}).
$

Now consider the extremal distribution 
$\pi^*=((50,50),(0,1))$. 
In this case,
$
\MECEMD(\pi^*)=50
$. In general, for any $\pi$, \[ \MECEMD(\pi^*)=\frac{\Mass{\pi}}{2}\] 
can be shown to be the maximal possible value of $\MECEMD$ among 
distributions of total mass $\Mass{\pi}$, i.e., $\pi^*$ is the most polarized distribution of this mass. Furthermore, it is easy to see that if $\pi$ is a consensus distribution then $\MECEMD(\pi)=0$. Thus, according to Def.~\ref{pol:def}, $\MECEMD(.)$ is indeed a polarization measure.

From the above observations and properties of \EMDLabel, we can derive the following elegant characterization for polarization in terms of the distance to the extremal distribution.

\begin{restatable}[MEC as \EMDLabel\ distance to $\pi^*$]{theorem}{mecAsEmd}\label{th:mec-as-emd}
Let $\pi$ be an opinion distribution. Then  \begin{equation}
\MECEMD(\pi)=\frac{\Mass{\pi}}{2} - \EMD(\pi,\pi^*).    
\label{eq:mec-emd-1}
\end{equation}
\end{restatable}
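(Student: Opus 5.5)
We need $\MECEMD(\pi)=\min_y\sum\pi(x)|x-y|$, and we know $\MECEMD(\pi)=\sum_x\pi(x)|x-m|$ for a median $m$, by \eqref{eq:mec-emd-pi-m}. The plan is to compute $\EMD(\pi,\pi^*)$ in closed form on the line and compare the two expressions directly.

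\textbf{Step 1: a closed form for $\EMD(\pi,\pi^*)$.} On $[0,1]$ with equal total masses, $\EMD$ equals the $L^1$ distance between cumulative distribution functions:
\[
\EMD(\pi,\pi^*)=\int_0^1 |F_\pi(t)-F_{\pi^*}(t)|\,dt ,
\qquad F_\pi(t)=\sum_{x\le t}\pi(x).
\]
This is the classical Vallender identity, which I will take from Appendix~\ref{appendix:emd-definition} or the cited work. On $[0,1)$ we have $F_{\pi^*}(t)=\Mass{\pi}/2$, so
\[
\EMD(\pi,\pi^*)=\int_0^1 \left|F_\pi(t)-\tfrac{\Mass{\pi}}{2}\right|dt .
\]

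\textbf{Step 2: the same kind of formula for $\MECEMD$.} The consensus $\pi_y$ has CDF equal to $0$ for $t<y$ and $\Mass{\pi}$ for $t\ge y$. Applying the same identity to $\EMD(\pi,\pi_y)$, or directly the layer-cake identity $|x-y|=\int_0^1 |\mathbf 1[x\le t]-\mathbf 1[y\le t]|\,dt$, gives
\[
\ECEMD(y)=\int_0^y F_\pi(t)\,dt+\int_y^1 \bigl(\Mass{\pi}-F_\pi(t)\bigr)\,dt .
\]

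\textbf{Step 3: the pointwise identity.} Take $y=m$ a median, and write $F$ for $F_\pi$ and $M$ for $\Mass{\pi}$. Up to a measure-zero set of $t$ (jump points):
\begin{itemize}
\item for $t<m$ we have $F(t)\le M/2$, so $F(t)=M/2-|F(t)-M/2|$;
\item for $t\ge m$ we have $F(t)\ge M/2$, so $M-F(t)=M/2-|F(t)-M/2|$.
\end{itemize}
Hence the integrand of Step 2 equals $M/2-|F(t)-M/2|$ on all of $[0,1]$. Integrating over an interval of length $1$ yields
\[
\MECEMD(\pi)=\frac{M}{2}-\EMD(\pi,\pi^*),
\]
which is \eqref{eq:mec-emd-1}.

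\textbf{Main obstacle.} The delicate step is the median inequalities in Step 3. The definition of median gives $\sum_{x\le m}\pi(x)\ge M/2$, so $F(t)\ge M/2$ for $t\ge m$. It also gives $\sum_{x\ge m}\pi(x)\ge M/2$, so $F(t)\le M/2$ for every $t<m$. Both inequalities follow from the definition, but I will check the boundary conventions of the CDF, namely right-continuity and the fact that $F_{\pi^*}(1)=M$ is only a single point.

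A further point is whether the paper's appendix actually provides the CDF form of $\EMD$. If it does not, I will instead exhibit an explicit transport plan for $\EMD(\pi,\pi^*)$: send the lowest half of the mass to $0$ and the upper half to $1$, splitting the median atom as needed. I will then prove its optimality by a Kantorovich dual potential. The potential $\phi(x)=|x-m|$-type, or rather $\phi(x)=x$ on one side and $1-x$ on the other, should certify optimality and gives the same formula.
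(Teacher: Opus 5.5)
Your argument is correct and is essentially the paper's proof written with CDFs instead of quantile functions. The paper recasts the claim as the triangle equality $\EMD(\pi^*,\pi_m)=\EMD(\pi^*,\pi)+\EMD(\pi,\pi_m)$ and proves it by sandwiching $Q_\pi$ between $Q_{\pi^*}$ and $Q_{\pi_m}$. Your Step~3 identity is exactly the same sandwich in CDF form: $F_{\pi_m}\le F_\pi\le F_{\pi^*}$ on $[0,m)$ and $F_{\pi^*}\le F_\pi\le F_{\pi_m}$ on $[m,1)$. This holds for every $t$, so you need no measure-zero exceptions.

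The CDF formula is not in the appendix, but it is the content of the cited Vallender result, and the paper itself relies on the equivalent quantile form from the same sources; so the fallback is unnecessary. If you do use it, the certificate should be $\phi(x)=|x-m|$, which gives the lower bound $\Mass{\pi}/2-\ECEMD(m)$ that your split plan attains. The function equal to $x$ on one side of $m$ and $1-x$ on the other does not work, since it jumps at $m$ unless $m=1/2$.
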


The above proposition makes precise the intuition that proximity to the 
extremal distribution corresponds to greater polarization.

\subsection{Group Identity ($\alpha$) and Alienation ($\beta$)}
As argued in \cite{esteban1994measurement}, polarization increases with the degree of homogeneity within groups and with the degree of heterogeneity across groups. 
Greater within-group homogeneity reflects stronger \emph{group identity} or cohesion, while greater between-group heterogeneity captures increased \emph{alienation} across groups. 
If these two dimensions are to be adequately represented, the measure $\MECEMD(\cdot)$ in Eq.~\ref{eq:emd} requires reconsideration.

\emph{Alienation.}
Observe first that opinion distance in Eq.~\ref{eq:emd} enters linearly through the term $|x-y|$, a feature it shares with the classical formulation of Esteban and Ray \cite{esteban1994measurement}. 
Under this specification, a movement from $0$ to $1$ is exactly twice as costly as a movement from $0$ to $0.5$. 
Opinion change is thus treated as proportional to ideological distance, much as the transportation of mass across space.

There is, however, reason to doubt that ideological displacement proceeds in so uniform a manner. 
According to Social Identity Theory, individuals derive part of their self-conception from membership in social groups, and opinions frequently serve as outward expressions of such affiliation \cite{TajfelTurner1979}.
A modest adjustment within a shared ideological region may leave group attachment undisturbed. 
A more substantial shift, by contrast, may be interpreted as a departure from one's group or an alignment with a rival one, thereby engaging considerations of identity. 
In addition, significant revisions of belief may give rise to cognitive dissonance which individuals are inclined to avoid \cite{Festinger1957}.
It is therefore plausible that large ideological displacements entail costs that increase more than proportionally with distance.

To allow for this possibility, we introduce a parameter $\beta \geq 1$ and replace the term $|x-y|$ in Eq.~\ref{eq:emd} by $|x-y|^{\beta}$. 
When $\beta=1$, the proportional structure is preserved. 
When $\beta>1$, larger displacements receive disproportionately greater weight. 
The parameter $\beta$ may thus be understood as regulating alienation: it determines the rate at which the cost of opinion change increases with ideological distance.

\begin{example} Consider the two-point distribution
\(
\pi = ((49,52),(0,1)).
\)
Assume $\beta = 1$.  Since the median of $\pi$ is $m=1$, the minimal effort to consensus is obtained by moving the entire minority mass from $0$ to $1$, yielding a cost of $49$.
Though mathematically correct, this conclusion may appear sociologically doubtful: a minority comprising $49\%$ of the population would not readily be expected to adopt the opposite extreme, but rather to sustain some intermediate position.

If instead we set $\beta=2$, the objective becomes strictly convex and is minimized at the weighted mean (see Cor.~\ref{cor:mec-special-cases}), here $y^*=\frac{2704}{5105}\approx 0.5297.$
The optimal consensus thus lies closer to the midpoint rather than at an extreme.
\qed
\end{example}

\emph{Group Identity.} To address group identity in polarization, we also introduce a parameter $\alpha \geq 1$  and replace the mass $\pi(x)$ by $\pi(x)^\alpha$ in Eq.~\ref{eq:emd} to modulate the influence of group size. When $\alpha = 1$, a group's contribution is proportional to its mass.
When $\alpha > 1$, larger groups receive more-than-proportional weight, while smaller groups are relatively downweighted.
Thus increasing $\alpha$ amplifies the influence of cohesive majorities and attenuates the impact of negligible groups, mirroring the identification principle emphasized by \cite{esteban1994measurement}.

To illustrate the role of group identity, we write \(\MEC(\cdot)\) for the variant of \(\MECEMD(\cdot)\) in Eq.~\ref{eq:mecemd} obtained by replacing \(\pi(x)\) with \(\pi(x)^\alpha\) and \(|x-y|\) with \(|x-y|^\beta\). A precise definition of \(\MEC(\cdot)\) will be given in the next section.

\begin{example}\label{ex:group-identity}
Assume $\beta=2$ and consider
\[
\pi=((40,60),(0.25,0.75)),
\qquad
\pi'=((40,59,1),(0.25,0.75,1)).
\]
The distribution $\pi$ represents two moderate groups of unequal size. 
The distribution $\pi'$ may be viewed as arising from $\pi$ through a slight radicalization: a small minority moves toward the extreme right, while the majority remains moderate.

If $\alpha=1$, as shown below in Cor.~\ref{cor:mec-special-cases}, the optimal consensus is attained at the mean of $\pi$
($y=0.55$ for $\pi$ and $y=0.5525$ for $\pi'$). 
Because of the small extreme mass, however, we obtain
\[
\MEC[1,2](\pi)=6.0 \;<\; 6.161875 \;=\; \MEC[1,2](\pi'),
\]
so that $\pi'$ appears more polarized.

If instead $\alpha=2$, the ordering reverses. 
In this case, from Cor.~\ref{cor:mec-special-cases}, the optimal consensus points are at the mean of $\overline{\pi^\alpha}$ and $\overline{\pi'^\alpha}$, respectively:
\[
y=\frac{31}{52}\approx 0.59615 \quad \text{for }\pi,
\qquad
y'=\frac{1721}{2904}\approx 0.59263 \quad \text{for }\pi',
\]
and we obtain
\[
\MEC[2,2](\pi)=\frac{3600}{13}\approx 276.923 \;>\; 274.207 \approx \frac{3185183}{11616} \;=\; \MEC[2,2](\pi').
\]
In this case, the larger cohesive groups dominate the polarization structure, and the small extremist minority exerts comparatively little influence.
\qed
\end{example}

The above example shows that, for suitable values of the identification
and alienation parameters (here $\alpha=\beta=2$), transferring a small
portion of a group toward the extremes does not necessarily increase
polarization. This accords with the thesis of Esteban and Ray
\cite{esteban1994measurement} that small groups contribute little to
overall polarization. Indeed, such a shift may diminish the effective
weight of the original bloc, and thereby reduce the overall level of
polarization.

\subsection{Generalized MEC}
The discussion in the previous section leads us to the following parametric definition of MEC.

\begin{definition}[Generalized MEC]\label{mec:def} Let $\alpha,\beta \in [1,\infty)$, $y \in [0,1]$ and let $\pi \in \Dset$ be an opinion distribution. 
The $(\alpha,\beta)$-\emph{effort to a consensus at $y$} for $\pi$ is determined by the function   $\EC:[0,1]\to \PositiveReals$ given by \begin{equation}    
\EC(y)\defsymbol\sum_{x \in \support(\pi)}\pi(x)^\alpha |x - y|^\beta.
\end{equation}
The \emph{minimal  $(\alpha,\beta)$-effort to consensus} for $\pi$ is determined by the function $\MEC: \Dset\to\Reals_{\geq 0}$ given by \begin{equation}    
\MEC(\pi) \defsymbol  \min_{y\in[0,1]} {\EC}(y).
\end{equation}
 \end{definition}
Observe that \(\MEC(\pi)\) is well defined: Since the effort function \(\EC\) is continuous on the compact interval \([0,1]\), it must attain a minimum by the extreme value theorem. The points at which \(\EC\) attains its minimum will be called the $(\alpha,\beta)$-\emph{optimal consensus points} of $\pi$.  It follows from the strict convexity result in Th.~\ref{convexity} in the next section that, when \(\beta>1\), there is but one $(\alpha,\beta)$-optimal consensus point. We shall omit the prefix $(\alpha,\beta)$- when it is understood from the context. 

\begin{remark}
In what follows, we shall often use the term MEC to denote the general quantity $\MEC$ introduced in Definition~\ref{mec:def}, rather than the particular instance $\MEC[1,1]$ given in Eq.~\ref{eq:mecemd}, which was presented for motivational purposes.
\end{remark}

\section{Structural Properties of MEC}\label{sec:structural}

In this section we state some foundational properties of MEC such as its convexity,  $L^p$-dispersion characterization, symmetry and scaling.

\subsection{Convexity}
We now state a useful feature of our effort functions: They are \emph{convex}.  From a theoretical perspective, convexity provides a solid theoretical framework. As shown here, it allows us to use well-established results from convex analysis (e.g., \cite{Rockafellar1970}) to further analyze the behavior of and to derive additional properties of MEC. 

\begin{restatable}[Convexity of Effort]{theorem}{convexityThm}\label{convexity}
The effort function 
$\EC:[0,1] \to \PositiveReals$ 
in Def.~\ref{mec:def} is convex on $[0,1]$. Furthermore, if $\beta>1$, then $\EC$ is also strictly convex on $[0,1]$.
\end{restatable}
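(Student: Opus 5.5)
The plan is to reduce the claim to the convexity of a single term and then use closure of convexity under nonnegative linear combinations. For each $x\in\support(\pi)$ let $f_x(y)\defsymbol |x-y|^\beta$ on $[0,1]$. Then
\[
\EC(y)=\sum_{x\in\support(\pi)} c_x\, f_x(y),\qquad c_x\defsymbol\pi(x)^\alpha>0 .
\]
Since $\support(\pi)$ is finite and non-empty, a positive combination of convex functions is convex. Likewise, a positive combination in which at least one term is strictly convex and the others are convex is strictly convex. So it suffices to show that each $f_x$ is convex, and strictly convex when $\beta>1$.

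\textbf{Convexity of $f_x$.} Write $f_x=\phi\circ g_x$ with $g_x(y)=|y-x|$ and $\phi(t)=t^\beta$ on $[0,\infty)$. The function $g_x$ is convex by the triangle inequality. The function $\phi$ is convex and nondecreasing on $[0,\infty)$ because $\beta\ge 1$ (it is linear for $\beta=1$, and $\phi''(t)=\beta(\beta-1)t^{\beta-2}\ge 0$ for $t>0$). A nondecreasing convex function composed with a convex function is convex. Explicitly, for $\lambda\in(0,1)$,
\[
\phi(g_x(\lambda y+(1-\lambda)z))\le\phi(\lambda g_x(y)+(1-\lambda)g_x(z))\le\lambda\phi(g_x(y))+(1-\lambda)\phi(g_x(z)).
\]
This gives the first claim of the theorem.

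\textbf{Strict convexity for $\beta>1$.} This is the step needing care, since $g_x$ is not strictly convex: it is affine on each side of $x$. Take $y\neq z$ and $\lambda\in(0,1)$, and split into two cases.

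\emph{Case 1: $y$ and $z$ lie on the same closed side of $x$.} Then $g_x$ is affine on the segment, so the first inequality above is an equality. The values $g_x(y)$ and $g_x(z)$ are distinct nonnegative numbers, because $g_x$ is injective on each side of $x$. Since $\phi$ is strictly convex on $[0,\infty)$ for $\beta>1$, the second inequality is strict.

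\emph{Case 2: $y<x<z$ (or the symmetric situation).} Then $|\lambda y+(1-\lambda)z-x|<\lambda|y-x|+(1-\lambda)|z-x|$ strictly, because the two differences have opposite signs and are both nonzero. Since $\phi$ is strictly increasing on $[0,\infty)$, the first inequality is strict. The second inequality holds as before, so the overall inequality is strict.

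Hence every $f_x$ is strictly convex when $\beta>1$. Combined with $c_x>0$ and the non-emptiness of the support, this makes $\EC$ strictly convex on $[0,1]$.
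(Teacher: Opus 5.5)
Your proof is correct and follows the same route as the paper: write the effort function as a finite sum, with strictly positive coefficients $\pi(x)^\alpha$, of the terms $|x-y|^\beta$, and use that convexity and strict convexity are preserved under such sums. The only difference is that you prove the convexity (and, for $\beta>1$, strict convexity) of $|x-y|^\beta$ yourself, via the composition $\phi\circ g_x$ and the case split on the position of $x$, whereas the paper cites this as a standard fact from convex analysis.
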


Convexity is also advantageous from a computational perspective. For a convex function, every local minimum is a global minimum. Thus, in computing the minimum effort to consensus (MEC), it suffices to find any local minimum of $\EC(y)$ on $[0,1]$. When $\beta>1$, the function $\EC(y)$ is continuously differentiable and strictly convex. It therefore admits a unique optimal consensus point characterized by the first-order condition $d/dy\EC(y)=0$. Standard root-finding procedures, such as the bisection method applied to $d/dy\EC(y)$, locate the optimal consensus point within $\varepsilon$ accuracy in $O(\log(1/\varepsilon))$ iterations.

Nevertheless, for certain families of opinion distributions and values
of $\alpha,\beta$, the value of $\MEC$ can be computed directly, either
through a closed formula, through standard measures, or by exploiting
symmetries of the distribution. In the next section we explore these special cases.  

\subsection{Characterization and Special Cases for MEC}\label{section:special-cases}

Recall that $L^p$-dispersion of a probability distribution (with finite support) $\rho$ is given by

\begin{equation}
D_p(\rho)\defsymbol \min_{y\in[0,1]} \sum_{x\in\support(\rho)} \rho(x)\,|x-y|^p.
\end{equation}

The $L^p$-dispersion is a way of measuring how spread out a probabilistic distribution is using an $L^p$ type of distance. MEC can be characterized in terms of this measure as follows. 

\begin{restatable}[$L^\beta$-dispersion Characterization]{theorem}{mecLbeta}\label{thm:mec-Lbeta}
Let $\alpha \ge 1$ and $\beta \ge 1$, and let $\pi\in\Dset$ be an opinion distribution. Then
\(
\MEC[\alpha,\beta](\pi) = \Mass{\pi^\alpha} \cdot D_\beta(\overline {\pi^\alpha}).
\)
\end{restatable}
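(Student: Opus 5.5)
The identity follows from a direct rewriting of the effort function, by factoring out the total weight \(\Mass{\pi^\alpha}\). The plan is to show that, for every fixed \(y\in[0,1]\), the effort \(\EC(y)\) equals \(\Mass{\pi^\alpha}\) times the dispersion objective of \(\overline{\pi^\alpha}\) at \(y\). Taking minima over \(y\) on both sides then gives the claim.

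\emph{Step 1: supports and normalization.} Since \(\alpha\ge 1>0\), we have \(\pi(x)^\alpha>0\) iff \(\pi(x)>0\). Hence \(\support(\pi^\alpha)=\support(\pi)\), which is finite and non-empty. It follows that \(\Mass{\pi^\alpha}=\sum_{x\in\support(\pi)}\pi(x)^\alpha>0\). So \(\overline{\pi^\alpha}(x)=\pi(x)^\alpha/\Mass{\pi^\alpha}\) is a well-defined probability distribution with finite support \(\support(\pi)\), and \(D_\beta(\overline{\pi^\alpha})\) makes sense.

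\emph{Step 2: pointwise identity.} For each \(y\in[0,1]\), substitute \(\pi(x)^\alpha=\Mass{\pi^\alpha}\,\overline{\pi^\alpha}(x)\) into the definition of \(\EC\) from Def.~\ref{mec:def}:
\[
\EC(y)=\sum_{x\in\support(\pi)}\pi(x)^\alpha|x-y|^\beta=\Mass{\pi^\alpha}\sum_{x\in\support(\overline{\pi^\alpha})}\overline{\pi^\alpha}(x)\,|x-y|^\beta .
\]

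\emph{Step 3: passing to the minimum.} The constant \(\Mass{\pi^\alpha}\) is positive and independent of \(y\). Minimizing over \(y\in[0,1]\) therefore commutes with multiplication by it, since \(\min_y cf(y)=c\min_y f(y)\) for \(c>0\). Both minima exist: the objective is continuous on the compact interval \([0,1]\), as already noted after Def.~\ref{mec:def}. The left side is \(\MEC(\pi)\) and the right side is \(\Mass{\pi^\alpha}D_\beta(\overline{\pi^\alpha})\). As a byproduct, the optimal consensus points of \(\pi\) coincide with the \(L^\beta\)-dispersion minimizers of \(\overline{\pi^\alpha}\).

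There is no real obstacle here. The only points needing care are in Step 1: \(\Mass{\pi^\alpha}\) must be strictly positive, so that the normalization is legitimate and the scaling by a positive constant preserves minimizers, and \(\pi\) and \(\pi^\alpha\) must have the same support, so that both sums run over the same index set.
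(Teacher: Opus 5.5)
Your proposal is correct and follows the paper's proof: factor the total weight $\Sigma\pi^\alpha$ out of the effort function pointwise, then take the minimum over $y\in[0,1]$. Your Step~1, which checks that the supports coincide and that the normalizing mass is strictly positive, spells out details the paper leaves implicit.
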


Thus, $\MEC$ equals the total transformed mass $\Mass{\pi^\alpha}$ times the $L^\beta$-dispersion of the normalized distribution $\overline{\pi^\alpha}$.

Two familiar cases arise as a corollary from the above theorem. When $\beta=1$, the effort function
is minimized at a median, and the resulting dispersion is the mean absolute
deviation from that median. When $\beta=2$, the minimizing point is the mean,
and the dispersion coincides with the variance. Thus, in these classical
settings, the minimal effort to consensus is simply the dispersion of
$\overline{\pi^\alpha}$, measured either by mean absolute deviation or by
variance, scaled by the total mass $\Mass{\pi^\alpha}$.

First recall that 
the \emph{mean absolute deviation about a median} of a finitely
supported probability distribution $\rho$ is defined as
\(
\mathrm{MAD}_{\mathrm{m}}(\rho)\defsymbol \sum_{x\in\support(\rho)} \rho(x)\,|x-m|,
\)
where $m$ is any median of $\rho$.
Also, the \emph{variance} of $\rho$ is given by
\(
\mathrm{Var}(\rho)=\sum_{x\in\support(\rho)} \rho(x)\,(x-\mu)^2,
\)
where $\mu$ is the mean of $\rho$.

\begin{restatable}[Median and Mean Consensus]{corollary}{specialCases}\label{cor:mec-special-cases}
Under the assumptions of Theorem~\ref{thm:mec-Lbeta}, the following hold:
\begin{enumerate}
\item 
\(
\MEC[\alpha,1](\pi)
=
\Mass{\pi^\alpha}\cdot \mathrm{MAD}_{\mathrm{m}}(\overline{\pi^\alpha})
=\EC[\alpha,1](m),
\)
where $m$ is a median of $\overline{\pi^\alpha}$.

\item 
\(
\MEC[\alpha,2](\pi)
=
\Mass{\pi^\alpha}\cdot \mathrm{Var}(\overline{\pi^\alpha})
=\EC[\alpha,2](\mu),
\)
where $\mu$ is the mean of $\overline{\pi^\alpha}$.
\end{enumerate}
\end{restatable}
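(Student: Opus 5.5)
The plan is to specialize Theorem~\ref{thm:mec-Lbeta} and then identify the minimizer of $D_\beta$ for $\beta=1$ and $\beta=2$. Write $\rho=\overline{\pi^\alpha}$ and $S=\Mass{\pi^\alpha}$. Then $\pi(x)^\alpha=S\rho(x)$ for every $x$, so
\[
\EC[\alpha,\beta](y)=S\sum_{x\in\support(\rho)}\rho(x)|x-y|^\beta
\]
for every $y\in[0,1]$. Minimizing over $y$ gives Theorem~\ref{thm:mec-Lbeta}. It also shows that the minimizers of $\EC[\alpha,\beta]$ coincide with the minimizers of $f_\beta(y)=\sum_x\rho(x)|x-y|^\beta$. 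Hence each item reduces to two facts: a median minimizes $f_1$, and the mean minimizes $f_2$ with minimum value $\mathrm{Var}(\rho)$. Once these hold, $\MEC=S\,D_\beta(\rho)$ equals both $S\cdot f_\beta(\text{minimizer})$ and $\EC(\text{minimizer})$, which is the stated chain of equalities.

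For $\beta=2$, I expand
\[
f_2(y)=\sum_x\rho(x)(x-\mu)^2+(\mu-y)^2,
\]
using $\sum_x\rho(x)=1$ and $\sum_x\rho(x)(x-\mu)=0$. The minimum is therefore attained at $y=\mu$, and its value is $\mathrm{Var}(\rho)$. The point $\mu$ lies in $[0,1]$ because it is a convex combination of support points, so it is a feasible consensus point.

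For $\beta=1$, I use convexity (Theorem~\ref{convexity}) together with one-sided derivatives. At any $y$, the right derivative of $f_1$ is $\rho(\{x\le y\})-\rho(\{x>y\})$ and the left derivative is $\rho(\{x<y\})-\rho(\{x\ge y\})$. Let $m$ be a median of $\rho$. The median conditions $\rho(\{x\le m\})\ge 1/2$ and $\rho(\{x\ge m\})\ge 1/2$ make the right derivative at $m$ nonnegative and the left derivative nonpositive. Since $f_1$ is convex, $m$ is then a global minimizer on $[0,1]$, with minimum value $\mathrm{MAD}_{\mathrm m}(\rho)$.

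The paper's definition of a median is stated for opinion distributions. Medians are invariant under positive rescaling, so the medians of $\rho$ are exactly those of $\pi^\alpha$, and all of them lie in $[0,1]$.

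I do not expect a real obstacle. The only delicate step is the subgradient argument for $\beta=1$. In particular, it must handle a median sitting on an atom, or a whole interval of medians, and it must confirm that every median, not just one, attains the minimum. The one-sided derivative inequalities above cover all of these cases.
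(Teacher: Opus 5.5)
Your proposal is correct and follows the same route as the paper. Both reduce, via Theorem~\ref{thm:mec-Lbeta}, to minimizing $\sum_x \rho(x)|x-y|^\beta$ for $\rho=\overline{\pi^\alpha}$, invoke that a median minimizes this for $\beta=1$ and the mean for $\beta=2$, and multiply back by $\Mass{\pi^\alpha}$; the only difference is that you actually prove these two classical facts (by completing the square, and by the one-sided derivative test at every median), whereas the paper simply asserts them.
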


\subsection{Symmetry}

We now consider opinion distributions whose symmetry determines their polarization.

\begin{definition}[Reflection and symmetry]
Let $\pi\in\Dset$ and let $c\in[0,1]$. We say that $c$ is an \emph{admissible reflection center} for $\pi$ if
\(
2c-x\in[0,1] \)
for all \(x\in\support(\pi).
\) 
If $c$ is an admissible reflection center for $\pi$, the
\emph{reflection of $\pi$ about $c$}, denoted $\pi^{\mathrm{ref}(c)}$,
is the opinion distribution with support
\(
\support(\pi^{\mathrm{ref}(c)})
=
\{\,2c-x \mid x\in\support(\pi)\,\}
\)
and masses
\(
\pi^{\mathrm{ref}(c)}(2c-x)\defsymbol \pi(x)
\) for all \( x\in\support(\pi).
\) 

We say that $\pi$ is \emph{symmetric about $c$} if $c$ is an admissible
reflection center for $\pi$ and
\(
\pi=\pi^{\mathrm{ref}(c)}.
\)
\end{definition}

The MEC measure is insensitive to whether masses lie to the left or to the
right of the opinion spectrum. What matters are the masses of the groups and
their distances from a candidate consensus point, not the orientation of the
configuration. Reflecting a distribution about a point $c$ simply exchanges the
positions of groups located on the left and on the right of $c$ while
preserving these distances. Consequently, the minimal effort required to reach
consensus remains unchanged.

For example, consider the two-mass distribution
\(
\pi=((40,60),(0.2,0.6)).
\)
Reflecting $\pi$ about $c=0.4$ yields
\(\
\pi^{\mathrm{ref}(c)}=((60,40),(0.2,0.6)).
\)
The two distributions only differ by reversing the positions of the groups
around $c$, hence the minimal
effort required to reach consensus is the same for both distributions. Indeed, $\MEC[2,2](\pi)=\MEC[2,2](\pi^{\mathrm{ref}(c)})\approx 177.2$.
The following lemma formalizes this invariance property. 

\begin{restatable}[Reflection preserves MEC]{lemma}{mecReflection}
\label{lem:mec-reflection}
Let $\pi\in\Dset$ and let $c$ be an admissible reflection center for $\pi$.
Then
\(
\MEC[\alpha,\beta](\pi^{\mathrm{ref}(c)})
=
\MEC[\alpha,\beta](\pi).
\)
\end{restatable}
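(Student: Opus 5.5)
The plan is to compare the two effort functions directly through the change of variable $y \mapsto 2c-y$. Write $\rho=\pi^{\mathrm{ref}(c)}$. The map $x\mapsto 2c-x$ is injective, so by definition $\support(\rho)=\{2c-x\mid x\in\support(\pi)\}$ and $\rho(2c-x)=\pi(x)$. Reindexing the sum defining $\EC[\alpha,\beta][\rho]$ over $x\in\support(\pi)$ then gives, for any $y$,
\[
\EC[\alpha,\beta][\rho](y)=\sum_{x\in\support(\pi)}\pi(x)^\alpha\,|(2c-x)-y|^\beta=\sum_{x\in\support(\pi)}\pi(x)^\alpha\,|x-(2c-y)|^\beta .
\]
Whenever $2c-y\in[0,1]$, this equals $\EC[\alpha,\beta][\pi](2c-y)$.

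The main obstacle is the domain. The minimum in Definition~\ref{mec:def} is taken over $y\in[0,1]$, but $y\mapsto 2c-y$ need not map $[0,1]$ into itself, since admissibility only guarantees $2c-x\in[0,1]$ for support points. To handle this, I would first prove a localization claim: for any distribution $\sigma$, the minimum of $\EC[\alpha,\beta][\sigma]$ over $[0,1]$ equals its minimum over the convex hull $I_\sigma=[\min\support(\sigma),\max\support(\sigma)]\subseteq[0,1]$. The reason is that if $y<\min\support(\sigma)$, replacing $y$ by $\min\support(\sigma)$ decreases every distance $|x-y|$, hence every term, and similarly on the right. Since $t\mapsto t^\beta$ is nondecreasing on $[0,\infty)$ for $\beta\ge 1$, the effort cannot increase.

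With this claim, the argument finishes as follows. The reflection maps $I_\pi$ bijectively onto $I_\rho$, because it sends $\min\support(\pi)$ to $\max\support(\rho)$ and $\max\support(\pi)$ to $\min\support(\rho)$. Both intervals lie inside $[0,1]$ by admissibility and convexity. The displayed identity therefore holds for all $y\in I_\rho$, and we get
\[
\MEC[\alpha,\beta](\rho)=\min_{y\in I_\rho}\EC[\alpha,\beta][\rho](y)=\min_{z\in I_\pi}\EC[\alpha,\beta][\pi](z)=\MEC[\alpha,\beta](\pi).
\]
As a byproduct, the optimal consensus points of $\rho$ are exactly the reflections $2c-y^*$ of those of $\pi$.
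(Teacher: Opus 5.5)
Your proof is correct and follows essentially the same route as the paper. Both arguments reindex the effort sum via $x\mapsto 2c-x$, restrict the minimization to an interval inside $[0,1]$ on which $y\mapsto 2c-y$ is a bijection, and substitute; the only difference is that the paper uses the single reflection-invariant interval $I_c=[0,1]\cap[2c-1,2c]$ containing both supports, while you use the two supports' convex hulls (which the reflection swaps) and explicitly justify the localization step that the paper only asserts.
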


Another operation of interest in the study of polarization consists in \emph{shifting} the entire distribution of opinions either to the left or to the right.   

\begin{definition}[Shifts]
Let 
\(
\pi=((\pi_1,\ldots,\pi_n),(x_1,\ldots,x_n))\in\Dset
\)
be an opinion distribution. A real number $t$ is an \emph{admissible shift} for $\pi$ if
\(
0 \le x_1+t \)
and \(
x_n+t \le 1.
\) If $t$ is admissible, the \emph{shift of $\pi$ by $t$} is the opinion distribution  
\(
\pi^{\mathrm{shift}(t)}
=
((\pi_1,\ldots,\pi_n),(x_1+t,\ldots,x_n+t)).
\) 
\end{definition}

When an opinion distribution is shifted along the spectrum, the relative
positions of the opinions remain unchanged; only their common location is
altered. For this reason, the level of polarization measured by MEC should
remain invariant under such shifts. 

Indeed, the preservation of MEC under shifts follows directly from its
invariance under reflection (Lem.~\ref{lem:mec-reflection}), owing to the identity
\(
\pi^{\mathrm{shift}(t)}
=
\left(\pi^{\mathrm{ref}(c)}\right)^{\mathrm{ref}(c+t/2)},
\)
which holds for admissible $c$ and $t$, and expresses a shift as the
composition of two reflections.

\begin{restatable}[Shift preserves MEC]{corollary}{mecShift}
\label{cor:mec-shift}
Let $\pi\in\Dset$ and let $t$ be an admissible shift for $\pi$. Then
\(
\MEC[\alpha,\beta](\pi^{\mathrm{shift}(t)})
=
\MEC[\alpha,\beta](\pi).
\)
\end{restatable}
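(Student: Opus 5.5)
We need to prove that MEC is invariant under admissible shifts, and we will follow the route the paper itself suggests: write the shift as a composition of two reflections and apply Lemma~\ref{lem:mec-reflection} twice. Let $\pi=((\pi_1,\ldots,\pi_n),(x_1,\ldots,x_n))$ and let $t$ be admissible, so $x_1+t\ge 0$ and $x_n+t\le 1$.

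\textbf{Step 1: choose the first center.} We must pick $c$ so that both $c$ is admissible for $\pi$ and $c+t/2$ is admissible for $\pi^{\mathrm{ref}(c)}$. Admissibility of $c$ for $\pi$ means $2c-x_i\in[0,1]$ for all $i$, that is,
\[
x_n/2\le c\le (1+x_1)/2 .
\]
This interval is nonempty because $x_n\le 1\le 1+x_1$. The natural choice is $c=(x_1+x_n)/2$, the midpoint of the support. It lies in the interval since $x_1\ge 0$ and $x_n\le 1$. With this choice the support of $\pi^{\mathrm{ref}(c)}$ is $\{x_1+x_n-x_i\}$, which occupies the same interval $[x_1,x_n]$.

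\textbf{Step 2: check the second reflection.} Let $c'=c+t/2$. Reflecting a point $2c-x_i$ about $c'$ gives
\[
2c'-(2c-x_i)=x_i+t\in[0,1],
\]
by the admissibility of $t$. So $c'$ is an admissible reflection center for $\pi^{\mathrm{ref}(c)}$. We also need $c'\in[0,1]$, since the definition of a reflection center requires it. Indeed $c'=\tfrac12\big((x_1+t)+(x_n+t)\big)$ is the midpoint of two points in $[0,1]$.

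We then verify the identity $\big(\pi^{\mathrm{ref}(c)}\big)^{\mathrm{ref}(c')}=\pi^{\mathrm{shift}(t)}$ directly from the definition of masses. The point $x_i+t$ receives mass $\pi^{\mathrm{ref}(c)}(2c-x_i)=\pi(x_i)=\pi_i$, and the supports coincide.

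\textbf{Step 3: conclude.} Applying Lemma~\ref{lem:mec-reflection} first to $\pi^{\mathrm{ref}(c)}$ with center $c'$, and then to $\pi$ with center $c$, gives
\[
\MEC[\alpha,\beta](\pi^{\mathrm{shift}(t)})=\MEC[\alpha,\beta](\pi^{\mathrm{ref}(c)})=\MEC[\alpha,\beta](\pi).
\]

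The only delicate point is the admissibility bookkeeping, namely that both centers lie in $[0,1]$ and that all reflected supports stay in $[0,1]$. This is settled by the midpoint choice above.

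As a sanity check, there is also a direct argument. The substitution $y\mapsto y+t$ shows that the effort of $\pi^{\mathrm{shift}(t)}$ at $y+t$ equals the effort of $\pi$ at $y$. The constraint $y\in[0,1]$ does not obstruct this, because by convexity (Theorem~\ref{convexity}) a minimizer lies in the support hull $[x_1,x_n]$, whose shift stays inside $[0,1]$.
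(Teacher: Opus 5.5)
Your proof is correct and is essentially the paper's own argument. It uses the same midpoint center $c=(x_1+x_n)/2$, the same second center $c+t/2$ via $2(c+t/2)-(2c-x)=x+t$, and two applications of Lemma~\ref{lem:mec-reflection}.

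There is one small arithmetic slip: $c+t/2=\tfrac12(x_1+x_n+t)$, not $\tfrac12\bigl((x_1+t)+(x_n+t)\bigr)$, which equals $c+t$. The conclusion $c+t/2\in[0,1]$ still holds, since it is the midpoint of $x_1+t$ and $x_n$; alternatively, $2c'-y\in[0,1]$ for some $y\in[0,1]$ already forces $c'\in[0,1]$.
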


We now consider opinion distributions that are symmetric about a point $c$. 
Such distributions represent situations in which opinions to the left and right of $c$ are balanced: for every group at distance $d$ from $c$ on one side, there is an equally large group at the same distance on the other. 
The effort function is therefore symmetric, and the minimal effort to consensus is attained at $c$.

\begin{restatable}[Symmetry]{lemma}{symmetryMec}
\label{lem:symmetry-mec}
Let $\alpha \ge 1$ and $\beta \ge 1$, and let $\pi\in\Dset$. If $\pi$ is symmetric about $c$, then
\(
\MEC[\alpha,\beta](\pi)
=
\EC(c).
\)
\end{restatable}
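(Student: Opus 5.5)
The plan is to show that the effort function $\EC$ is itself symmetric about $c$, and then use its convexity (Theorem~\ref{convexity}) to conclude that $c$ is a minimizer.

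\textbf{Step 1: symmetry of the effort function.} We need that $\EC(2c-y)=\EC(y)$ whenever both $y$ and $2c-y$ lie in $[0,1]$. Since $\pi=\pi^{\mathrm{ref}(c)}$, the map $x\mapsto 2c-x$ is a bijection of $\support(\pi)$ onto itself with $\pi(2c-x)=\pi(x)$. Reindexing the sum by $x'=2c-x$ gives
\[
\EC(2c-y)=\sum_{x\in\support(\pi)}\pi(x)^\alpha|x-(2c-y)|^\beta
=\sum_{x'\in\support(\pi)}\pi(x')^\alpha|x'-y|^\beta=\EC(y),
\]
because $|2c-x'-2c+y|=|x'-y|$.

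\textbf{Step 2: minimization at $c$.} Fix $y\in[0,1]$ and suppose first that $y^{\mathrm{ref}}:=2c-y\in[0,1]$. Then $c=\tfrac12 y+\tfrac12 y^{\mathrm{ref}}$, so convexity of $\EC$ on $[0,1]$ gives
\[
\EC(c)\le \tfrac12\EC(y)+\tfrac12\EC(2c-y)=\EC(y).
\]

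\textbf{Step 3: the case $2c-y\notin[0,1]$ (the main obstacle).} This case can occur because $c$ need not be $1/2$. Here I would avoid the convex-combination argument and instead extend $\EC$ to all of $\Reals$ by the same formula. The extension is still convex, since each term $|x-y|^\beta$ is convex in $y$ for $\beta\ge1$ (the same fact underlying Theorem~\ref{convexity}). The Step~1 identity then holds for every $y\in\Reals$. The Step~2 inequality applies verbatim on $\Reals$, so $\EC(c)\le\EC(y)$ for all real $y$, and in particular for all $y\in[0,1]$. Finally $c\in[0,1]$, since $c$ is admissible by the definition of a symmetric distribution. Hence $\MEC(\pi)=\min_{y\in[0,1]}\EC(y)=\EC(c)$.
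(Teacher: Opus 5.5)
Your proof is correct, but it takes a different route from the paper's.

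The paper also starts by showing that $\EC$ is symmetric about $c$, pairing $x$ with $2c-x$ as in your Step~1. It then splits on $\beta$:
\begin{itemize}
\item For $\beta>1$, it uses strict convexity (Theorem~\ref{convexity}) to get a unique optimal consensus point, and argues that symmetry forces this point to be $c$.
\item For $\beta=1$, it uses the median characterization of Corollary~\ref{cor:mec-special-cases}. It checks that $c$ is a median of $\pi^\alpha$, because the $\pi^\alpha$-masses strictly to the left and strictly to the right of $c$ are equal.
\end{itemize}

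Your midpoint inequality $\EC(c)\le\tfrac12\EC(y)+\tfrac12\EC(2c-y)=\EC(y)$ covers both regimes at once and needs only plain convexity. Extending the effort function to $\Reals$ cleanly handles the case $2c-y\notin[0,1]$. It also avoids a point the paper leaves implicit: its ``must be $c$'' step for $\beta>1$ tacitly needs $2c-y^*\in[0,1]$. That holds because the minimizer lies in the convex hull of $\support(\pi)$, which is contained in $[0,1]\cap[2c-1,2c]$, but the paper never says so. Once you extend to $\Reals$, your Step~2 is in fact contained in Step~3.

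In return, the paper's route gives extra structure. For $\beta>1$ it shows $c$ is the unique optimal consensus point; you recover this by combining your conclusion with strict convexity. For $\beta=1$ it identifies $c$ explicitly as a median of $\overline{\pi^\alpha}$, which ties the lemma to Corollary~\ref{cor:mec-special-cases}.
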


As an immediate consequence of the symmetry lemma (Lem.~\ref{lem:symmetry-mec}) we derive an algebraic expression for the polarization of extremal opinion distributions. The minimal effort to consensus of any extremal distribution can be achieved by moving its population to the middle point $1/2$ since every $\pi^*$ is symmetric about this point.   

\begin{restatable}{theorem}{mecPiStar}\label{th:mec-pi-star}
For every $\pi \in \Dset$, we have
\[
\MEC(\pi^*)
=
\EC[\alpha,\beta][\pi^*](1/2)
=
2\left(\frac{\Mass{\pi^*}}{2}\right)^\alpha
\left(\frac{1}{2}\right)^\beta .
\]
\end{restatable}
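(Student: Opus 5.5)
The plan is to reduce everything to Lemma~\ref{lem:symmetry-mec}. First I will verify that $\pi^*$ is symmetric about $c=1/2$. The support of $\pi^*$ is $\{0,1\}$, and $2c-x$ sends $0\mapsto 1$ and $1\mapsto 0$, both of which lie in $[0,1]$. Hence $c=1/2$ is an admissible reflection center. The reflected distribution $(\pi^*)^{\mathrm{ref}(1/2)}$ has support $\{1,0\}$, with masses $(\pi^*)^{\mathrm{ref}(1/2)}(1)=\pi^*(0)=\Mass{\pi^*}/2$ and $(\pi^*)^{\mathrm{ref}(1/2)}(0)=\pi^*(1)=\Mass{\pi^*}/2$. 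It therefore coincides with $\pi^*$, so $\pi^*$ is symmetric about $1/2$.

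Next I will invoke Lemma~\ref{lem:symmetry-mec} with $c=1/2$. It applies for all $\alpha,\beta\ge 1$ and gives the first equality, $\MEC(\pi^*)=\EC[\alpha,\beta][\pi^*](1/2)$.

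Finally I will evaluate the effort directly from Definition~\ref{mec:def}:
\[
\EC[\alpha,\beta][\pi^*](1/2)=\pi^*(0)^\alpha\,|0-\tfrac12|^\beta+\pi^*(1)^\alpha\,|1-\tfrac12|^\beta
=2\left(\frac{\Mass{\pi^*}}{2}\right)^\alpha\left(\frac12\right)^\beta .
\]

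The only real care point is the boundary case where $\Mass{\pi}>0$ makes both masses positive. This guarantees that the support is exactly $\{0,1\}$ and no term vanishes, which the definition of $\Dset$ ensures. I do not expect any step to be a genuine obstacle, since the symmetry lemma already does the work of locating the minimizer.
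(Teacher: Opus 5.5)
Your proposal is correct and follows the paper's proof exactly: establish that $\pi^*$ is symmetric about $1/2$, invoke Lemma~\ref{lem:symmetry-mec} to place the minimizer at $1/2$, and evaluate the effort there directly. Your explicit checks of admissibility of the reflection center and of positivity of both masses are a slightly more careful rendering of the same argument.
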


\subsection{Population Scaling}

The following theorem states a simple homogeneity property of the minimal effort to consensus: scaling all population masses by a factor $\lambda$ scales the value of $\MEC$ by $\lambda^{\alpha}$.

\begin{restatable}[Population Scaling]{theorem}{multByScalar}\label{th:multbyscalar}
Let $\pi=((\pi_1,\ldots,\pi_n),(x_1,\ldots,x_n))\in\Dset$ and let $\lambda>0$.
We define
\(
\lambda\pi \defsymbol ((\lambda\pi_1,\ldots,\lambda\pi_n),(x_1,\ldots,x_n)).
\)
Then
\(
\MEC(\lambda\pi)=\lambda^{\alpha}\MEC(\pi).
\)
\end{restatable}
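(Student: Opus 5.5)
The plan is to argue pointwise at the level of the effort function and then pass to the minimum.

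First, observe that $\lambda\pi$ has the same support as $\pi$: $\lambda>0$ and $\pi(x)>0$ imply $\lambda\pi(x)>0$. Moreover, $(\lambda\pi)(x)^\alpha=\lambda^\alpha\pi(x)^\alpha$ for every $x$. Hence, for every fixed $y\in[0,1]$, Definition~\ref{mec:def} gives
\[
\EC[\alpha,\beta][\lambda\pi](y)=\sum_{x\in\support(\pi)}\lambda^\alpha\pi(x)^\alpha|x-y|^\beta=\lambda^\alpha\,\EC(y).
\]

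Second, take the minimum over $y\in[0,1]$. Both minima exist by the extreme value theorem, as noted after Definition~\ref{mec:def}. Since $\lambda^\alpha>0$, multiplication by $\lambda^\alpha$ is a strictly increasing map on $\PositiveReals$, so it commutes with taking the minimum:
\[
\min_y \lambda^\alpha\EC(y)=\lambda^\alpha\min_y\EC(y).
\]
This yields $\MEC(\lambda\pi)=\lambda^\alpha\MEC(\pi)$. As a by-product, the sets of optimal consensus points of $\pi$ and $\lambda\pi$ coincide.

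There is no real obstacle here. The only points requiring a word of justification are that the supports agree and that positivity of $\lambda$ allows the scalar to be pulled out of the minimum. Alternatively, one could route the argument through Theorem~\ref{thm:mec-Lbeta}: $\overline{(\lambda\pi)^\alpha}=\overline{\pi^\alpha}$ and $\Mass{(\lambda\pi)^\alpha}=\lambda^\alpha\Mass{\pi^\alpha}$. The direct computation above is simpler, however.
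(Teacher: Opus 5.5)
Your proposal is correct and takes essentially the same route as the paper. The paper also writes the effort function of $\lambda\pi$ as $\lambda^\alpha$ times that of $\pi$ pointwise, then factors the positive constant $\lambda^\alpha$ out of the minimum over $y\in[0,1]$; your side remarks (equal supports, coinciding optimal consensus points) are accurate.
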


As an immediate consequence of the previous theorem we obtain
Condition~H of Esteban and Ray \cite{esteban1994measurement}. 
A polarization measure $\Pol$ satisfies Condition~H if
\begin{equation}\label{eq:condition-H}    
\Pol(\pi)>\Pol(\pi') \Rightarrow \Pol(\lambda\pi)>\Pol(\lambda\pi')
\end{equation}

for every $\lambda>0,\pi,\pi' \in \Dset$. Condition~H expresses invariance to population scaling and is
standard in the theory of inequality measurement (see, e.g., \cite{foster85}). 

Another simple but useful application of the proposition is to obtain the polarization of a normalized distribution from that of the original distribution. This is especially useful if you want to compare populations of different sizes.  

\begin{restatable}{corollary}{conditionH}\label{cor:condition-H-normalization}
The following hold: 
\begin{enumerate}
\item $\MEC(\overline{\pi})=\frac{\MEC(\pi)}{(\Mass{\pi})^{\alpha}}.$
\item The $\MEC$ satisfies Condition~H.
\end{enumerate}
\end{restatable}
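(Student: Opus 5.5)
Both items follow from the Population Scaling theorem (Theorem~\ref{th:multbyscalar}); the plan is to choose the right scalar in each case.

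For item~1, I take $\lambda = 1/\Mass{\pi}$, which is positive because $\pi$ has non-empty support. By definition, $\overline{\pi}(x)=\pi(x)/\Mass{\pi}$ for every $x$. Hence $\overline{\pi}=\lambda\pi$ in the sense of Theorem~\ref{th:multbyscalar}: same support points, with every mass multiplied by $\lambda$. The only point to record is that scaling by a positive $\lambda$ leaves the support unchanged, so the tuple representation $((\pi_1,\ldots,\pi_n),(x_1,\ldots,x_n))$ carries over directly. Theorem~\ref{th:multbyscalar} then gives
\[
\MEC(\overline{\pi})=\lambda^\alpha\MEC(\pi)=\frac{\MEC(\pi)}{(\Mass{\pi})^{\alpha}}.
\]

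For item~2, I fix $\lambda>0$ and $\pi,\pi'\in\Dset$ with $\MEC(\pi)>\MEC(\pi')$. Applying Theorem~\ref{th:multbyscalar} to each distribution gives $\MEC(\lambda\pi)=\lambda^\alpha\MEC(\pi)$ and $\MEC(\lambda\pi')=\lambda^\alpha\MEC(\pi')$. Since $\lambda^\alpha>0$, multiplying a strict inequality by it preserves the inequality. This yields $\MEC(\lambda\pi)>\MEC(\lambda\pi')$, which is exactly Condition~H in \eqref{eq:condition-H}.

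I expect no real obstacle. The one detail to check is that $\lambda\pi$ is again an element of $\Dset$, that is, a finitely supported nonnegative function with non-empty support. This holds because $\lambda>0$ preserves positivity of each mass.
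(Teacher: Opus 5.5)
Your proposal is correct and follows the paper's proof essentially verbatim. Item~1 is Theorem~\ref{th:multbyscalar} applied with $\lambda=1/\Mass{\pi}$, and item~2 multiplies the strict inequality $\MEC(\pi)>\MEC(\pi')$ by $\lambda^\alpha>0$.
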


\section{Theoretical Applications}\label{sec:theoretical-applications}

\subsection{A Minority Principle}
Let us consider two-point opinion distributions 
supported on the extremes $\{0,1\}$ where one mass is smaller than the other, for example 
$\pi=((10,90),(0,1)).$ 
One might expect that the share of the total $(\alpha,\beta)$-effort of each mass required to reach an optimal consensus $y^*$ for $\pi$ should depend on the parameters $\alpha$
and $\beta$. In particular, it would seem natural that, by varying these
parameters, one could shift the burden of adjustment between the masses:
for some values, the larger mass might bear most of the effort, while
for others, the smaller mass might.

From the definition of $\EC$ the effort of the smaller mass to $y^*$ is $10^\alpha {y^*}^\beta$ while for the larger mass we obtain $90^\alpha (1-y^*)^\beta$. If $\beta=1$ the larger mass does not move at all, since from Cor.~\ref{cor:mec-special-cases}, we have $y^* = 1$. Thus, the smaller mass bears the whole effort. The situation is more subtle when $\beta>1$. In this case, the optimal
consensus point lies in the interior $(0,1)$, and thus both masses contribute to the
effort.  

Nevertheless, and somewhat surprisingly, it is actually the smaller mass that \emph{always} makes the greater effort to reach the optimal consensus, regardless of $\alpha$ and $\beta$. This is established below for any two-point distribution \(\pi\) with support \(\{l,r\}\), where \(l<r\).

First the following lemma confirms the intuition that the optimal consensus point lies closer to the larger mass.

\begin{restatable}{lemma}{yoptClosest}\label{lem:yopt-closest-to-largest-general}
Assume \(\beta>1\). Let \(0\le l<r\le 1\), and let \(\pi\in\Dset\) satisfy \(\pi(l)=A\) and \(\pi(r)=B\), with \(A,B>0\). Let \(y^*\) be such that \(\MEC(\pi)=\EC[\alpha,\beta][\pi](y^*)\). Then:
\begin{enumerate}
\item if \(y^*>\frac{l+r}{2}\), then \(B>A\);
\item if \(y^*<\frac{l+r}{2}\), then \(B<A\);
\item if \(y^*=\frac{l+r}{2}\), then \(B=A\).
\end{enumerate}
\end{restatable}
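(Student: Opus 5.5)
We first locate $y^*$, then read off the comparison between $A$ and $B$ from the first-order condition.

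\emph{Step 1: $y^*\in(l,r)$.} For $y\le l$ the effort function $\EC[\alpha,\beta][\pi]$ equals $A^\alpha(l-y)^\beta+B^\alpha(r-y)^\beta$. Both terms are nonincreasing in $y$ on this range, and the second is strictly decreasing. Hence no point $y<l$ is optimal, and by the symmetric argument no point $y>r$ is optimal. So $y^*\in[l,r]$.

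We now use that $\beta>1$, so $\EC[\alpha,\beta][\pi]$ is continuously differentiable. On $[l,r]$ its derivative is
\[
\frac{d}{dy}\EC[\alpha,\beta][\pi](y)=\beta\bigl(A^\alpha (y-l)^{\beta-1}-B^\alpha (r-y)^{\beta-1}\bigr).
\]
At $y=l$ the derivative equals $-\beta B^\alpha(r-l)^{\beta-1}<0$, and at $y=r$ it equals $\beta A^\alpha(r-l)^{\beta-1}>0$. Therefore neither endpoint is a minimizer, and $y^*\in(l,r)$.

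\emph{Step 2: first-order condition.} Since $y^*$ is interior, the derivative vanishes there. By Theorem~\ref{convexity} (strict convexity), this determines $y^*$ uniquely. The condition is
\[
A^\alpha (y^*-l)^{\beta-1}=B^\alpha (r-y^*)^{\beta-1}.
\]

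\emph{Step 3: comparison with the midpoint.} Set $d_l=y^*-l>0$ and $d_r=r-y^*>0$. Then $y^*>\frac{l+r}{2}$ if and only if $d_l>d_r$. Because $\beta-1>0$, the map $t\mapsto t^{\beta-1}$ is strictly increasing on $(0,\infty)$. So $d_l>d_r$ gives $d_l^{\beta-1}>d_r^{\beta-1}$. The first-order condition then forces $B^\alpha>A^\alpha$, and since $\alpha\ge1$ the map $t\mapsto t^\alpha$ is strictly increasing on $(0,\infty)$, so $B>A$. This proves case 1. Cases 2 and 3 follow in the same way, with $<$ and $=$ in place of $>$.

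\emph{Main obstacle.} The only delicate point is justifying that $y^*$ is interior, so that the first-order equation is valid. A boundary minimum at $l$ or $r$ would only satisfy an inequality. The endpoint signs of the derivative computed in Step 1 handle this, and they rely on $A,B>0$ and $\beta>1$.
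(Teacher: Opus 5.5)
Your proof is correct and follows essentially the same route as the paper: you place $y^*$ in $(l,r)$ using the signs of the derivative at $l$ and $r$, and then conclude from the first-order condition. The only minor difference is the final comparison. You read it directly off $A^\alpha (y^*-l)^{\beta-1}=B^\alpha (r-y^*)^{\beta-1}$ by comparing the distance factors, whereas the paper evaluates the strictly increasing derivative at the midpoint $\frac{l+r}{2}$, where its sign is that of $A^\alpha-B^\alpha$.
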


Having located the side of the optimal consensus point relative to the two masses, we now compare their respective contributions to the total effort.

\begin{restatable}{lemma}{greatestContribution}\label{lem:greatest-contribution-from-smallest-general}
Under the assumptions of Lem.~\ref{lem:yopt-closest-to-largest-general}, we have:
\begin{enumerate}
\item if \(y^*>\frac{l+r}{2}\), then \(A^\alpha (y^*-l)^\beta > B^\alpha (r-y^*)^\beta\);
\item if \(y^*<\frac{l+r}{2}\), then \(A^\alpha (y^*-l)^\beta < B^\alpha (r-y^*)^\beta\);
\item if \(y^*=\frac{l+r}{2}\), then \(A^\alpha (y^*-l)^\beta = B^\alpha (r-y^*)^\beta\).
\end{enumerate}
\end{restatable}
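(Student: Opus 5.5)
The plan is to use the first-order condition at the optimal consensus point and then compare the two contributions directly. Since \(\beta>1\), Theorem~\ref{convexity} says that \(\EC[\alpha,\beta][\pi]\) is strictly convex and continuously differentiable, with a unique minimizer \(y^*\). I first check that \(y^*\in(l,r)\). For \(y<l\) both terms decrease as \(y\) moves toward \(l\), and for \(y>r\) both decrease as \(y\) moves toward \(r\), so \(y^*\in[l,r]\). At the endpoint \(y=l\) the derivative equals \(-\beta B^\alpha (r-l)^{\beta-1}<0\), and at \(y=r\) it equals \(\beta A^\alpha (r-l)^{\beta-1}>0\). Hence the minimizer is interior and satisfies
\[
A^\alpha (y^*-l)^{\beta-1} = B^\alpha (r-y^*)^{\beta-1}.
\]

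Next, write \(a=y^*-l>0\) and \(b=r-y^*>0\). Multiplying the first-order identity by \(a\) gives
\[
A^\alpha a^\beta = B^\alpha b^{\beta-1} a .
\]
Therefore the effort of the left mass minus the effort of the right mass is
\[
A^\alpha a^\beta - B^\alpha b^\beta = B^\alpha b^{\beta-1}(a-b),
\]
and the factor \(B^\alpha b^{\beta-1}\) is strictly positive. Now \(y^*>\frac{l+r}{2}\) is equivalent to \(a>b\), \(y^*<\frac{l+r}{2}\) to \(a<b\), and \(y^*=\frac{l+r}{2}\) to \(a=b\). These give items 1, 2 and 3 respectively. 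Lemma~\ref{lem:yopt-closest-to-largest-general} is not needed for the inequalities themselves. It only supplies the interpretation that the side of \(y^*\) identifies the larger mass.

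The main obstacle is justifying the first-order condition at an interior point. This needs differentiability of \(|x-y|^\beta\) for \(\beta>1\), including at the support points, where the derivative is \(0\) and poses no problem. It also needs the endpoint sign check above to exclude \(y^*\in\{l,r\}\). Both are routine once Theorem~\ref{convexity} is taken as given. The rest is the one-line algebraic identity.
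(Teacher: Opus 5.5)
Your proof is correct and takes essentially the same route as the paper: the paper also takes the first-order condition \(A^\alpha (y^*-l)^{\beta-1}=B^\alpha (r-y^*)^{\beta-1}\) at the interior minimizer and multiplies both sides by the unequal positive factors \(y^*-l\) and \(r-y^*\), which is exactly your identity \(A^\alpha a^\beta - B^\alpha b^\beta = B^\alpha b^{\beta-1}(a-b)\). Your endpoint-derivative check for interiority (and your remark that Lemma~\ref{lem:yopt-closest-to-largest-general} is not needed for the inequalities) simply spells out what the paper takes from the proof of Lemma~\ref{lem:yopt-closest-to-largest-general}.
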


As an immediate consequence of the above lemmas and Cor.~\ref{cor:mec-special-cases}, we conclude that indeed it is the minority that always bears the greater effort.

\begin{restatable}[Minority Principle for Two-Point Distributions]{theorem}{minorityPrinciple}
\label{thm:minority-principle-general}
Let \(0\le l<r\le 1\), and let \(\pi\in\Dset\) satisfy \(\pi(l)=A\) and \(\pi(r)=B\), with \(A,B>0\) and \(A\neq B\). Let \(y^*\) be such that \(\MEC(\pi)=\EC[\alpha,\beta][\pi](y^*)\). Then:
\[
A<B \;\Longrightarrow\; A^\alpha (y^*-l)^\beta > B^\alpha (r-y^*)^\beta,
\]
and
\[
A>B \;\Longrightarrow\; A^\alpha (y^*-l)^\beta < B^\alpha (r-y^*)^\beta.
\]
\end{restatable}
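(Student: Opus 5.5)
The argument splits into the cases $\beta>1$ and $\beta=1$, since the two preceding lemmas assume $\beta>1$. By the symmetry between the two implications (or by applying Lem.~\ref{lem:mec-reflection} with reflection about $(l+r)/2$, which swaps $A$ and $B$ and sends $y^*$ to $l+r-y^*$), it suffices to treat $A<B$.

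\emph{Case $\beta>1$.} Suppose $A<B$. The plan is to show $y^*>\frac{l+r}{2}$ by trichotomy using Lem.~\ref{lem:yopt-closest-to-largest-general}. If $y^*<\frac{l+r}{2}$, item 2 gives $B<A$. If $y^*=\frac{l+r}{2}$, item 3 gives $B=A$. Both contradict $A<B$, so $y^*>\frac{l+r}{2}$. Item 1 of Lem.~\ref{lem:greatest-contribution-from-smallest-general} then yields $A^\alpha(y^*-l)^\beta>B^\alpha(r-y^*)^\beta$. The case $A>B$ is handled identically using the second items of both lemmas.

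\emph{Case $\beta=1$.} By Cor.~\ref{cor:mec-special-cases}, the optimum is attained at a median of $\overline{\pi^\alpha}$. When $A<B$ we have $A^\alpha<B^\alpha$, so the unique median is $r$. The statement takes an arbitrary minimizer $y^*$, so we must check that every minimizer equals $r$. On $[l,r]$ the effort is $\EC(y)=A^\alpha(y-l)+B^\alpha(r-y)$, which is affine with slope $A^\alpha-B^\alpha<0$. Outside $[l,r]$ the effort is larger. Hence $y^*=r$ is the unique minimizer. The minority's effort is then $A^\alpha(r-l)>0$, while the majority's effort is $B^\alpha\cdot 0=0$, which gives the strict inequality. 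The case $A>B$ is symmetric, with $y^*=l$.

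The only delicate point is uniqueness of the minimizer when $\beta=1$, since convexity alone does not give it. The affine-slope computation above settles it. The rest is bookkeeping.
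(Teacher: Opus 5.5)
Your proposal is correct and follows the paper's proof essentially step for step. For \(\beta>1\) you rule out \(y^*\le\frac{l+r}{2}\) via Lem.~\ref{lem:yopt-closest-to-largest-general} and then apply Lem.~\ref{lem:greatest-contribution-from-smallest-general}; for \(\beta=1\) you place \(y^*\) at the larger mass via the median characterization. Your explicit slope argument that \(r\) is the only minimizer when \(A<B\) and \(\beta=1\) is a small but useful addition: the paper infers from Cor.~\ref{cor:mec-special-cases} that an arbitrary minimizer is a median, but the corollary as stated only guarantees that the minimum is attained at a median.
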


\subsection{A Tipping Point Method}\label{sec:tipping-point} 

In Ex.~\ref{ex:group-identity}, we illustrated the role of group identity using a distribution $\pi$ and showing that transferring a small portion of mass at 0.75 toward the extreme right may in fact decrease polarization. Such a shift can reduce the effective weight of the original mass and thereby lower the overall level of polarization.

Nevertheless, if $\pi'$ is obtained from $\pi$ in Ex.~\ref{ex:group-identity} by transferring the entire mass
at $0.75$ to the extreme right, polarization does increase. Indeed, we have 
\(
\MEC[2,2](\pi)<\MEC[2,2](\pi') \approx 461.5 .
\)
This naturally suggests the following question: what is the minimal amount of
mass (\emph{critical mass}) that must move from the moderate right (0.75) to the extreme right (1.0) for
polarization to increase? In other words, at what point does such a shift begin
to raise polarization?

We now illustrate a simple method to solve the above tipping-point question for a family of two-point distributions with group identity and alienation parameters $\alpha=\beta=2.$ The phenomenon illustrates a broader fact: under MEC, polarization is not monotone in movement away from the optimal consensus point. Whether a partial transfer of mass away from this point raises or lowers polarization depends on how much mass is involved, and the Tipping Point method makes this dependence explicit and computable.

\begin{example}[A tipping point method for mass shift]
\label{ex:tipping-mass-shift-general}
Let \(m,n>0\) and \(0\le l<r<r'\le 1\). Consider the opinion distribution
\(
\pi=((m,n),(l,r)),
\)
and, for \(k\in(0,n]\), let
\(
\pi_{r:k\rightarrow r'}=((m,n-k,k),(l,r,r'))
\)
be the distribution obtained by moving mass \(k\) from \(r\) to the more extreme point \(r'\).

To study how polarization changes under this transfer, define
\[
\Delta(k)\defsymbol \MEC[2,2](\pi_{r:k\rightarrow r'})-\MEC[2,2](\pi).
\]

Set \(d\defsymbol r-l\), \(e\defsymbol r'-r\), and \(C\defsymbol m^2+n^2\). By Proposition~\ref{prop:mec-general-mass-shift-formulas}, we have
\[
\Delta(k)=\frac{k\,R_{m,n,d,e}(k)}{Q_{m,n}(k)},
\]
where
\[
R_{m,n,d,e}(k)
=
e^2Ck^3-2e^2nCk^2+\bigl(2d^2m^4+2de\,m^2C+e^2C^2\bigr)k-2d^2m^4n,
\]
and
\[
Q_{m,n}(k)=C(2k^2-2nk+C).
\]

Since \(Q_{m,n}(k)>0\) for all \(k\in[0,n]\), the sign of \(\Delta(k)\) for \(k>0\) is determined by \(R_{m,n,d,e}(k)\). Moreover,
\[
R_{m,n,d,e}(0)=-2d^2m^4n<0,
\qquad
R_{m,n,d,e}(n)=e(2d+e)m^2n(m^2+n^2)>0.
\]
By Proposition~\ref{prop:unique-root-tipping-general}, the polynomial \(R_{m,n,d,e}\) has a unique root \(k^\star\in(0,n)\). Therefore
\[
\MEC[2,2](\pi_{r:k^\star\rightarrow r'})=\MEC[2,2](\pi),
\]
and
\[
\MEC[2,2](\pi_{r:k\rightarrow r'})<\MEC[2,2](\pi)
\quad\text{for }0<k<k^\star,
\]
while
\[
\MEC[2,2](\pi_{r:k\rightarrow r'})>\MEC[2,2](\pi)
\quad\text{for }k^\star<k\le n.
\]

Thus \(k^\star\) acts as a tipping point: transferring a small amount of mass from \(r\) to \(r'\) initially reduces polarization, but once the transferred mass exceeds \(k^\star\), the polarization becomes strictly larger than that of the original distribution. \qed
\end{example}

\begin{remark}
By the Reflection Lemma (Lem.~\ref{lem:mec-reflection}), the analogous case in which a mass \(k\), with \(0<k\le m\), is moved leftward from \(l\) to a more extreme point \(l'<l\), reduces to the rightward case by reflection about \(c=(l'+r)/2\). Indeed, let
\[
\pi_{l:k\to l'}=((k,m-k,n),(l',l,r)).
\]
If \(r''\defsymbol l'+r-l\), then \(l'<r''<r\), \(\pi^{\mathrm{ref}(c)}=((n,m),(l',r''))\), and \((\pi_{l:k\to l'})^{\mathrm{ref}(c)}=((n,m-k,k),(l',r'',r))\). Thus the result follows from the rightward case.
\end{remark}

We now show an application of the tipping point method to the distribution in 
Ex.~\ref{ex:group-identity}.

\begin{example}
Let
\[
\pi=((40,60),(0.25,0.75)),
\]
so that
\[
\pi_{0.75:k\rightarrow1}=((40,60-k,k),(0.25,0.75,1)).
\]
Applying the analysis of Ex.~\ref{ex:tipping-mass-shift-general} with
\(m=40\), \(n=60\), \(l=0.25\), \(r=0.75\), and \(r'=1\), we obtain
\(d=\frac12\) and \(e=\frac14\). Thus the change in polarization is
determined by the cubic polynomial
\[
R_{40,60,\frac12,\frac14}(k)
=
325k^3-39000k^2+5050000k-76800000.
\]
This polynomial has a unique real root in \((0,60)\) 
\[
k^\star\approx 17.156.
\]
Therefore
\[
\MEC[2,2](\pi_{0.75:k\rightarrow1})<\MEC[2,2](\pi)
\quad\text{for }0<k<k^\star,
\]
while
\[
\MEC[2,2](\pi_{0.75:k\rightarrow1})>\MEC[2,2](\pi)
\quad\text{for }k^\star<k\le 60.
\]

Thus the polarization becomes strictly larger than that of \(\pi\) once the
transferred mass exceeds approximately \(17.156\).
\end{example}

We conclude by showing that the tipping point method above extends beyond the two-point setting. For any opinion distribution and any mass transferred from its maximal support point toward the extreme right (\(1\)), the resulting change in \(\MEC[2,2]\) is again a rational function whose sign is governed by a cubic polynomial.

\begin{restatable}[Mass transfer to the extreme]{proposition}{massTransfer}
\label{prop:mec22-mass-transfer}
Let
\(
\pi=((p_1,\dots,p_r),(x_1,\dots,x_r))
\)
be an opinion distribution with
\(
0\le x_1<\cdots<x_r<1
\)
and \(p_i>0\). For \(k\in[0,p_r]\), let \(\pi^{(k)}=\pi_{x_r:k\to 1}\) be the distribution obtained by moving mass \(k\) from the maximal support point \(x_r\) to \(1\). Then there exist polynomials \(P_\pi(k)\) and \(Q_\pi(k)\) such that
\[
\MEC[2,2](\pi^{(k)})-\MEC[2,2](\pi)
=
\frac{k\,P_\pi(k)}{Q_\pi(k)}
\qquad\text{for all }k\in[0,p_r],
\]
where \(P_\pi(k)\) is a cubic polynomial and \(Q_\pi(k)>0\) for all \(k\in[0,p_r]\). Consequently, for every \(k>0\) the sign of the change in polarization is determined by the cubic polynomial \(P_\pi(k)\).
\end{restatable}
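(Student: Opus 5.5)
By Corollary~\ref{cor:mec-special-cases}(2), $\MEC[2,2](\rho)=\Mass{\rho^2}\cdot\mathrm{Var}(\overline{\rho^2})$ for every $\rho\in\Dset$. Writing $S_j(\rho)=\sum_x \rho(x)^2 x^j$ for $j=0,1,2$, this becomes
\[
\MEC[2,2](\rho)=S_2(\rho)-\frac{S_1(\rho)^2}{S_0(\rho)}=\frac{S_0S_2-S_1^2}{S_0}.
\]
The plan is to compute $S_0,S_1,S_2$ for $\pi^{(k)}$ as explicit polynomials in $k$ and then take the difference with the same expression for $\pi$.

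\textbf{Step 1: the moments.} Since $x_r<1$, the point $1$ is a new support point for $k>0$. Set $a_j=\sum_{i<r}p_i^2x_i^j$. Then
\[
S_j(\pi^{(k)})=a_j+(p_r-k)^2x_r^j+k^2 .
\]
Each of these is a quadratic in $k$. At $k=0$ they reduce to $S_j(\pi)=:s_j$. At $k=p_r$ the formula still matches the actual distribution, because the mass at $x_r$ becomes $0$.

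\textbf{Step 2: the difference.} We get
\[
\Delta(k)=\frac{N(k)}{S_0(\pi^{(k)})\,s_0},
\qquad
N(k)=s_0\bigl(S_0S_2-S_1^2\bigr)(\pi^{(k)})-S_0(\pi^{(k)})\bigl(s_0s_2-s_1^2\bigr).
\]
Take $Q_\pi(k)=s_0\,S_0(\pi^{(k)})$. It is positive on $[0,p_r]$: every $p_i>0$ gives $s_0>0$, and $S_0(\pi^{(k)})=a_0+(p_r-k)^2+k^2>0$. The function $N$ is a polynomial in $k$ with $N(0)=0$, so $k$ divides $N$. Setting $P_\pi(k)=N(k)/k$ gives the desired representation.

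\textbf{Step 3: the degree.} This is the main obstacle. A naive bound gives $\deg N\le 4$, because $S_0S_2-S_1^2$ is a product of quadratics. Hence $P_\pi$ has degree at most $3$, which is exactly the claim of a cubic (possibly degenerate, with leading coefficient read off directly).

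To identify the coefficients, expand $S_j(\pi^{(k)})=s_j+u_jk+v_jk^2$ with
\[
u_j=-2p_rx_r^j,\qquad v_j=x_r^j+1 .
\]
The $k^4$ coefficient of $S_0S_2-S_1^2$ is $v_0v_2-v_1^2=(1-x_r)^2$. The $k^4$ coefficient of $N$ is therefore
\[
s_0(1-x_r)^2-2\bigl(s_0s_2-s_1^2\bigr).
\]
This can be nonzero; for instance, a single point mass gives $s_0s_2-s_1^2=0$, leaving $s_0(1-x_r)^2>0$. So $P_\pi$ is a cubic with leading coefficient $s_0(1-x_r)^2-2(s_0s_2-s_1^2)$. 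It is consistent with the $e^2Ck^3$ term of Ex.~\ref{ex:tipping-mass-shift-general}, up to normalization by $C$. The lower coefficients of $P_\pi$ come out of the same bookkeeping.

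\textbf{Step 4: the sign.} Since $Q_\pi>0$ and $k>0$, the sign of $\Delta(k)$ equals the sign of $P_\pi(k)$. As a sanity check, specializing to $r=2$ should reproduce $R_{m,n,d,e}$ after cancelling common factors.
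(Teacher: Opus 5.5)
\textbf{Steps 1, 2 and 4 are the paper's argument; Step 3 miscomputes the leading coefficient.} By Lagrange's identity, $S_0S_2-S_1^2=\sum_{i<j}w_iw_j(x_i-x_j)^2$. So your $N(k)$ and $Q_\pi(k)=s_0\,S_0(\pi^{(k)})$ are exactly the paper's $N^{(k)}D-N\,D^{(k)}$ and $D\,D^{(k)}$. Your positivity and divisibility-by-$k$ arguments are the same as the paper's and are correct. The error is in Step 3. The term $S_0(\pi^{(k)})\,(s_0s_2-s_1^2)$ is only \emph{quadratic} in $k$, so it contributes nothing to the $k^4$ coefficient of $N$. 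The quantity $2(s_0s_2-s_1^2)$ that you subtract is its $k^2$ coefficient, coming from $v_0=2$. The $k^4$ coefficient of $N$ is therefore just $s_0(v_0v_2-v_1^2)=s_0(1-x_r)^2$.

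\textbf{Why this matters.} With your formula the leading coefficient could vanish, e.g.\ whenever $s_0(1-x_r)^2=2(s_0s_2-s_1^2)$. So as written you have shown only $\deg P_\pi\le 3$, and the paper explicitly proves more. With the correct coefficient, $s_0>0$ and $x_r<1$ give $[k^3]P_\pi=s_0(1-x_r)^2>0$ for every $\pi$. This is the paper's $D(1-x_r)^2$: $P_\pi$ is genuinely cubic with positive leading coefficient. The ``possibly degenerate'' hedge and the single-point-mass example are then unnecessary.

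\textbf{Your sanity check would have caught this.} Take $\pi=((m,n),(l,r))$ with $r'=1$. Your $Q_\pi=C(2k^2-2nk+C)$ coincides with $Q_{m,n}$, so $P_\pi$ must equal $R_{m,n,d,e}$ exactly, with no normalization by $C$. Its leading coefficient is $e^2C=s_0(1-x_r)^2$. Your expression instead gives $Ce^2-2m^2n^2d^2$, since here $s_0s_2-s_1^2=m^2n^2d^2$. The paper's pairwise form makes this bookkeeping transparent: the only quartic contribution is the cross term $(p_r-k)^2k^2(1-x_r)^2$ in $N^{(k)}$, and $N\,D^{(k)}$ is visibly quadratic.
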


\section{Shifting Away From Consensus}\label{sec:shifting-away}

This section is concerned with showing that \(\MEC\) is a polarization measure in the sense of Definition~\ref{pol:def}.

The first condition of Definition~\ref{pol:def} presents no difficulty. If \(\pi\) is a consensus distribution, then all its mass is concentrated at some opinion \(y\), and therefore \(\EC(y)=0\). It follows at once that \(\MEC(\pi)=0\).

The second condition is less immediate. We must show that \(\pi^*\) attains the greatest value of \(\MEC\) among all opinion distributions having total mass \(\Mass{\pi}\). This, however, is a more delicate matter and deserves some discussion.

Broadly speaking, the argument proceeds in two stages. We first show that any non-consensus distribution $\pi$ can be made \emph{strictly} more polarized by shifting mass away from consensus. By repeating this operation, one arrives at a distribution supported on $\{0,1\}$ or, in the exceptional case $\beta=1$, on $\{0,y^*,1\}$, where $y^*$ is an optimal consensus point of the final distribution. We then show that the extremal distribution $\pi^*$ is at least as polarized as any such final distribution.


There are, however, certain subtleties in the operation involved in the first stage. One may ask what happens when a mass, after being moved away from the optimal consensus point, merges with a mass already present. If \(\alpha\) is sufficiently large, the resulting mass may become so influential as to draw the new optimal consensus point very near to it. It then ceases to be obvious that the minimum effort required to reach consensus must increase.

\begin{example}
Let \(\alpha=\beta=2\), and consider \(\pi=((2,3,2),(0,0.4,1))\). The optimal consensus point is \( \frac{7.6}{17}\approx 0.447\) thus near the midpoint. Now move the mass at \(0.4\) leftward to \(0\), away from the optimal consensus point. It then merges with the mass already at \(0\), yielding \(\pi'=((5,2),(0,1))\). The new optimal consensus point becomes \(\frac{4}{29}\approx 0.138\). Thus a mass has been moved away from consensus, but in so doing it has formed a larger block, with one small on the other extreme, which draws the new consensus point sharply toward one extreme.
\end{example}

Nevertheless, it follows from the results below that the potential merging of masses involved in a shift away from consensus does not prevent polarization from increasing.

Also notice that Ex.~\ref{ex:group-identity} has already shown that moving a small \emph{portion} of a mass away from the optimal consensus point may in fact diminish polarization. For \(\alpha=\beta=2\), the optimal consensus point for \(\pi=((40,60),(0.25,0.75))\) is approximately \(0.59\), and yet, if one unit of mass is transferred from \(0.75\) to \(1\), the resulting distribution has smaller MEC. This may appear to conflict with the claim just made.

The conflict, however, is only apparent. The operation employed in the proof is not the arbitrary removal of a small portion of a mass, but the displacement of a \emph{whole} mass situated at a support point. The example is instructive precisely because it shows what may go wrong otherwise: the original group is reduced, while the detached fragment may remain too small to exert sufficient influence at its new position. In such a case, polarization need not increase.

Finally, there is a third observation, peculiar to the case \(\beta=1\). It arises when the consensus point is unique and the mass to be moved lies precisely at that point. By Cor.~\ref{cor:mec-special-cases}, the optimal point is then the median. If one moves the mass situated there, the median may be displaced with it. We shall refer to the resulting distribution as a \emph{median-shift}. In that case, as illustrated in the example below, the polarization need not increase.

First we need the following notions and notations.  

\begin{definition}[Shifting Masses]\label{def:movement}
Let \(\pi\in\Dset\) and let \(a,b\in[0,1]\). We use \(\pi_{a\to b}\) to denote the distribution obtained from \(\pi\) by shifting the mass \(\pi(a)\) from \(a\) to \(b\). Thus, \(\pi_{a\to b}(x)=\pi(x)\) for \(x\notin\{a,b\}\) and if $a \neq b$ then \(\pi_{a\to b}(b)=\pi(a)+\pi(b)\) and \(\pi_{a\to b}(a)=0\). If \(a=b\), we let \(\pi_{a\to a}\defsymbol \pi\).

Given $\pi$, we say that \(\pi_{a\to b}\) is a (distribution) \emph{shift away from consensus} for given \(\alpha,\beta\ge 1\) if \(b<a\le y^*\) or \(y^*\le a<b\) for every $(\alpha,\beta)$-optimal consensus point \(y^*\) of \(\pi\).

Finally, \(\pi_{a\to b}\) is said to be a \emph{median-shift} if \(\beta=1\), \(b\neq a=y^*\), \(\pi(a)>0\), and \(y^*\) is the only $(\alpha,\beta)$-optimal consensus point of \(\pi\).
\qed 
\end{definition}

\begin{example}
Assume \(\alpha=\beta=1\). Let \(\pi=((20,60,20),(0,0.5,1))\) and \(\pi_{0.5\to 0.9}=((20,60,20),(0,0.9,1))\). Since \(0.5\) is the only optimal consensus for \(\pi\) (i.e., its only median), the distribution \(\pi_{0.5\to 0.9}\) is a median-shift of \(\pi\). Furthermore, the polarization remains unchanged by this shift away from consensus: i.e., \(\MEC(\pi_{0.5\to 0.9})=\MEC(\pi)\).
\end{example}

We now state the main theorems needed for proving the second condition for $\MEC$ being a polarization measure. 

The following theorem establishes that shifting away from consensus increases polarization, which is the first step of our argument.

\begin{restatable}[Shifting Away from Consensus]{theorem}{shiftingAway}
\label{thm:mec-increases:body}
Let $\pi \in \Dset$ and $a\in\support(\pi)$. Suppose that  $\pi_{a\to b}$ is a shift away from consensus and is not a median-shift. Then $\MEC(\pi_{a\to b})> \MEC(\pi)$.
\end{restatable}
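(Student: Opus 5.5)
By the Reflection Lemma (Lem.~\ref{lem:mec-reflection}) I would first reduce to the rightward case $y^*\le a<b$ for every optimal consensus point $y^*$ of $\pi$. The leftward case $b<a\le y^*$ becomes this one after reflecting about a center such as $c=1/2$. Reflection preserves both $\MEC$ and the set of optimal points, mirrored, so nothing is lost. Write $\pi'=\pi_{a\to b}$, $p=\pi(a)>0$, $q=\pi(b)\ge 0$, and let $R(y)=\sum_{x\notin\{a,b\}}\pi(x)^\alpha|x-y|^\beta$ be the effort of all untouched masses. Then
\[
\EC[\alpha,\beta][\pi'](y)=R(y)+(p+q)^\alpha|b-y|^\beta .
\]
Since $\alpha\ge1$, the map $t\mapsto t^\alpha$ is superadditive on $\PositiveReals$, so $(p+q)^\alpha\ge p^\alpha+q^\alpha$. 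This gives the basic pointwise comparison
\[
\EC[\alpha,\beta][\pi'](y)\;\ge\;\EC(y)+p^\alpha\bigl(|b-y|^\beta-|a-y|^\beta\bigr),
\]
and it is the step that neutralises the merging phenomenon discussed before Def.~\ref{def:movement}.

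The proof then splits according to where a minimiser $z$ of $\EC[\alpha,\beta][\pi']$ lies; it exists by compactness. If $z\le a$, then $|b-z|>|a-z|$, so the comparison gives $\MEC(\pi')=\EC[\alpha,\beta][\pi'](z)>\EC(z)\ge\MEC(\pi)$, and we are done. If $a<z$, I would use convexity (Th.~\ref{convexity}) together with the fact that all minimisers of $\EC$ lie in $[0,a]$. Together these make $\EC$ nondecreasing on $[a,1]$, and strictly increasing to the right of the largest minimiser $y^*_{\max}\le a$. From $\EC(z)\ge\EC(a)$ and the comparison one obtains
\[
\EC[\alpha,\beta][\pi'](z)\ \ge\ \EC(a)+p^\alpha\bigl(|b-z|^\beta-(z-a)^\beta\bigr),
\]
which settles every $z\in(a,\tfrac{a+b}{2}]$. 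In the borderline case $z=\tfrac{a+b}{2}$ the bracket vanishes. There I would get strictness either from strict superadditivity when $q>0$ and $\alpha>1$, or from strict monotonicity of $\EC$ on $(y^*_{\max},1]$, which gives $\EC(z)>\EC(a)\ge\MEC(\pi)$.

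The main obstacle is the remaining case $z>\tfrac{a+b}{2}$, where the moved mass (possibly merged with $q$) has dragged the new consensus point past the midpoint. I would handle it by comparing $\EC[\alpha,\beta][\pi'](z)$ not with $\EC(a)$ but with $\EC$ at the mirrored point $z'=a+b-z\in[a,\tfrac{a+b}{2})$. The moved mass pays $(p+q)^\alpha(b-z)^\beta=(p+q)^\alpha(z'-a)^\beta$, which dominates the cost $p^\alpha(z'-a)^\beta$ that the original $a$-mass pays at $z'$. The remaining issue is to control $R(z)+q^\alpha\cdot(\text{stuff})$ against $R(z')+q^\alpha(b-z')^\beta$. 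For this I would use the first-order condition at $z$ (or subgradient condition when $\beta=1$). It says that the total pull of the masses left of $z$ balances $(p+q)^\alpha\beta(b-z)^{\beta-1}$, and convexity of each $|x-\cdot|^\beta$ then bounds $R(z')-R(z)$ by a multiple of $z-z'$. If this mirroring estimate proves too loose, the fallback is a monotone path argument. Move the mass continuously from $a$ to $b$ along $\pi_{a\to s}$, $s\in[a,b]$, and show that $s\mapsto\MEC(\pi_{a\to s})$ is nondecreasing as long as $s$ stays to the right of the current optimal point. Its derivative via the envelope theorem is $p^\alpha\beta(s-y^*(s))^{\beta-1}\ge0$, and the optimal point stays left of $s$ by Lemma~\ref{lem:yopt-closest-to-largest-general}-type reasoning; the final merge at $s=b$ is then handled by superadditivity.

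Finally, for $\beta=1$ I would check where strictness comes from. When $a>y^*$ for all optimal $y^*$, it comes from strict increase of $|b-y|-|a-y|$ on the relevant range. When $a=y^*$ is the only optimum, moving the mass would relocate the median along with it, which is exactly the median-shift that the hypothesis excludes. When optima are not unique and $a$ equals one of them, the fact that some minimiser lies strictly left of $a$ gives the needed slack.
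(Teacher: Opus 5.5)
\textbf{There is a genuine gap: the case $z>\frac{a+b}{2}$ is left unproved.}

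Your superadditivity comparison $\EC[\alpha,\beta][\pi'](y)\ge\EC(y)+p^\alpha\bigl(|b-y|^\beta-|a-y|^\beta\bigr)$ is correct. It is in fact a pointwise form of the paper's Theorem~\ref{thm:mec-increases-aux}. Your cases $z\le a$ and $a<z\le\frac{a+b}{2}$ are also sound.

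The case $z>\frac{a+b}{2}$ is not marginal. It occurs whenever the mass at $a$ dominates the rest. It is also the only case where the median-shift exclusion can matter: for $\alpha=\beta=1$, the distribution $((20,60,20),(0,0.5,1))$ with $b=0.9$ gives $z=b$ and unchanged $\MEC$. Yet you never say where that hypothesis enters. Your attempts at this case fail for the following reasons.
\begin{itemize}
\item \emph{Mirroring sketch.} You need a \emph{lower} bound on $R(z)-R(z')$ with $z'<z$. The first-order condition at $z$ plus convexity only gives the upper bound $R(z)-R(z')\le R'(z)(z-z')$. Optimality of $\pi'$ at $z$ tells you $R$ is increasing at $z$, not on $[z',z]$.
\item \emph{Unproved location of $z$.} You also assume $z\le b$ (so that $z'\ge a$) without proof.
\item \emph{Fallback path argument.} The crux is that the optimum along the path stays strictly left of the moving mass. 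You defer this to Lemma~\ref{lem:yopt-closest-to-largest-general}, but that lemma is about two-point distributions and says nothing for general $\pi$.
\item \emph{Merging along the path.} For $\alpha>1$ the path $\pi_{a\to s}$ merges with intermediate support points, so $s\mapsto\MEC(\pi_{a\to s})$ need not be continuous.
\item \emph{The case $\beta=1$.} There an envelope derivative gives at best monotonicity, not strict increase.
\end{itemize}

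\textbf{How the paper supplies the missing step.} It linearizes: $\MEC(\pi)=\MEC[1,\beta](\pi^\alpha)$, and superadditivity is used once, for the final merge. For $\beta>1$ with weights $\pi^\alpha$, it subtracts the first-order conditions of $\pi$ and $\pi_{a\to b}$. Strict monotonicity of $f(t)=\beta|t|^{\beta-1}\mathrm{sign}(t)$ then gives $0<y^*_{a\to b}-y^*<b-a$, so the envelope derivative in $b$ is positive. For $\beta=1$, the largest median of $\pi^\alpha$ stays a median after a non-median shift, and the increment is exactly $\pi(a)^\alpha(b-a)$.

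\textbf{How your mirroring idea can be rescued.} The anchor must be the optimality of $\pi$, not of $\pi'$. Let $\tilde R(y)\defsymbol\EC(y)-p^\alpha|a-y|^\beta$, the effort of all masses of $\pi$ except the one at $a$ (including $q$ at $b$).
\begin{itemize}
\item For $\beta>1$, $\tilde R'(a)=\frac{d}{dy}\EC(a)\ge0$, because $y^*\le a$.
\item For $\beta=1$, the right derivative of $\tilde R$ at $a$ is the $\alpha$-weight strictly left of $a$ minus that strictly right of $a$. This is at least $p^\alpha>0$ whenever the shift is not a median-shift; this is exactly where the hypothesis is used.
\end{itemize}
So $\tilde R$ is nondecreasing on $[a,1]$. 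Superadditivity gives $\EC[\alpha,\beta][\pi'](z)\ge\tilde R(z)+p^\alpha|b-z|^\beta$.
\begin{itemize}
\item For $\frac{a+b}{2}<z\le b$, this is at least $\tilde R(z')+p^\alpha(z'-a)^\beta=\EC(z')\ge\MEC(\pi)$.
\item For $z>b$, it is at least $\tilde R(b)\ge\tilde R(a)=\EC(a)$.
\end{itemize}
Strictness follows because $\tilde R$ is strictly increasing on $[a,1]$. This fails only when $\beta>1$ and $\pi$ is concentrated at $a$. In that case the statement itself is false: $\MEC=0$ both before and after the shift. The paper's proof tacitly excludes this case through the strict monotonicity of $g$.
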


The case of a median-shift is explicitly excluded. The proof divides into the cases $\beta=1$ and $\beta>1$. If $\beta=1$, we show that $\pi$ and $\pi_{a\to b}$ have a common median $m$. By Cor.~\ref{cor:mec-special-cases}, $\MEC(\pi_{a\to b})-\MEC(\pi)=\EC[\alpha,\beta][\pi_{a\to b}](m)-\EC[\alpha,\beta][\pi](m)$, and this is easily seen to be strictly positive. If $\beta>1$, let $y^*$ be the unique optimal consensus point of $\pi$. We consider the function $F(t)\defsymbol \MEC(\pi_{a\to t})$, where $t$ ranges over points lying farther from $y^*$ than $a$ on the same side of $y^*$. We then prove that $F'(t)>0$. It follows that $F(b)>F(a)=\MEC(\pi)$, hence $\MEC(\pi_{a\to b})>\MEC(\pi)$.

The next theorem gives the second part of our strategy by reflecting a simple principle: when all opinion is concentrated at the two extremes, polarization is greatest when the two opposing groups are equal.

\begin{restatable}[Two-Point Polarization]{theorem}{twoPointPolarization} \label{thm:two_massed_distribution:body}  For $A >0$, 
let \(\pi_A\in\Dset\) satisfy
\(\pi_A(0)=A\) and \(\pi_A(1)=\Mass{\pi_A}-A\). Then  $\MEC(\pi_A) \leq  \MEC(\pi_A^*)$.
\end{restatable}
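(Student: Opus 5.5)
Write $M=\Mass{\pi_A}$ and $B=M-A\ge 0$. By Theorem~\ref{th:mec-pi-star} the target is $\MEC(\pi_A^*)=2(M/2)^\alpha(1/2)^\beta$. If $B=0$ then $\pi_A$ is a consensus and $\MEC(\pi_A)=0$, so assume $A,B>0$. The plan is to avoid the exact minimizer. We bound $\MEC(\pi_A)$ from above by evaluating the effort at a well-chosen point, and then show that this bound is at most the extremal value.

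First I would evaluate the effort at the midpoint $y=1/2$:
\[
\EC[\alpha,\beta][\pi_A](1/2)=(A^\alpha+B^\alpha)(1/2)^\beta .
\]
This reduces the claim to showing $A^\alpha+B^\alpha\le 2(M/2)^\alpha$. However, for $\alpha>1$ convexity of $t\mapsto t^\alpha$ gives the reverse inequality $A^\alpha+B^\alpha\ge 2((A+B)/2)^\alpha$. So the midpoint is the wrong test point, and I would instead choose $y$ depending on $A$ and $B$.

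The natural choice is an endpoint. Suppose $A\le B$ and take $y=1$, so that $\EC[\alpha,\beta][\pi_A](1)=A^\alpha$. Since $A\le M/2$, we get
\[
\MEC(\pi_A)\le A^\alpha\le (M/2)^\alpha .
\]
This is the main obstacle: we need $(M/2)^\alpha\le 2(M/2)^\alpha 2^{-\beta}$, which holds only when $\beta\le1$. So neither single test point works for all $\beta\ge1$, and the bound has to be optimized over $y$.

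Normalize with Corollary~\ref{cor:condition-H-normalization}: by homogeneity we may take $M=2$, so $A+B=2$. Let $a=A^\alpha$ and $b=B^\alpha$. For $\beta>1$, the one-variable minimum of $a y^\beta+b(1-y)^\beta$ (found from the first-order condition, as in Lemma~\ref{lem:yopt-closest-to-largest-general}) is
\[
\frac{ab}{\bigl(a^{1/(\beta-1)}+b^{1/(\beta-1)}\bigr)^{\beta-1}} .
\]
With $\gamma=1/(\beta-1)$ and $p=\alpha\gamma$, this becomes
\[
\MEC(\pi_A)=\frac{(AB)^\alpha}{(A^{p}+B^{p})^{\beta-1}} .
\]
It then suffices to show
\[
(AB)^{\alpha}\le 2^{1-\beta}\,(A^p+B^p)^{\beta-1}.
\]
Raising both sides to the power $\gamma$, this is equivalent to
\[
\frac{(AB)^{p}}{A^p+B^p}\le \tfrac12 .
\]
Dividing, this is the inequality $\frac{1}{A^{-p}+B^{-p}}\le\frac12$, that is, $A^{-p}+B^{-p}\ge 2$. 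This follows from AM--GM, since $AB\le1$ when $A+B=2$: indeed $A^{-p}+B^{-p}\ge 2(AB)^{-p/2}\ge 2$.

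The case $\beta=1$ follows from Corollary~\ref{cor:mec-special-cases}: the median of $\overline{\pi_A^\alpha}$ sits at the heavier point, so $\MEC(\pi_A)=\min(A,B)^\alpha\le1=\MEC(\pi_A^*)$. Equality holds exactly when $A=B$.
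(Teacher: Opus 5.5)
Your proof is correct, but it takes a different route from the paper's. Both arguments use the first-order condition for \(\beta>1\). The paper, however, never solves for the minimizer \(y_A\). It uses that condition to write \(g(A)=A^\alpha y_A^{\beta-1}=(k-A)^\alpha(1-y_A)^{\beta-1}\) and multiplies the two expressions to get \(g(A)^2=[A(k-A)]^\alpha[y_A(1-y_A)]^{\beta-1}\). It then bounds the factors separately by \((k/2)^2\) and \(1/4\), and these bounds combine to exactly \(g(k/2)^2\).

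You instead solve for the minimizer. Using \(\alpha=p(\beta-1)\), your closed form simplifies to
\[
\MEC(\pi_A)=\bigl(A^{-p}+B^{-p}\bigr)^{1-\beta}=\Bigl(\bigl(A^{-p}+B^{-p}\bigr)^{-1/p}\Bigr)^{\alpha},
\]
which holds for any \(A,B>0\) without normalization. After that, the claim is the elementary inequality \(A^{-p}+B^{-p}\ge 2\) when \(A+B=2\).

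The paper's product trick is shorter and avoids computing \(y_A\). Your formula carries more information:
\begin{itemize}
\item It tends to your \(\beta=1\) value \(\min(A,B)^\alpha\) as \(p\to\infty\), so the two regimes fit into one expression.
\item Since \(A\mapsto A^{-p}+(M-A)^{-p}\) is strictly convex and symmetric about \(M/2\), it shows that \(g\) is strictly increasing on \((0,M/2]\). That is the monotonicity described in the proof sketch in Section~\ref{sec:shifting-away}, which the appendix argument does not actually establish.
\end{itemize}

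Two minor points. Scaling to mass \(2\) rather than \(1\) follows directly from Theorem~\ref{th:multbyscalar} together with \((\lambda\pi)^*=\lambda\pi^*\), not literally from item~1 of Corollary~\ref{cor:condition-H-normalization}. The exploratory midpoint and endpoint paragraphs are dead ends and can be dropped from a final write-up.
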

 
Intuitively, as $A$ tends to $0$, the distribution $\pi_A$ approaches the consensus distribution concentrated at the extreme right, whose polarization is $0$. The proof of the theorem proceeds by showing that the minimum effort required for consensus increases as the imbalance between the two extreme masses diminishes. It follows that this effort is maximal when the two masses are equal, that is, at the balanced distribution $\pi_A^*$.

More precisely, the proof reduces the question to establishing the maximum value of the function $g(A)\defsymbol \MEC(\pi_A)$ for $A\in[0,1]$. Since $\MEC$ satisfies Condition~H (Cor.~\ref{cor:condition-H-normalization}), it is enough to consider distributions $\pi_A$ with total mass $\Mass{\pi_A}=1$. By the Reflection Lemma (Lem.~\ref{lem:mec-reflection}), we have $g(A)=\MEC(\pi_A)=\MEC(\pi_A^{\mathrm{ref}(1/2)})=g(1-A)$; thus $g$ is symmetric about $1/2$. The remainder of the proof shows that $g$ increases on $[0,1/2]$. It then follows that $g$ attains its maximum at $A=1/2$. Hence $\pi_A^*$ maximizes $\MEC$ among two-point distributions supported at the extremes.

Finally, we arrive to the main theorem.
\begin{restatable}{theorem}{mecIsPolarization}\label{th:mec-is-a-polarization-measure} The function $\MEC$ is a polarization measure (Def.~\ref{pol:def}).
\end{restatable}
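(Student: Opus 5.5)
Condition (1) of Def.~\ref{pol:def} was already observed at the start of this section: if $\pi$ is a consensus at $y$, then $\EC(y)=0$ and so $\MEC(\pi)=0$. The work is in condition (2). Fix $\pi'\in\Dset_k$ and let $\pi$ be extremal with $\Mass{\pi}=k$, so $\pi=\pi'^*$. We must show $\MEC(\pi')\le\MEC(\pi^*)$. The plan is to push $\pi'$ to the extremes through a finite chain of shifts, each of which does not decrease $\MEC$, and then apply Theorem~\ref{thm:two_massed_distribution:body}.

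\textbf{Case $\beta>1$.} Let $y^*$ be the unique optimal consensus point of the current distribution $\rho$ (Theorem~\ref{convexity}). Take any support point $a\notin\{0,1\}$. If $a<y^*$, set $b=0$; if $a>y^*$, set $b=1$. Then $\rho_{a\to b}$ is a shift away from consensus, and since $\beta>1$ it is not a median-shift. By Theorem~\ref{thm:mec-increases:body}, $\MEC$ strictly increases.

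A remaining possibility is that the only interior support point sits exactly at $a=y^*$. Then either $b=0$ or $b=1$ gives $y^*\le a<b$ or $b<a\le y^*$, so the move is still a shift away from consensus. Each step strictly reduces the number of support points in $(0,1)$, so after finitely many steps the distribution is supported in $\{0,1\}$ with the same mass $k$. If it is a consensus, its value is $0\le\MEC(\pi^*)$. Otherwise it is some $\pi_A$, and Theorem~\ref{thm:two_massed_distribution:body} gives $\MEC(\pi_A)\le\MEC(\pi_A^*)=\MEC(\pi^*)$, since $\pi_A^*=\pi^*$ by mass equality.

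\textbf{Case $\beta=1$.} Here optimal consensus points form a median interval, and moves from the unique median are excluded. First move every support point strictly outside the set of optimal points to the corresponding extreme; these moves are legitimate and strictly increase $\MEC$. For points inside the optimal interval but not at its endpoints, I expect that when the set of optimal points is a nondegenerate interval $[m_1,m_2]$, any support point in the interior of that interval cannot exist, since the medians are flat between consecutive support points. An endpoint support point $a=m_1$ with $m_1<m_2$ satisfies $b<a\le y^*$ for all optimal $y^*$, so moving it to $0$ is allowed. This leaves a final configuration supported on $\{0,y^*,1\}$ with $y^*$ the unique median.

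For that final three-point configuration I would bypass Theorem~\ref{thm:mec-increases:body} and argue directly from Cor.~\ref{cor:mec-special-cases}. With masses $p,q,r$ at $0,y^*,1$, we get $\MEC(\rho)=\EC[\alpha,1](y^*)=p^\alpha y^*+r^\alpha(1-y^*)\le\max(p,r)^\alpha$. Then:
\begin{itemize}
\item If $\alpha=1$, the median condition gives $p,r\le k/2$, hence $\MEC(\rho)\le k/2=\MEC(\pi^*)$ by Theorem~\ref{th:mec-pi-star}.
\item If $\alpha>1$, this crude bound fails. Instead, merge the middle mass into whichever extreme keeps the cost non-decreasing and conclude with Theorem~\ref{thm:two_massed_distribution:body}. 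Alternatively, use linearity in $y^*$ together with concavity-type estimates to compare directly with $2(k/2)^\alpha/2$.
\end{itemize}

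\textbf{Main obstacle.} This $\beta=1$ endgame is the step I expect to be hardest. Median-shifts may leave $\MEC$ unchanged or even change it unfavorably, so the bookkeeping of which moves are permitted, plus the final comparison for $\alpha>1$, needs care. My first attempt there would be to show that $\MEC$ is non-decreasing under a median-shift to the heavier extreme.
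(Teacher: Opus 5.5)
Your reduction is the same as the paper's. You use a chain of non-median shifts away from consensus via Theorem~\ref{thm:mec-increases:body}. For $\beta>1$ the chain ends on $\{0,1\}$ and is closed by Theorem~\ref{thm:two_massed_distribution:body}. For $\beta=1$ it ends on $\{0,y^*,1\}$ with $y^*$ the unique optimal point. The $\beta>1$ case and the $\beta=1$, $\alpha=1$ endgame are fine.

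The gap is the $\beta=1$, $\alpha>1$ endgame. You leave it unresolved, and your diagnosis that the crude bound $\max(p,r)^\alpha\le (k/2)^\alpha$ fails there is wrong. The missing idea is superadditivity of $t\mapsto t^\alpha$ for $\alpha\ge1$, namely $q^\alpha+r^\alpha\le (q+r)^\alpha$. Since $y^*\in(0,1)$ is a median of $\rho^\alpha$ (Cor.~\ref{cor:mec-special-cases}), the $\rho^\alpha$-mass at or to the right of $y^*$ is at least half. This gives $p^\alpha\le q^\alpha+r^\alpha\le (q+r)^\alpha$, so $p\le q+r$, i.e.\ $p\le k/2$; symmetrically $r\le k/2$. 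Hence, by Theorem~\ref{th:mec-pi-star}, for every $\alpha\ge 1$ and with no case split,
\[
\MEC[\alpha,1](\rho)=p^\alpha y^*+r^\alpha(1-y^*)\le \left(\tfrac{k}{2}\right)^\alpha=\MEC[\alpha,1](\pi^*).
\]
This is exactly how the paper finishes. It argues that if $\mu(0)\ge \Mass{\pi}/2$ then $0$ would also be a median of $\mu^\alpha$, contradicting uniqueness; that step silently uses the same inequality.

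Your fallback plan also goes the wrong way: a median-shift to the \emph{heavier} extreme can decrease MEC. Take $\alpha=2$, $\beta=1$ and $\rho=((30,40,20),(0,\tfrac12,1))$. The unique median of $\rho^2$ is $\tfrac12$ and $\MEC[2,1](\rho)=650$. Merging the middle mass into $0$ gives $((70,20),(0,1))$, whose $\MEC[2,1]$ is $400$.

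Merging into the \emph{lighter} extreme does work. If $p\ge r$, it yields a two-point distribution on $\{0,1\}$ with value $\min\{p^\alpha,(q+r)^\alpha\}$. This is $\ge \MEC[\alpha,1](\rho)$ provided it equals $p^\alpha$, and Theorem~\ref{thm:two_massed_distribution:body} then applies. But that equality is again the inequality $p\le q+r$, so superadditivity is the step you cannot avoid.
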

As mentioned above, the first condition of Def.~\ref{pol:def} is immediate. For the second, namely $\MEC(\pi)\leq \MEC(\pi^*)$, we argue by repeatedly applying Th.~\ref{thm:mec-increases:body}, each time performing a non-median shift away from consensus. If $\beta>1$, this process continues until one reaches a distribution $\pi'$ more polarized than $\pi$, with $\Mass{\pi'}=\Mass{\pi}$ and $\support(\pi')=\{0,1\}$. The conclusion then follows from Th.~\ref{thm:two_massed_distribution:body}.

If $\beta=1$, the same argument applies, except that the final distribution $\pi'$ may also have support $\{0,y^*,1\}$, where $0<y^*<1$ is the unique median, and therefore the unique optimal consensus point, of $\pi'$. At that stage no further non-median shift away from consensus is possible.

However, since $y^*$ is the unique median of $\pi'$, we have $\pi'(0)<\Mass{\pi}/2=\pi^*(0)$ and $\pi'(1)<\Mass{\pi}/2=\pi^*(1)$. Moreover, every point of $[0,1]$ is a median of $\pi^*$, and hence $y^*$ is an optimal consensus point for $\pi^*$. It follows that $\MEC(\pi')=\pi'(0)^\alpha y^*+\pi'(1)^\alpha(1-y^*)<\pi^*(0)^\alpha y^*+\pi^*(1)^\alpha(1-y^*)=\MEC(\pi^*)$. This completes the proof.

\section{Polarization by Effective Antagonism}\label{sec:er-comparison}

In their seminal paper \cite{esteban1994measurement}, Esteban and Ray laid down three axioms that a measure of polarization should satisfy. They also introduced a parametric measure, which we denote by \ERLabel, and stated that it is the only measure of certain form that satisfies these axioms.

\subsection{The Esteban--Ray Measure}

The Esteban--Ray measure (\ERLabel) is essentially pairwise. It adds up the tension between every pair of groups, weighting opinion difference by group size and by a term emphasizing group identity.

\begin{definition}[Esteban--Ray measure \cite{esteban1994measurement}]\label{ER:measure} The Esteban--Ray measure \(\ERLabel:\Dset \to \PositiveReals\) is given by
\begin{equation}
\ER(\pi) \defsymbol K \sum_{x\in\support(\pi)}\sum_{y\in\support(\pi)}\pi(x)^{1+\alpha}\pi(y)|x - y|,
\end{equation}
where $K > 0$ is a normalization constant, and $\alpha \in (0, \alpha^*]$, with $\alpha^* \approx 1.6$ empirically derived.
\end{definition}
From the results \cite{esteban1994measurement} it follows that \ERLabel{} is a polarization measure in the sense of Def.~\ref{pol:def}. The guiding idea in Def.~\ref{ER:measure} is that polarization grows when there are large groups that are far apart. The $\MEC$ measure also captures this but from a different perspective: It asks what is the least effort needed to bring the whole distribution to a consensus.

The optimization-based character of $\MEC$ yields information that \ERLabel{} does not provide, namely, the optimal consensus points. This is important for two reasons. On the one hand, it supplies a geometric structure that is useful in proving properties such as the increase of polarization under shifts away from optimal consensus (Th.~\ref{thm:mec-increases:body}) or that minorities always bear the greater effort (Th.~\ref{thm:minority-principle-general}). On the other hand, it has a direct social significance, since it indicates the points at which consensus can be achieved with least effort in a given population.

Furthermore, like \ERLabel, $\MEC$ uses the parameter $\alpha$ to reflect group identity. It differs from \ERLabel, however, in introducing a further parameter, $\beta$, which governs how severely distance from consensus is penalized. This not only allows $\MEC$ to distinguish more sharply between moderate and extreme deviations, but also determines the structure of the set of optimal consensus points, since for $\beta>1$ the optimal consensus point is unique, whereas for $\beta=1$ it need not be.

The two measures agree on an important intuition: societies split into two large and distant blocs should be highly polarized. In particular, the symmetric two-extreme distribution $\pi^*$ gives  maximal polarization under both approaches. But they justify this in different ways: \ERLabel{} sees two large hostile camps while $\MEC$ sees the greatest resistance to consensus.

\subsection{The \texorpdfstring{\ERLabel}{ER} Axioms}\label{sec:er-axioms}

Despite proceeding from a different conception of polarization, $\MEC$ satisfies all three of Esteban and Ray's axioms throughout its standing parameter range $\alpha,\beta\geq 1$. The only restriction is that Axiom~1 requires the strict inequality $\alpha>1$. This is not specific to $\MEC$: under the parametrization of \cite{esteban1994measurement}, the mass exponent is $1+\alpha_{\ERLabel}$ with $\alpha_{\ERLabel}>0$, which is the same condition (mass exponent strictly greater than $1$) expressed in a different variable. The strict inequality is, in this sense, intrinsic to the identification dimension of polarization rather than an accident of either formulation. The need for it is also natural in light of Cor.~\ref{cor:mec-special-cases}: when $\alpha=1$, $\MEC$ reduces to a standard dispersion measure (Earth Mover's Distance, mean absolute deviation about a median, or variance), and dispersion measures are known not to capture the consolidation phenomenon described by Axiom~1.

The axioms stated below distill the conceptual insights of Esteban and Ray into precise, testable properties governing how the consolidation of proximate clusters, increased separation among groups, and redistribution of mass toward the extremes should affect a polarization index. They thus serve as normative criteria: any measure that satisfies them faithfully captures the core sociological intuitions about group cohesion and antagonism \cite{kawada2018characterization}.

\paragraph{Axiom 1 (Increased identification).}
The first \ERLabel{} axiom says, in effect, that polarization increases when two groups of size $q$, sufficiently small relative to a larger third group of size $p$ and sufficiently close to one another, are fused into one.

\begin{axiom}[Polarization via increased identification]\label{axiom:1}
A polarization measure $\Pol:\Dset\to \PositiveReals$ satisfies \ERLabel{} Axiom~1 if it satisfies the \emph{increased identification condition}: For any $p>0$ and any $x>0$, there exist $\varepsilon>0$ and $\mu>0$ such that, for all $y>x$ and all $q<p$, if $y-x<\varepsilon$ and $0<q<\mu p$, then
\[
\Pol((p,q,q),(0,x,y)) < \Pol\!\left((p,2q),\left(0,\frac{x+y}{2}\right)\right).
\]
\begin{figure}[H]
    \centering
\includegraphics[height=0.16\textheight,keepaspectratio]{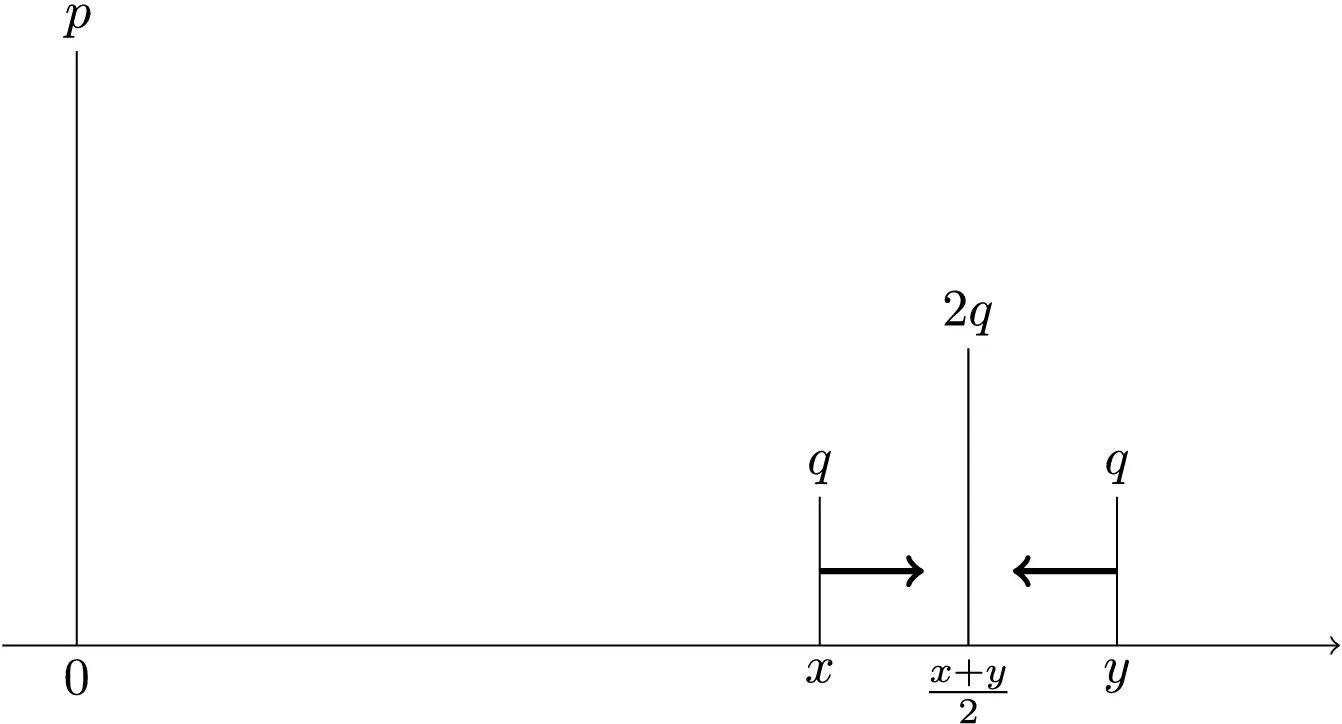}
    \caption{Esteban--Ray Axiom 1.}
    \label{fig:eraxiom1}
\end{figure}
\end{axiom}

\noindent Intuitively, consolidating smaller, closely positioned groups into fewer, larger groups enhances identification and thus polarization.

It turns out that the condition $\alpha>1$ is not merely sufficient, but also necessary, for our $\MEC$ functions to satisfy Axiom~1.

\begin{restatable}{theorem}{mecAxiomOne}\label{th:mec-axiom-1} The polarization measure $\MEC$ satisfies Axiom~1 if and only if $\alpha > 1$.
\end{restatable}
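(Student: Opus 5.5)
The plan is to compare the two effort functions pointwise and split into the two directions. Throughout, write $\pi_1\defsymbol((p,q,q),(0,x,y))$ and $\pi_2\defsymbol((p,2q),(0,m))$. Here $m=(x+y)/2$ and $\delta=(y-x)/2$, so that $x=m-\delta$ and $y=m+\delta$. For every $z\in[0,1]$ the two effort functions differ only in the contribution of the small groups:
\[
\EC[\alpha,\beta][\pi_1](z)-\EC[\alpha,\beta][\pi_2](z)
= q^\alpha\bigl(|x-z|^\beta+|y-z|^\beta\bigr)-2^\alpha q^\alpha|m-z|^\beta .
\]

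\emph{Necessity ($\alpha=1$ fails).} Suppose $\alpha=1$. Since $t\mapsto|t-z|^\beta$ is convex for $\beta\ge1$, Jensen's inequality gives $2|m-z|^\beta\le|x-z|^\beta+|y-z|^\beta$ for every $z$. Hence the difference displayed above is nonnegative, so $\EC[1,\beta][\pi_2](z)\le\EC[1,\beta][\pi_1](z)$ for all $z$. Taking minima yields $\MEC[1,\beta](\pi_2)\le\MEC[1,\beta](\pi_1)$ for all $p,q,x,y$. The strict inequality required by Axiom~1 therefore never holds, whatever $\varepsilon$ and $\mu$ are chosen.

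\emph{Sufficiency ($\alpha>1$).} Let $z_2$ be an optimal consensus point of $\pi_2$. Then
\[
\MEC(\pi_1)\le\EC[\alpha,\beta][\pi_1](z_2)
\quad\text{and}\quad
\MEC(\pi_2)=\EC[\alpha,\beta][\pi_2](z_2).
\]
It therefore suffices to show that
\[
|x-z_2|^\beta+|y-z_2|^\beta<2^\alpha|m-z_2|^\beta .
\]
I would proceed in two steps.

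First, I would localize $z_2$ by showing $z_2\le x/2$ whenever $q<\mu p$ for a suitable $\mu$ depending on $p$ and $x$ (and $\varepsilon\le x$).
\begin{enumerate}
\item If $\beta=1$, Cor.~\ref{cor:mec-special-cases} says $z_2$ is a median of $\overline{\pi_2^\alpha}$. This median is $0$ as soon as $(2q)^\alpha<p^\alpha$.
\item If $\beta>1$, the effort function is differentiable and convex by Th.~\ref{convexity}. Its derivative at $x/2$ is
\[
\beta p^\alpha (x/2)^{\beta-1}-\beta(2q)^\alpha(m-x/2)^{\beta-1}.
\]
This is positive once $(2q)^\alpha\,(x/2+\varepsilon)^{\beta-1}<p^\alpha(x/2)^{\beta-1}$, which holds for $q/p$ small. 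Convexity then forces the minimizer $z_2$ to lie in $[0,x/2]$.
\end{enumerate}

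Second, with $z_2\le x/2$, set $d=m-z_2$. Then $d\ge m-x/2\ge x/2$ and $z_2\le x<m<y$, so
\[
|x-z_2|^\beta+|y-z_2|^\beta=(d-\delta)^\beta+(d+\delta)^\beta\le 2(d+\delta)^\beta=2d^\beta(1+\delta/d)^\beta .
\]
Since $\delta<\varepsilon/2$ and $d\ge x/2$, we have $\delta/d<\varepsilon/x$. Because $\alpha>1$ gives $2^{\alpha-1}>1$, we can choose $\varepsilon$ so small that $(1+\varepsilon/x)^\beta<2^{\alpha-1}$. This yields the required strict inequality $(d-\delta)^\beta+(d+\delta)^\beta<2^\alpha d^\beta$, and hence $\MEC(\pi_1)<\MEC(\pi_2)$.

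The main obstacle is the localization step. The bound on $\mu$ must be uniform in $y\in(x,x+\varepsilon)$, which is why I bound $m-x/2$ by $x/2+\varepsilon$. One must also treat the non-differentiable case $\beta=1$ separately through the median characterization. Once $z_2$ is kept away from the small groups, the strict gain from $2^\alpha>2$ absorbs the small spread $\delta$.
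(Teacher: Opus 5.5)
Your proof is correct. The sufficiency half is essentially the paper's argument. You evaluate the unmerged effort at an optimal point $z_2$ of the merged distribution. You use smallness of $q/p$ to place $z_2$ in $[0,x/2]$, via the derivative at $x/2$ when $\beta>1$. You then show $(d-\delta)^\beta+(d+\delta)^\beta<2^\alpha d^\beta$ there. You differ from the paper in two minor ways. First, you get an explicit $\varepsilon$ from $(1+\varepsilon/x)^\beta<2^{\alpha-1}$ using the crude bound $(d-\delta)^\beta+(d+\delta)^\beta\le 2(d+\delta)^\beta$. The paper instead extracts $\varepsilon_0$ non-constructively from continuity of $\Delta(A,h)$ on $[x/2,x]$. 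Second, you fold $\beta=1$ into the same final estimate rather than computing both values in closed form via medians.

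The necessity half takes a genuinely different and simpler route. The paper splits on $\beta$. For $\beta=1$ it computes both values by medians and gets equality when $p>2q$. For $\beta>1$ it localizes the optimum $z_\pi$ of the unmerged distribution in $(0,x/2)$ so that the identity $E-\widetilde E=q\,J(A,h)$ applies there. You observe instead that, since $2^\alpha=2$, midpoint convexity of $t\mapsto|t-z|^\beta$ holds for every $z$. This removes any need for localization or smallness assumptions and gives $\MEC[1,\beta](\pi_2)\le\MEC[1,\beta](\pi_1)$ for all $p,q,x,y$. For $\beta>1$, strict convexity even makes this strict for all parameters, which subsumes the paper's small-$q$, small-$h$ conclusion. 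The only extra information the paper's route yields is the exact equality when $\beta=1$ and $p>2q$. Both arguments tacitly take $x<1$ so that admissible $y\in(x,1]$ exist; this is harmless when exhibiting a failure.
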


An immediate consequence of this theorem is that the instances $\MEC[1,1]$ and $\MEC[1,2]$ do not satisfy Axiom~1. In view of Th.~\ref{th:mec-as-emd} and Cor.~\ref{cor:mec-special-cases}, these correspond to well-known probabilistic measures, namely Earth Mover's Distance, the mean absolute deviation about a median, and the variance.

The proof of Th.~\ref{th:mec-axiom-1} compares the effort functions of 
\(\pi=((p,q,q),(0,x,y))\) and the merged distribution
\(\tilde\pi=((p,2q),(0,(x+y)/2))\). Writing \(h=y-x\), the difference of these functions at a
point \(z\in(0,x)\) can be written as \(q^\alpha\Delta(A,h)\), where
\(A=x-z\). The quantity \(\Delta(A,h)\) splits into two parts: a non-negative
term coming from the convexity of \(t\mapsto t^\beta\), and a negative term
proportional to \(2^\alpha-2\), which measures the gain in identification
produced by merging the two masses \(q\) and \(q\) into a single mass \(2q\).

The smallness of the mass \(q\) with respect to $p$ ensures that the relevant optimal consensus points lie near the mass \(p\) at \(0\), so that the
comparison is made in the right region. When \(\beta>1\), the smallness of the distance
\(h\) ensures that the positive convexity term remains controlled. If
\(\alpha>1\), the negative identification term is present and, for small enough
\(h\), it dominates; hence the merged distribution has larger MEC. If
\(\alpha=1\), this term disappears. One then gets equality when \(\beta=1\),
and the reverse inequality when \(\beta>1\). Thus the threshold is exactly
\(\alpha>1\).

\paragraph{Axiom 2 (Increased alienation).}
The second \ERLabel{} axiom says that polarization increases when one group moves towards another, more distant group, thereby widening the separation between groups at opposing extremes.

\begin{axiom}[Polarization via increased alienation]\label{axiom:2}
We say a polarization measure $\Pol:\Dset\to\PositiveReals$ \emph{satisfies \ERLabel{} Axiom~2} if, for every $p,q,r>0$ with $p>r$, every $x,y>0$ with $\lvert y-x\rvert<x<y$, and every $\Delta\in(0,y-x)$, the inequality
\[
\Pol\bigl((p,q,r),(0,x,y)\bigr)<\Pol\bigl((p,q,r),(0,x+\Delta,y)\bigr)
\]
holds.
\end{axiom}

\begin{figure}[H]
    \centering
    \includegraphics[height=0.16\textheight,keepaspectratio]{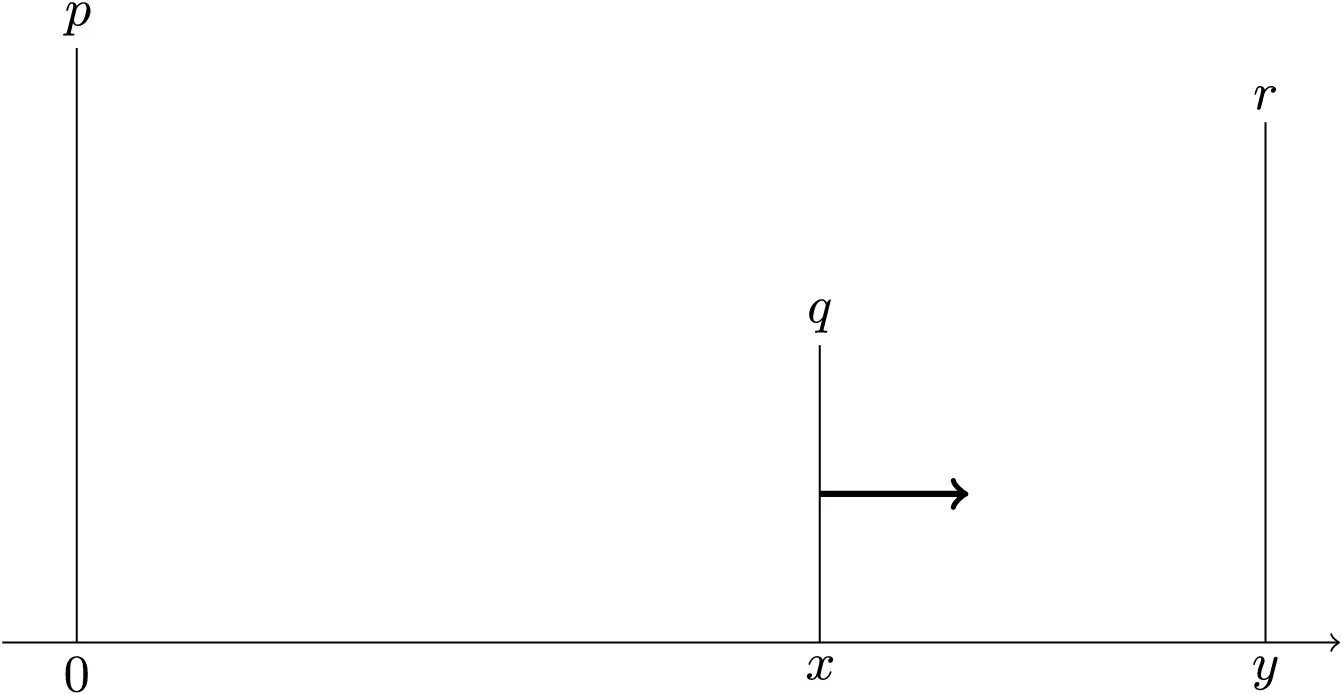}
    \caption{Esteban--Ray Axiom 2.}
    \label{fig:eraxiom2}
\end{figure}

\noindent This axiom captures the increased antagonism resulting from clearer separation among groups positioned at opposing extremes.

\begin{restatable}{theorem}{mecAxiomTwo}\label{th:mec-axiom-2}
The polarization measure $\MEC$ satisfies Axiom~2.
\end{restatable}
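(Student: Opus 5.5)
The plan is to reduce Axiom~2 to Theorem~\ref{thm:mec-increases:body}, by showing that moving the middle group from $x$ to $x+\Delta$ is a shift away from consensus. Write $\pi=((p,q,r),(0,x,y))$ and $\pi'=((p,q,r),(0,x+\Delta,y))$. Since $\Delta<y-x$, we have $x<x+\Delta<y$, so no merging occurs and $\pi'=\pi_{x\to x+\Delta}$ in the notation of Def.~\ref{def:movement}.

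\emph{Step 1: all optimal consensus points of $\pi$ lie at or left of $x$.}
The key inequality combines the hypotheses $p>r$ and $y-x<x$. For $\beta>1$, $\EC[\alpha,\beta][\pi]$ is differentiable, and
\[
\tfrac{d}{dz}\EC[\alpha,\beta][\pi](x)=\beta\bigl(p^\alpha x^{\beta-1}-r^\alpha (y-x)^{\beta-1}\bigr)>0 ,
\]
because $p^\alpha>r^\alpha$ and $x^{\beta-1}\ge (y-x)^{\beta-1}$. For $\beta=1$, the right derivative at $x$ is $p^\alpha+q^\alpha-r^\alpha>0$. By convexity (Th.~\ref{convexity}), the effort function is then strictly increasing on $[x,1]$. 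Hence every optimal point $y^*$ satisfies $y^*\le x$, and $y^*<x$ strictly when $\beta>1$. It follows that $y^*\le x<x+\Delta$ for every optimal consensus point, so $\pi_{x\to x+\Delta}$ is a shift away from consensus.

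\emph{Step 2: the case $\beta>1$, or $\beta=1$ with $x$ not the unique optimal point.}
In this case the shift is not a median-shift: for $\beta>1$ a median-shift is impossible by definition, and for $\beta=1$ a median-shift would require $x=y^*$ to be the unique optimal point. Theorem~\ref{thm:mec-increases:body} then gives $\MEC(\pi')>\MEC(\pi)$ directly.

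\emph{Step 3: the remaining case, $\beta=1$ with $x$ the unique optimal point.}
This is the main obstacle, since it is exactly the excluded median-shift situation; I would handle it by hand. For $z\le x$, every term of $\EC[\alpha,1][\pi'](z)$ equals the corresponding term of $\EC[\alpha,1][\pi](z)$ except the $q$-term, which grows by $q^\alpha\Delta$. Hence
\[
\EC[\alpha,1][\pi'](z)>\EC[\alpha,1][\pi](z)\ge\MEC(\pi).
\]
For $z\ge x+\Delta$, the slope of $\EC[\alpha,1][\pi']$ is $p^\alpha+q^\alpha-r^\alpha>0$, so it suffices to consider $z\le x+\Delta$.

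On $[x,x+\Delta]$ the function $\EC[\alpha,1][\pi']$ is affine. Its values at the two endpoints are
\[
\EC[\alpha,1][\pi'](x)=\MEC(\pi)+q^\alpha\Delta,
\qquad
\EC[\alpha,1][\pi'](x+\Delta)=\MEC(\pi)+(p^\alpha-r^\alpha)\Delta .
\]
Both exceed $\MEC(\pi)$, again using $p>r$. An affine function on the segment therefore stays above $\MEC(\pi)$ throughout, so $\MEC(\pi')>\MEC(\pi)$ in this case as well.

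The only delicate points are that strict inequality must survive in every case, and the sign computation in Step~1, which is precisely where the hypothesis $|y-x|<x$ together with $p>r$ is used.
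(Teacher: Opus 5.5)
Your proposal is correct and follows the paper's proof: you use $p>r$ and $y-x<x$ to place every optimal consensus point of $\pi$ at or left of $x$, invoke Theorem~\ref{thm:mec-increases:body} whenever the shift is not a median-shift, and treat the $\beta=1$ median-shift by hand. The only real difference is in that last case: the paper identifies $x+\Delta$ as the unique median of $\overline{\pi_\Delta^\alpha}$ and computes the exact gain $(p^\alpha-r^\alpha)\Delta$, whereas you lower-bound $\EC[\alpha,1][\pi']$ piecewise on $[0,x]$, $[x,x+\Delta]$ and $[x+\Delta,1]$, which needs only that $x$ is optimal for $\pi$, not that it is the unique optimum.
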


When $\beta>1$, the hypothesis $p>r$ together with $x>y-x$ forces the unique optimal consensus point $y^*$ of $\pi=((p,q,r),(0,x,y))$ to lie strictly to the left of $x$. The displacement of the middle mass from $x$ to $x+\Delta$ is therefore a shift away from consensus in the sense of Def.~\ref{def:movement}, and Th.~\ref{thm:mec-increases:body} applies.

The case $\beta=1$ requires a small additional argument. Here $x$ may itself be the unique optimal consensus point of $\pi$, in which case the shift $\pi_{x\to x+\Delta}$ is a median-shift, a situation explicitly excluded by Th.~\ref{thm:mec-increases:body}. The asymmetry $p>r$, however, is enough to ensure that even this median-shift strictly increases polarization, since by Cor.~\ref{cor:mec-special-cases} the difference reduces to $(p^\alpha-r^\alpha)\,\Delta>0$. The remaining sub-cases are non-median shifts away from consensus and so are handled by Th.~\ref{thm:mec-increases:body} as before. The full argument is given in Appendix~\ref{appendix:proof-mec-axiom-2}.

\paragraph{Axiom 3 (Dispersion to extremes).}
The third \ERLabel{} axiom says that polarization increases when central mass is redistributed symmetrically toward the two extremes.

\begin{axiom}[Polarization via dispersion to extremes]\label{axiom:3}
A polarization measure $\Pol:\Dset\to\PositiveReals$ satisfies \ERLabel{} Axiom~3 if for any $p,q>0$, any $x,y>0$ with $x=y-x$, and any $\Delta\in(0,q/2)$,
\[
\Pol((p,q,p),(0,x,y)) < \Pol((p+\Delta,q-2\Delta,p+\Delta),(0,x,y)).
\]
\begin{figure}[H]
    \centering
    \includegraphics[height=0.16\textheight,keepaspectratio]{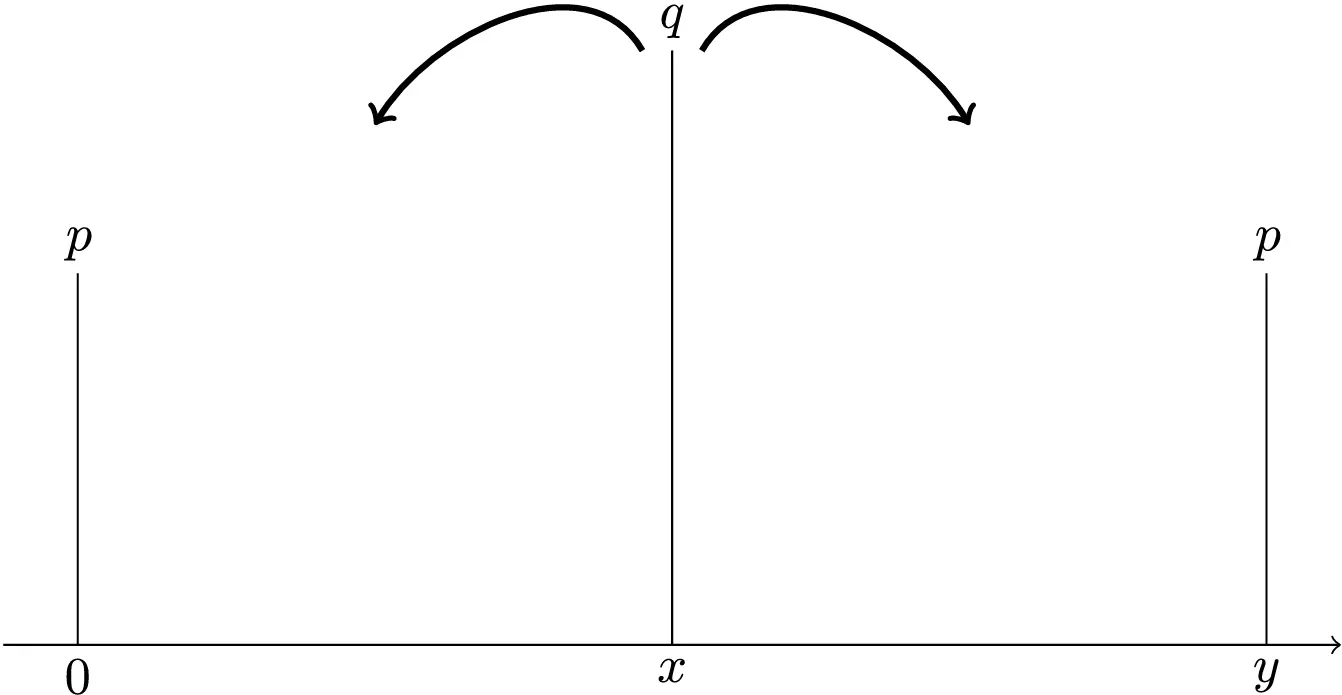}
    \caption{Esteban--Ray Axiom 3.}
    \label{fig:eraxiom3}
\end{figure}
\end{axiom}

\noindent This axiom reflects the intuitive notion that diminishing central positions in favor of extremes increases polarization.

\begin{restatable}{theorem}{mecAxiomThree}\label{th:mec-axiom-3}
The polarization measure $\MEC$ satisfies Axiom~3.
\end{restatable}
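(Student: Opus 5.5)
The plan is to exploit the symmetry of both distributions in Axiom~3 and then apply the Symmetry Lemma (Lem.~\ref{lem:symmetry-mec}), which reduces each $\MEC$ value to an explicit evaluation of the effort function at the center. Write $\pi=((p,q,p),(0,x,y))$ and $\pi'=((p+\Delta,q-2\Delta,p+\Delta),(0,x,y))$. The hypothesis $x=y-x$ gives $y=2x$, and since both are opinion distributions we have $0<x<y=2x\le 1$.

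First I would verify that $x$ is an admissible reflection center for both distributions. The support points are $0,x,2x$, and their reflections $2x-0=2x$, $2x-x=x$, $2x-2x=0$ all lie in $[0,1]$. Next I would check that $\pi^{\mathrm{ref}(x)}=\pi$. The reflection swaps the masses at $0$ and $2x$, and these masses are equal (both $p$), while the middle mass stays at $x$. The same argument applies to $\pi'$, whose outer masses are both $p+\Delta$. We also need $\pi'$ to be a genuine opinion distribution: because $\Delta\in(0,q/2)$, the middle mass $q-2\Delta$ is strictly positive, so the support of $\pi'$ is exactly $\{0,x,2x\}$.

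Applying Lem.~\ref{lem:symmetry-mec} to each distribution gives
\[
\MEC(\pi)=\EC[\alpha,\beta][\pi](x)=p^\alpha x^\beta+q^\alpha\cdot 0+p^\alpha x^\beta=2p^\alpha x^\beta,
\]
and likewise $\MEC(\pi')=2(p+\Delta)^\alpha x^\beta$. Since $x>0$, $\Delta>0$ and $\alpha\ge1$, the function $t\mapsto t^\alpha$ is strictly increasing on $(0,\infty)$, so $2p^\alpha x^\beta<2(p+\Delta)^\alpha x^\beta$. This is the required strict inequality.

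There is no real obstacle here. The only points needing care are confirming admissibility of the reflection center (which uses $y\le 1$) and confirming positivity of the middle mass, so that $\pi'$ has the stated three-point support and the Symmetry Lemma applies. Notably, the middle mass contributes nothing at the optimal point $x$, which is why the comparison holds uniformly in $\alpha,\beta\ge1$.
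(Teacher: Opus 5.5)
Your proposal is correct and follows the paper's proof essentially step for step: both distributions are symmetric about $x$, so the Symmetry Lemma (Lem.~\ref{lem:symmetry-mec}) reduces each $\MEC$ value to the effort at $x$, and the comparison $2p^\alpha x^\beta<2(p+\Delta)^\alpha x^\beta$ is immediate. Your extra checks (that $x$ is an admissible reflection center because $2x\le 1$, and that $q-2\Delta>0$) are details the paper leaves implicit.
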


The argument exploits the symmetry of both distributions about $x$: by Lem.~\ref{lem:symmetry-mec}, the optimal consensus point is $x$ for both, and direct evaluation of the effort at $x$ shows the inequality.

\subsection{Central-Split Monotonicity: A New Property}\label{sec:property-1}

Beyond the three foundational axioms of Esteban and Ray, we introduce a complementary property derived as an intermediate refinement of Axiom~3.

\begin{axiom}[Central-split monotonicity, Property~1]\label{axiom:property-1}
Let \(P:\Dset\to\PositiveReals\) be a polarization measure (Def.~\ref{pol:def}). We say that \(P\) \emph{satisfies Property~1} if for any \(p,q \ge 0\), any positions \(x,y \ge 0\) and \(z>0\) satisfying \(x = y - x\) and \(x \ge z\), and any \(\Delta \in (0, q/2)\):
\[
\Pol((p,q,p), (0,x,y)) < \Pol((p, \Delta, q - 2\Delta, \Delta, p), (0,x - z, x, x + z, y)).
\]
\begin{figure}[H]
    \centering
    \includegraphics[height=0.16\textheight,keepaspectratio]{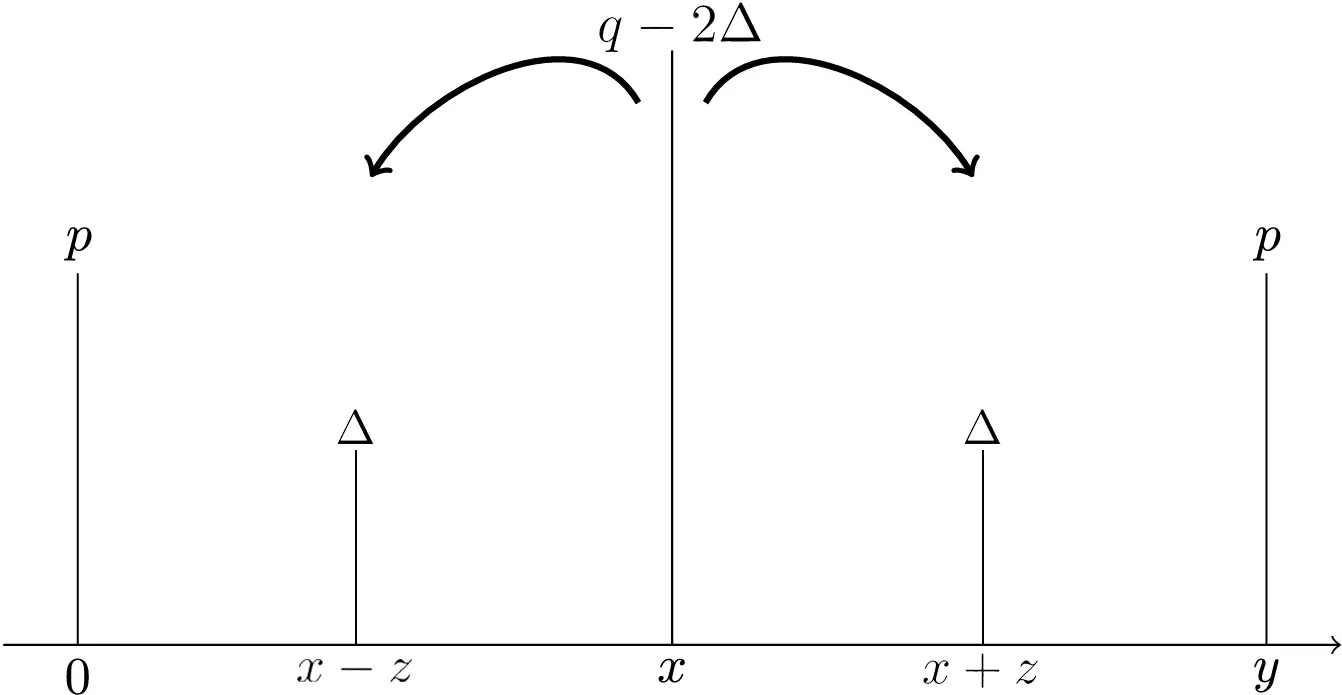}
    \caption{Property 1.}
    \label{fig:property1}
\end{figure}
\end{axiom}

\noindent Intuitively, this property indicates that polarization increases when a single central group splits into smaller groups positioned symmetrically around the original midpoint. The resulting distribution captures emerging nuances in polarization, reflecting subtle but meaningful shifts in inter-group antagonism and identification that the original Axiom~3 (which reallocates central mass directly to the extremes) does not detect.

The property is not implied by the Esteban--Ray axioms. For instance, the standard parametrization \ERLabel[0.8]{} fails it on the central-split instance obtained from
\[
((0.25,0.5,0.25),(0,0.5,1))
\]
by taking \(\Delta=0.1\) and \(z=0.1\), which gives
\[
((0.25,0.1,0.3,0.1,0.25),(0,0.4,0.5,0.6,1)).
\]
The \ERLabel[0.8]{} score decreases from approximately \(0.53\) to \(0.44\), contrary to the monotonicity required by Property~1.

\begin{restatable}{theorem}{mecPropertyOne}\label{th:mec-property-1}
For $\alpha\geq 1$ and $\beta\geq 1$, the polarization measure $\MEC$ satisfies Property~1.
\end{restatable}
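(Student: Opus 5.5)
The plan is to reduce both sides of the inequality to explicit evaluations of the effort function at the midpoint \(x\), using the Symmetry Lemma (Lem.~\ref{lem:symmetry-mec}), and then compare the resulting closed forms directly.

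Write \(\pi=((p,q,p),(0,x,y))\) and \(\pi'=((p,\Delta,q-2\Delta,\Delta,p),(0,x-z,x,x+z,y))\). Since \(\Delta\in(0,q/2)\), we have \(q>0\), and \(x\ge z>0\) gives \(x>0\). Because \(y=2x\le 1\), the point \(x\) is an admissible reflection center for both distributions. Indeed, every support point \(u\) satisfies \(2x-u\in[0,y]\subseteq[0,1]\), using \(0\le x-z\) and \(x+z\le 2x=y\).

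Reflection about \(x\) exchanges \(0\leftrightarrow y\) and \(x-z\leftrightarrow x+z\), and fixes \(x\). In both distributions the exchanged points carry equal masses, so \(\pi\) and \(\pi'\) are symmetric about \(x\). Lemma~\ref{lem:symmetry-mec} then gives
\[
\MEC(\pi)=\EC[\alpha,\beta][\pi](x)
\quad\text{and}\quad
\MEC(\pi')=\EC[\alpha,\beta][\pi'](x).
\]
(If \(p=0\), the points \(0\) and \(y\) simply leave the support, and the argument is unchanged.)

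Next I evaluate both efforts at \(x\). The central mass contributes nothing in either case, so
\[
\EC[\alpha,\beta][\pi](x)=2p^\alpha x^\beta .
\]
In the generic case \(z<x\), the five support points are distinct, and
\[
\EC[\alpha,\beta][\pi'](x)=2p^\alpha x^\beta+2\Delta^\alpha z^\beta .
\]
This strictly exceeds \(2p^\alpha x^\beta\) because \(\Delta>0\) and \(z>0\), which proves the claim in this case.

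The only subtle step is the degenerate case \(z=x\). Then \(x-z=0\) and \(x+z=y\), so the split-off masses merge with the extreme masses. Under the paper's convention of strictly increasing support points, \(\pi'\) should be read as \(((p+\Delta,q-2\Delta,p+\Delta),(0,x,y))\). It is still symmetric about \(x\), and
\[
\MEC(\pi')=2(p+\Delta)^\alpha x^\beta .
\]
For \(\alpha\ge1\), the map \(t\mapsto t^\alpha\) is superadditive on \(\PositiveReals\), so \((p+\Delta)^\alpha\ge p^\alpha+\Delta^\alpha>p^\alpha\). This gives the strict inequality here as well, and the case coincides with Axiom~3 (Th.~\ref{th:mec-axiom-3}).

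I expect no real obstacle. The only care needed is checking admissibility of the reflection center and handling this merging case correctly.
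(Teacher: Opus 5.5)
Your proposal is correct and follows the paper's proof: both use the Symmetry Lemma to place the optimal consensus point at $x$, then compare $2p^\alpha x^\beta$ with $2p^\alpha x^\beta+2\Delta^\alpha z^\beta$. Your separate treatment of the merging case $z=x$ (and of $p=0$) is extra care that the paper's proof omits, and you handle it correctly via superadditivity of $t\mapsto t^\alpha$.
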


The argument is similar to that of Axiom~3: by symmetry the optimal consensus point of both distributions is $x$, and direct evaluation yields the inequality.

\section{Experimental Validation}\label{sec:experiments}

We complement the theoretical analysis with an empirical evaluation that places $\MEC$ alongside the principal alternative measures in the literature. The evaluation has two parts: a pairwise concordance analysis on an exhaustive corpus of opinion distributions, and a validation against expert judgments using the benchmark of Koudenburg, Kiers, and Kashima~\cite{koudenburg2021new}.

\subsection{Candidate Measures and Computational Universe}\label{sec:candidates-corpus}

\paragraph{Candidate measures.}
We compare $\MEC$ against five alternatives. The Esteban--Ray measure (\ERLabel), defined in Def.~\ref{ER:measure}, formalizes polarization through intra-group identification and inter-group alienation. The Earth Mover's Distance (\EMDLabel) corresponds to the transport interpretation of \(\MEC[1,1]\). The Van der Eijk index (\VdELabel) is tailored for ordered rating scales, conceptualizing polarization as the complement of agreement, which is quantified through a unimodality-based index. The Tastle--Wierman Dissention measure (\TWDLabel) is an ordinal dispersion index derived from information-theoretic ideas, combining category probabilities with their normalized distance from the mean opinion. Finally, the Koudenburg--Kiers--Kashima Index (\KILabel) refers to the empirical measure derived by Koudenburg, Kiers, and Kashima~\cite{koudenburg2021new} by regressing polarization ratings provided by experts; \KILabel{} is used below as a compact label for this three-author index. This method exhibits theoretical limitations, particularly in distributions with significant clustering at the extremes. These measures, together with $\MEC$, constitute the candidate set we evaluate.

\paragraph{Computational universe.}
We define the corpus of opinion distributions employed in our analyses as the set of integer-valued opinion distributions of $100$ agents over $n=5$ ordinal positions (a standard Likert scale). Formally, let $\Dset_{100}^{(5)} = \big\{ \pi=(\pi_1,\dots,\pi_5)\in\mathbb{N}^5 : \Mass{\pi} = 100 \big\}$. The cardinality of this space is given by the stars-and-bars formula:
\[
\big|\Dset_{100}^{(5)}\big| = \binom{100+5-1}{5-1} = 4{,}598{,}126.
\]
However, every candidate measure $\Pol$ considered in this work satisfies the reflection symmetry $\Pol(\pi) = \Pol(\pi^{\text{rev}})$, where $\pi^{\text{rev}} = (\pi_5, \pi_4, \pi_3, \pi_2, \pi_1)$. This allows us to reduce the search space by selecting, for each pair $\set{\pi, \pi^{\text{rev}}}$, a single canonical representative (the lexicographically smaller of the two). Let $\mathscr{S} \subset \Dset_{100}^{(5)}$ denote the set of self-symmetric (palindromic) distributions satisfying $\pi = \pi^{\text{rev}}$, i.e., $\pi_1 = \pi_5$ and $\pi_2 = \pi_4$. The remaining $\big|\Dset_{100}^{(5)}\big| - |\mathscr{S}|$ distributions pair off into distinct mirror pairs, so the number of unique canonical representatives is:
\[
\big|\widetilde{\Dset}_{100}^{(5)}\big| = |\mathscr{S}| + \frac{\big|\Dset_{100}^{(5)}\big| - |\mathscr{S}|}{2} = \frac{\big|\Dset_{100}^{(5)}\big| + |\mathscr{S}|}{2}.
\]
The palindromic count $|\mathscr{S}|$ equals the number of non-negative integer solutions to $2(\pi_1 + \pi_2) + \pi_3 = 100$, which is $\binom{52}{2} = 1{,}326$, yielding:
\[
\big|\widetilde{\Dset}_{100}^{(5)}\big| = \frac{4{,}598{,}126 + 1{,}326}{2} = 2{,}299{,}726.
\]

\paragraph{Choice of parameters for empirical analyses.}
The theoretical results in this paper are established for the full parametric
family $\MEC$. However, the empirical analyses require fixing a concrete representative. We use $\MEC[2,1.15]$ as the
reference parametrization. This choice is supported by an exhaustive search on the expert-judgment
benchmark of Koudenburg, Kiers, and Kashima: within the MEC family, no parameter pair was found to strictly outperform
$(2,1.15)$ in Kendall's $\tau$ against the expert ranking. Rather than a unique
optimum, the search revealed a set of equivalent solutions, among which
$(2,1.15)$ is the simplest and historically established representative.

\subsection{Pairwise Concordance via Kendall's \texorpdfstring{$\tau_B$}{tau-B}}\label{sec:pairwise-concordance}

A polarization measure is used ordinally: any strictly increasing transformation of it gives an equivalent measure for comparing or ranking distributions. The agreement between two measures should therefore be assessed by an index invariant under such transformations, which excludes Pearson's $r$ (invariant only under affine transformations) and motivates a rank correlation. We use Kendall's $\tau_B$, which equals, up to a standard tie correction, the fraction of pairs of distributions that two measures rank concordantly minus the fraction they rank discordantly. The $B$-variant adjusts for the tied ranks that arise when several distributions in the corpus receive the same score under a given measure.

For each distribution in the computational universe we computed polarization scores under all candidate measures and then pairwise $\tau_B$ across the corpus. In the resulting correlation matrix, the row labelled \EMDLabel{} coincides with $\MEC[1,1]$, by the identity $\MEC[1,1](\pi)=\min_{y\in[0,1]}\EMD(\pi,\pi_y)$ proved in Appendix~\ref{appendix:emd-definition}.

\begin{figure}[H]
    \centering
    \includegraphics[width=0.78\linewidth]{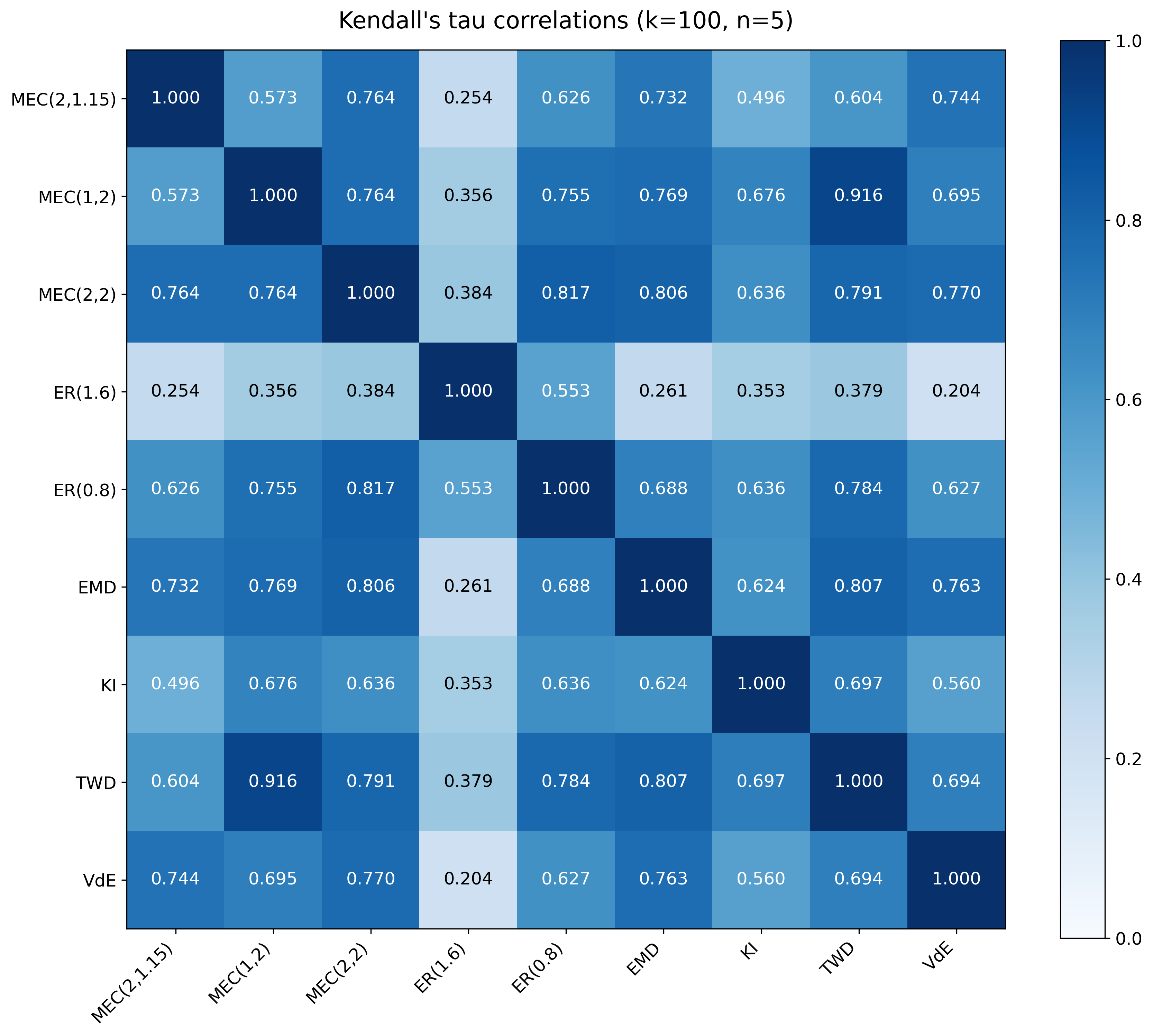}
    \caption{Kendall's $\tau$ correlations among candidate polarization measures.}
    \label{fig:kendall-tau-correlation}
\end{figure}

Several interesting patterns emerge in the correlations. \TWDLabel{} shows particularly high agreement with $\MEC[1,2]$ ($\tau_B = 0.916$). This is consistent with their underlying structures: by Cor.~\ref{cor:mec-special-cases}, $\MEC[1,2]$ is the variance of $\bar\pi$ scaled by total mass, and \TWDLabel{} aggregates category probabilities weighted by distance from the mean opinion. Both are therefore mean-based dispersion measures, and the close empirical agreement reflects this shared structure rather than a coincidence between unrelated constructions. Conversely, \ERLabel[1.6]{} displays weak alignment with other indices ($\tau_B < 0.4$ in most cases), reflecting its distinctive emphasis on large-group identification effects under high $\alpha$. The value $\alpha=1.6$ is the upper bound $\alpha^*$ of the range $(0,\alpha^*]$ that Esteban and Ray~\cite{esteban1994measurement} prove admissible under their axioms (see Section~\ref{sec:er-axioms}). They note that larger $\alpha$ corresponds to greater \emph{polarization sensitivity}, i.e., a larger departure from inequality measurement.

\KILabel{} shows moderate correlations with most other indices (typically $0.5<\tau_B<0.7$), with the highest alignment to \TWDLabel{} ($\tau_B=0.697$). However, \KILabel{} has a significant theoretical limitation. By construction, it returns zero polarization whenever the population is absent from one side of the scale, i.e.\ whenever $\pi_1=\pi_2=0$ or $\pi_4=\pi_5=0$, even though distributions such as $(0,0,0.1,0.4,0.5)$ which clearly exhibits a non-consensus  structure. This illustrates the broader limitations of direct linear interpolation in capturing the complex structure of opinion polarization.

\subsection{Validation Against Expert Judgments}

Beyond the all-distribution analysis, we validated each polarization measure against expert judgment by replicating the experimental protocol introduced by Koudenburg, Kiers, and Kashima~\cite{koudenburg2021new}, where 60 researchers in polarization studies were asked to evaluate 15 hypothetical histograms, assigning each a score from 0 (not polarized) to 100 (maximally polarized). This ranking served as a human-derived baseline to evaluate how well each computational measure captures expert intuition. To quantify the alignment between each measure and expert judgment, we computed Kendall's $\tau_B$ between the expert ranking and the ordering induced by each measure across the 15 distributions.

\begin{figure}[H]
    \centering
    \includegraphics[width=0.6\linewidth]{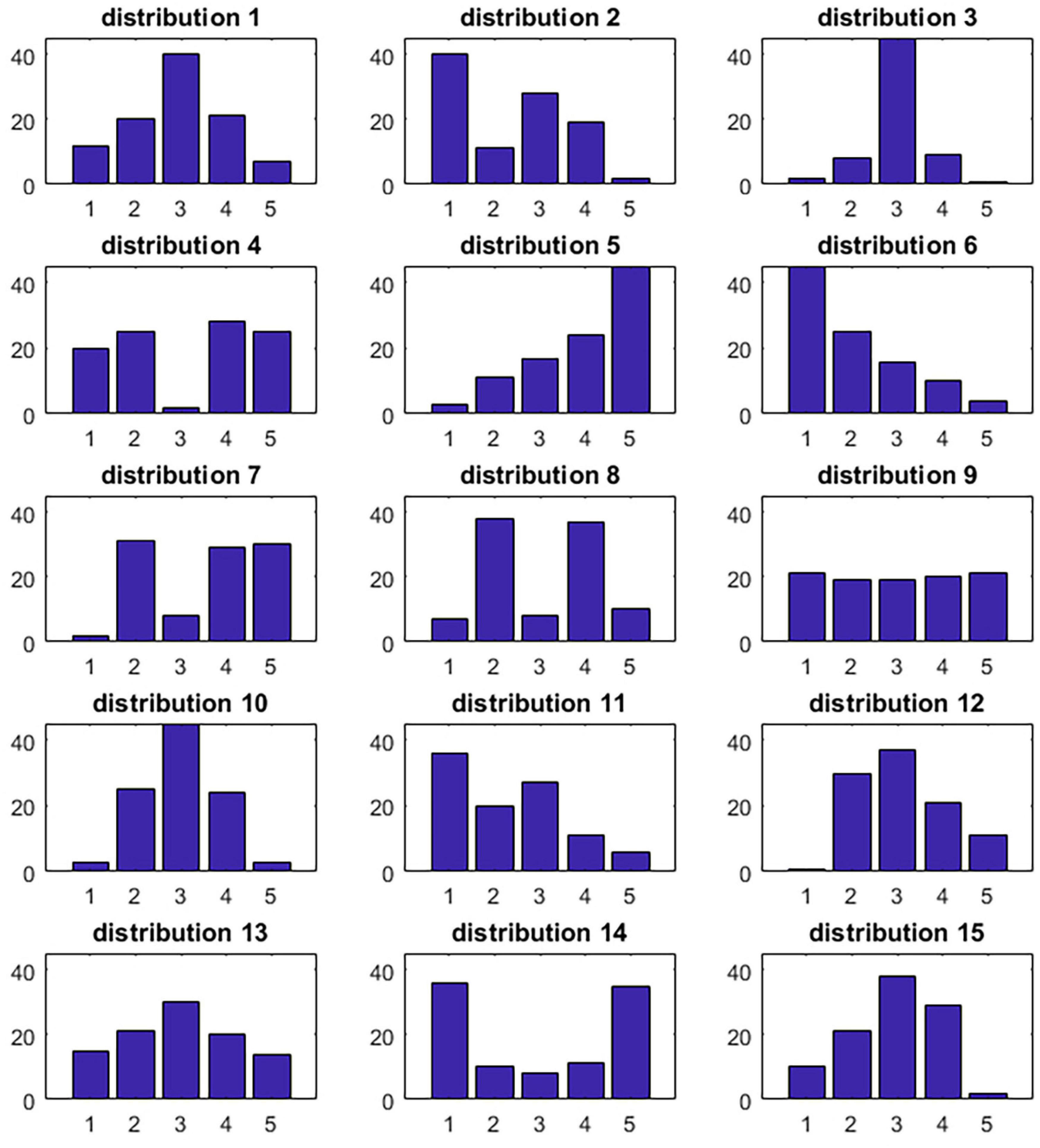}
    \caption{The 15 hypothetical distributions from \cite{koudenburg2021new} used for expert evaluation.}
    \label{fig:experts_histograms}
\end{figure}

\begin{figure}[H]
    \centering
    \includegraphics[width=0.85\linewidth]{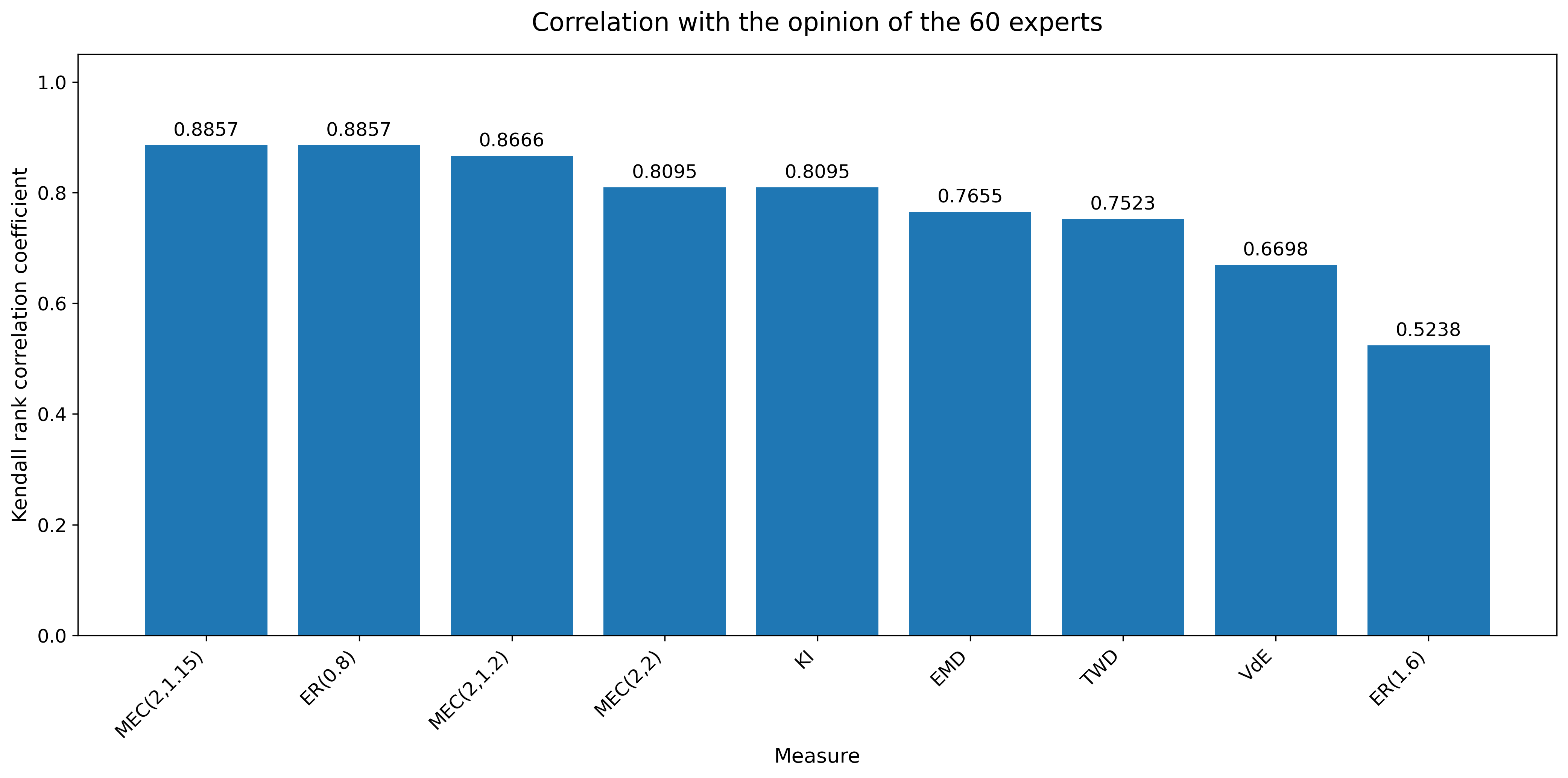}
    \caption{Correlation with expert judgments.}
    \label{fig:experts-opinion-correlation}
\end{figure}

The results highlight that $\MEC[2,1.15]$ and \ERLabel[0.8]{} show the highest concordance with expert evaluations, both achieving $\tau_{B}\approx 0.89$, followed by $\MEC[1,1]$ ($\tau_{B}=0.77$). In contrast, \VdELabel{} and \ERLabel[1.6]{} register lower correlations ($\tau_{B}=0.67$ and $\tau_{B}=0.52$, respectively). This divergence suggests that empirical agreement with expert judgments depends not only on broad normative motivation, but also on functional form, parameterization, and sensitivity to specific distributional patterns. This reinforces the need to complement axiomatic analysis with exhaustive empirical validation.

\section{Conclusions and Related Work}\label{sec:conclusion}

We have introduced MEC, a polarization measure defined as the minimum
effort required to bring an opinion distribution to consensus. The
measure rests on three connected mathematical foundations: a
transportation-theoretic interpretation as a Wasserstein-type distance
to the set of consensus distributions, a Minkowski-type characterization
as a weighted $L^\beta$-dispersion of the $\alpha$-power distribution,
and an axiomatic compatibility with the framework of Esteban and
Ray~\cite{esteban1994measurement}. The basic case $(\alpha,\beta)=(1,1)$
recovers the Earth Mover's Distance to the extremal distribution, and
the cases $(1,1)$ and $(1,2)$ specialize to mean absolute deviation
about the median and to a variance-like dispersion respectively. The
parameters $\alpha$ and $\beta$ admit a natural reading in terms of
group identification~\cite{TajfelTurner1979} and cognitive cost of
belief revision~\cite{Festinger1957}.

Beyond the scalar value, the optimization formulation produces an
optimal consensus point $y^*$, which existing pairwise and dispersion
measures do not provide. This additional structure has both
theoretical and practical use. On the theoretical side, $y^*$ anchors
the proofs of the Minority Principle and of the shifting-away
result, and supports a closed-form Tipping Point method that locates
the critical mass at which a partial transfer of opinions away from
$y^*$ ceases to reduce polarization and begins to increase it. The
analysis also reveals that polarization is not monotone in extremism,
nor in movement away from the optimal consensus point: whether such a
movement raises or lowers polarization depends on whether a whole group
or only a fragment of it is displaced. Empirically, MEC matches the
strongest parametrization of \ERLabel{} on the expert-judgment
benchmark of Koudenburg, Kiers, and Kashima~\cite{koudenburg2021new}, and
outperforms the dispersion-based alternatives \VdELabel{} and
\TWDLabel.

The proposal is part of a broader effort to bring tools from optimal
transport and convex analysis to the quantitative study of polarization,
a field that has been dominated by pairwise constructions in the spirit
of Gini and Esteban--Ray~\cite{esteban1994measurement,foster85}. The
Earth Mover's Distance has appeared in this literature as a measure of
distance between distributions~\cite{geher2020isometric,vallender1974calculation},
but, to our knowledge, has not previously been used to define
polarization itself as a distance to a structured target set. The
parametric framework $(\alpha,\beta)$ also relates to the
inequality-measurement tradition~\cite{foster85}, in which similar
exponents control sensitivity to mass and distance, and to the
sociological literature on group identification and cognitive
dissonance~\cite{TajfelTurner1979,Festinger1957}, which motivates the
specific role of the two parameters.

Several directions remain open. The current formulation treats opinions
as elements of $[0,1]$ and distributions as having finite support. A
natural extension is to general metric spaces, in which the ground
distance $|x-y|$ would be replaced by an arbitrary metric and the
domain by a suitable measurable space. The parametric family invites
further study of the dependence of $\MEC$ on $(\alpha,\beta)$, in
particular a theoretical justification of the empirically successful
parametrization $(2,1.15)$. The Tipping Point method, established here
for $\alpha=\beta=2$, may admit extensions to other parameter ranges.
The empirical validation rests on a synthetic benchmark; testing MEC
on real survey data, and on dynamic models of opinion update, would
clarify both its descriptive accuracy and its sensitivity to noise.
Finally, the optimization formulation suggests a connection to
mediation and policy framing through the optimal consensus point
$y^*$, which we have treated here only as a structural object but
which may admit substantive interpretation in applied settings.

\paragraph{Acknowledgments.}
The authors would like to thank the anonymous reviewers for their very
insightful comments on an earlier version of this paper.  This work
has been partially supported by the SGR project PROMUEVA (BPIN
2021000100160) under the supervision of Minciencias (Ministerio de
Ciencia Tecnolog\'ia e Innovaci\'on, Colombia),  by the CNRS project TOBIAS under the MITI interdisciplinary program, and the ECOSNORD Project BRAINS C26M02.

\bibliographystyle{plain}
\bibliography{main}

\newpage

\appendix
\begin{center}
{\fontsize{24}{48}\selectfont \bfseries Appendix}
\end{center}
\vspace{2em}

\section{Proofs for The MEC Measure}
\label{appendix:the-mec-measure-proofs}

\subsection{Earth Mover's Distance on Opinion Distributions}
\label{appendix:emd-definition}

We briefly recall the Earth Mover's Distance for opinion distributions and establish the identity used in the body to express $\ECEMD(y)$ as a transportation cost.

\begin{definition}[Transportation plan and \EMDLabel]\label{def:emd}
Let $\mu,\nu\in\Dset$ with $\Mass{\mu}=\Mass{\nu}$. A \emph{transportation plan} from $\mu$ to $\nu$ is a function $\gamma:[0,1]\times[0,1]\to\PositiveReals$ with finite support contained in $\support(\mu)\times\support(\nu)$, satisfying
\[
\sum_{y'\in\support(\nu)} \gamma(x,y')=\mu(x)
\qquad\text{for every } x\in\support(\mu),
\]
\[
\sum_{x'\in\support(\mu)} \gamma(x',y)=\nu(y)
\qquad\text{for every } y\in\support(\nu).
\]
The \emph{cost} of $\gamma$ is
\[
\mathrm{cost}(\gamma)\defsymbol \sum_{x,y} \gamma(x,y)\,|x-y|,
\]
and the \emph{Earth Mover's Distance} between $\mu$ and $\nu$ is
\[
\EMD(\mu,\nu)\defsymbol \min_{\gamma}\,\mathrm{cost}(\gamma),
\]
where the minimum is taken over all transportation plans from $\mu$ to $\nu$.
\end{definition}

\begin{lemma}[$\ECEMD$ as transportation cost]\label{lem:ecemd-as-emd}
For every $\pi\in\Dset$ and every $y\in[0,1]$,
\[
\ECEMD(y)=\EMD(\pi,\pi_y),
\]
where $\pi_y$ is the consensus distribution at $y$ with $\Mass{\pi_y}=\Mass{\pi}$.
\end{lemma}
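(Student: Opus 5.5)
The plan is to show that, since the target $\pi_y$ is a consensus, there is essentially only one transportation plan from $\pi$ to $\pi_y$, and its cost is exactly $\ECEMD(y)$.

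First, fix $\pi\in\Dset$ and $y\in[0,1]$. Since $\support(\pi_y)=\{y\}$ and $\Mass{\pi_y}=\Mass{\pi}$, any transportation plan $\gamma$ from $\pi$ to $\pi_y$ has support contained in $\support(\pi)\times\{y\}$. The first marginal constraint then reads $\gamma(x,y)=\sum_{y'\in\{y\}}\gamma(x,y')=\pi(x)$ for every $x\in\support(\pi)$. Hence $\gamma$ is forced: $\gamma(x,y)=\pi(x)$ on $\support(\pi)\times\{y\}$ and $\gamma=0$ elsewhere.

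Next, I check that this $\gamma$ is indeed a transportation plan. Its support is finite and contained in $\support(\pi)\times\support(\pi_y)$. The first marginal condition holds by construction. The second reads $\sum_{x'}\gamma(x',y)=\sum_{x'\in\support(\pi)}\pi(x')=\Mass{\pi}=\pi_y(y)$, which holds as well. So the set of transportation plans is exactly the singleton $\{\gamma\}$, and the minimum in Def.~\ref{def:emd} is attained at $\gamma$.

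Finally, I compute the cost:
\[
\mathrm{cost}(\gamma)=\sum_{x\in\support(\pi)}\pi(x)\,|x-y|=\ECEMD(y),
\]
by Eq.~\ref{eq:emd}. This gives $\EMD(\pi,\pi_y)=\ECEMD(y)$.

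There is no real obstacle here. The only point needing care is the uniqueness of the plan, which rests entirely on the support of $\pi_y$ being the singleton $\{y\}$. One might worry about the degenerate case $y\in\support(\pi)$, but then the term $\pi(y)|y-y|=0$ contributes nothing on either side, so it causes no trouble.
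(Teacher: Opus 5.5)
Your proposal is correct and is essentially the paper's proof: because $\pi_y$ is supported on the single point $y$, the marginal constraints force the unique plan $\gamma(x,y)=\pi(x)$, and its cost is $\sum_{x}\pi(x)\,|x-y|$. You also check explicitly that this $\gamma$ satisfies the second marginal constraint, so that a plan exists and the minimum is attained; the paper leaves that step implicit.
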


\begin{proof}
Since $\pi_y$ is concentrated at the single point $y$, with $\pi_y(y)=\Mass{\pi}$, any transportation plan $\gamma$ from $\pi$ to $\pi_y$ has $\support(\gamma)\subseteq\support(\pi)\times\{y\}$. The marginal constraint at $y$ then reads
\[
\sum_{x\in\support(\pi)} \gamma(x,y)=\pi_y(y)=\Mass{\pi},
\]
and the marginal constraint at each $x\in\support(\pi)$ forces
\[
\gamma(x,y)=\pi(x).
\]
Hence there is a unique transportation plan from $\pi$ to $\pi_y$, given by $\gamma(x,y)=\pi(x)$ for $x\in\support(\pi)$ and zero elsewhere. Its cost is
\[
\sum_{x\in\support(\pi)} \pi(x)\,|x-y|=\ECEMD(y).
\]
Therefore $\EMD(\pi,\pi_y)=\ECEMD(y)$.
\end{proof}

\subsection{Proof of Theorem~\ref{th:mec-as-emd}}
\label{appendix:proof-mec-as-emd}

\mecAsEmd*
\begin{proof}
Let $m$ be any median of $\pi$. Notice that $m$ is also a median of $\pi^*$ because the set of medians of $\pi^*$ is $[0,1]$, so by Lemma~\ref{lem:ecemd-as-emd} we have $\MECEMD(\pi)=\EMD(\pi,\pi_m)$ and $\MECEMD(\pi^*)=\EMD(\pi^*,\pi_m)$.
Using these equations and Equation~\eqref{eq:mec-emd-pi-m}, Equation~\eqref{eq:mec-emd-1} can be rewritten as
\begin{equation}
\EMD(\pi^*, \pi_m)=\EMD(\pi^*, \pi) + \EMD(\pi,\pi_m),
\label{eq:mec-emd-2}
\end{equation}
which states that $\pi$ is an optimal transportation plan from $\pi^*$ to $\pi_m$ that moves masses from $\pi^*$ to the intermediate distribution $\pi$ and continues from $\pi$ to $\pi_m$.
To prove~\eqref{eq:mec-emd-2}, for any opinion distribution $\mu$, define the right-continuous quantile function as $Q_\mu(q) := \max\{x_0\in\support(\mu): \sum_{x<x_0} \mu(x) \leq q\}$.
On one hand, it is known~\cite{geher2020isometric,vallender1974calculation} that $\EMD$ can be expressed as
\begin{equation}
\EMD(\mu, \nu) = \int_0^{\Mass{\mu}} |Q_\mu(q)-Q_\nu(q)|\,dq, \quad \Mass{\mu}=\Mass{\nu}.
\label{eq:mec-emd-3}
\end{equation}
On the other hand, the quantile function $Q_\pi$ is sandwiched between $Q_{\pi^*}$ and $Q_{\pi_m}$ because for all $q\in[0,\Mass{\pi}/2]$, $Q_\pi(q)\in[0,m] = [Q_{\pi^*}, Q_{\pi_m}]$ and for all $q\in(\Mass{\pi}/2, \Mass{\pi}]$, $Q_\pi(q)\in[m,1] = [Q_{\pi_m}, Q_{\pi^*}]$.
Combining these two facts yields
\[
\begin{aligned}
  \EMD(\pi^*,\pi_m) &= \int_0^{\Mass{\pi^*}} |Q_{\pi^*}(q)-Q_{\pi_m}(q)|\,dq\\
  &= \int_0^{\Mass{\pi^*}} |Q_{\pi^*}(q)-Q_\pi(q)| + |Q_\pi(q)-Q_{\pi_m}(q)|\,dq\\
  &= \EMD(\pi^*,\pi) + \EMD(\pi,\pi_m).
\end{aligned}
\]
\end{proof}

\section{Proofs for Structural Properties of MEC}
\label{appendix:structural-properties-proofs}

\subsection{Convexity}
\subsubsection{Proof of Theorem~\ref{convexity}}
\label{appendix:proof-convexity}
\convexityThm*
\begin{proof}
From Def.~\ref{mec:def}, \(\beta \geq 1\). For any fixed \(x\in \support(\pi)\), consider 
\[
g_x(y)=|x-y|^\beta.
\]
It is a well-known result (see \cite{Rockafellar1970}) that \(g_x(y)\) is convex on \(\Reals\) for \(\beta\ge1\) and strictly convex for \(\beta>1\). Both convexity and strict convexity are preserved under multiplication by strictly positive constants. Hence, since for \(x\in \support(\pi)\) we have \(\pi(x)^\alpha>0\), the function 
\[
f_x(y)=\pi(x)^\alpha\, |x-y|^\beta
\]
is convex and, if \(\beta>1\), strictly convex as well. As \(\EC(y)\) is a finite sum of the functions \(f_x(y)\), and (strict) convexity is preserved under finite sums, it follows that \(\EC\) is convex for \(\beta\ge1\) and strictly convex for \(\beta>1\).
\end{proof}

\subsection{Characterization and Special Cases for MEC}
\subsubsection{Proof of Theorem~\ref{thm:mec-Lbeta}}
\label{appendix:proof-mec-lbeta}
\mecLbeta*
\begin{proof}
For $\alpha\ge 1$ and $\beta\ge 1$, the effort function is
\[
\EC(y)=\sum_{x\in\support(\pi)} \pi(x)^\alpha |x-y|^\beta
=\sum_{x\in\support(\pi)} \pi^\alpha(x)\,|x-y|^\beta.
\]
Factoring out $\Mass{\pi^\alpha}$ yields
\[
\EC(y)=\Mass{\pi^\alpha}\sum_{x\in\support(\pi)} \overline{\pi^\alpha}(x)\,|x-y|^\beta.
\]
Taking minima over $y\in[0,1]$ gives
\[
\MEC[\alpha,\beta](\pi)
=\min_{y\in[0,1]}\EC(y)
=\Mass{\pi^\alpha}\min_{y\in[0,1]}\sum_{x\in\support(\pi)} \overline{\pi^\alpha}(x)\,|x-y|^\beta
=\Mass{\pi^\alpha}\cdot D_\beta(\overline{\pi^\alpha}),
\]
as claimed.
\end{proof}

\subsubsection{Proof of Corollary~\ref{cor:mec-special-cases}}
\label{appendix:proof-cor-mec-special-cases}
\specialCases*
\begin{proof}
Let $\rho=\overline{\pi^\alpha}$. If $\beta=1$, the function $y\mapsto \sum_x \rho(x)\,|x-y|$ is minimized at any
median $m$ of $\rho$, hence
\[
\min_y \sum_x \rho(x)\,|x-y|=\sum_x \rho(x)\,|x-m|
=\mathrm{MAD}_{\mathrm{m}}(\rho).
\]
If $\beta=2$, the function $y\mapsto \sum_x \rho(x)\,(x-y)^2$ is minimized at
the mean $\mu$ of $\rho$, and the minimum equals
\[
\min_y \sum_x \rho(x)\,(x-y)^2=\sum_x \rho(x)\,(x-\mu)^2=\mathrm{Var}(\rho).
\]
Multiplying by $\Mass{\pi^\alpha}$ yields the stated identities.
\end{proof}

\subsection{Symmetry}

\subsubsection{Proof of Lemma~\ref{lem:mec-reflection}}
\label{appendix:proof-mec-reflection}

\mecReflection*
\begin{proof}
Let
\[
I_c \defsymbol [0,1]\cap[2c-1,2c].
\]
By definition of reflection,
\[
\support(\pi^{\mathrm{ref}(c)})
=
\{\,2c-x \mid x\in\support(\pi)\,\}
\]
and
\[
\pi^{\mathrm{ref}(c)}(2c-x)\defsymbol \pi(x)
\qquad\text{for all }x\in\support(\pi).
\]
Since $c$ is an admissible reflection center for $\pi$, for every
$x\in\support(\pi)$ we have $2c-x\in[0,1]$, hence
\[
2c-1\le x\le 2c.
\]
Therefore
\[
\support(\pi)\subseteq I_c
\qquad\text{and}\qquad
\support(\pi^{\mathrm{ref}(c)})\subseteq I_c.
\]
For every $y\in I_c$ we have $2c-y\in I_c\subseteq[0,1]$, and
\[
\begin{aligned}
\EC[\alpha,\beta][\pi^{\mathrm{ref}(c)}](y)
&=
\sum_{x\in\support(\pi^{\mathrm{ref}(c)})}
\pi^{\mathrm{ref}(c)}(x)^\alpha |x-y|^\beta \\
&=
\sum_{z\in\support(\pi)}
\pi(z)^\alpha |(2c-z)-y|^\beta \\
&=
\sum_{z\in\support(\pi)}
\pi(z)^\alpha |z-(2c-y)|^\beta \\
&=
\EC(2c-y).
\end{aligned}
\]
Since $\support(\pi)\subseteq I_c$, the minimum of $\EC$ over $[0,1]$
is attained in $I_c$, and the same holds for $\EC[\alpha,\beta][\pi^{\mathrm{ref}(c)}]$.
Because $y\mapsto 2c-y$ is a bijection from $I_c$ onto itself, we obtain
\[
\MEC[\alpha,\beta](\pi^{\mathrm{ref}(c)})
=
\MEC[\alpha,\beta](\pi).
\]
\end{proof}

\subsubsection{Proof of Corollary~\ref{cor:mec-shift}}
\label{appendix:proof-mec-shift}

\mecShift*
\begin{proof}
Observe that a shift can be expressed as the composition of two reflections.
If $c$ and $c+\tfrac{t}{2}$ are admissible reflection centers for the relevant
distributions, then
\[
\pi^{\mathrm{shift}(t)}
=
\left(\pi^{\mathrm{ref}(c)}\right)^{\mathrm{ref}(c+t/2)},
\]
since for every $x\in\support(\pi)$,
\[
2\!\left(c+\frac{t}{2}\right)-(2c-x)=x+t.
\]
In particular, if
\[
c=\frac{x_1+x_n}{2},
\]
where $x_1=\min\support(\pi)$ and $x_n=\max\support(\pi)$, then $c$ is an
admissible reflection center for $\pi$; and if $t$ is an admissible shift,
then $c+\tfrac{t}{2}$ is an admissible reflection center for
$\pi^{\mathrm{ref}(c)}$.
\end{proof}

\subsubsection{Proof of Lemma~\ref{lem:symmetry-mec}}
\label{appendix:proof-symmetry-mec}

\symmetryMec*
\begin{proof}
Recall that
\[
\EC(y)=\sum_{x\in\support(\pi)} \pi(x)^\alpha |x-y|^\beta.
\]
Since $\pi$ is symmetric about $c$, for every $x\in\support(\pi)$ we have
$2c-x\in\support(\pi)$ and
\[
\pi(x)=\pi(2c-x).
\]
Hence every point of $\support(\pi)$ distinct from $c$ belongs to a symmetric
pair $\{x,2c-x\}$, and if $c\in\support(\pi)$ it forms a singleton.

We first show that $\EC$ is symmetric about $c$. Let $t$ be such that
$c\pm t\in[0,1]$. For each symmetric pair $\{x,2c-x\}$, using
$\pi(x)=\pi(2c-x)$, we have
\[
\begin{aligned}
&\pi(x)^\alpha |x-(c+t)|^\beta
+\pi(2c-x)^\alpha |(2c-x)-(c+t)|^\beta \\
&\qquad
=\pi(x)^\alpha |x-(c+t)|^\beta
+\pi(x)^\alpha |x-(c-t)|^\beta .
\end{aligned}
\]
Interchanging $t$ and $-t$ leaves this quantity unchanged, so the
contribution of each symmetric pair to $\EC(c+t)$ equals its contribution
to $\EC(c-t)$. If $c\in\support(\pi)$, then the singleton contributes
\[
\pi(c)^\alpha |c-(c+t)|^\beta
=
\pi(c)^\alpha |t|^\beta
=
\pi(c)^\alpha |c-(c-t)|^\beta,
\]
which is also the same at $c+t$ and $c-t$. Summing over all support points,
we obtain
\[
\EC(c+t)=\EC(c-t).
\]
Therefore $\EC$ is symmetric about $c$.

If $\beta>1$, then by Theorem~\ref{convexity}, $\EC$ is strictly convex on
$[0,1]$, so it has a unique optimal consensus point. Since $\EC$ is symmetric about $c$,
that unique optimal consensus point must be $c$. Hence
\[
\MEC[\alpha,\beta](\pi)=\EC(c).
\]

Assume now that $\beta=1$. By Corollary~\ref{cor:mec-special-cases}, the
optimal consensus points of $\pi$ are precisely the medians of $\overline{\pi^\alpha}$. Since
\[
\pi(x)^\alpha=\pi(2c-x)^\alpha
\qquad\text{for all }x\in[0,1],
\]
the distribution $\pi^\alpha$ is also symmetric about $c$. Therefore the
total $\pi^\alpha$-mass strictly to the left of $c$ equals the total
$\pi^\alpha$-mass strictly to the right of $c$, and thus both are at most
$\Mass{\pi^\alpha}/2$. Hence $c$ is a median of $\pi^\alpha$, so $\EC(c)$
attains the minimum. Therefore again
\[
\MEC[\alpha,\beta](\pi)=\EC(c).
\]

Thus, in all cases,
\[
\MEC[\alpha,\beta](\pi)
=
\EC(c)
=
\sum_{x\in\support(\pi)} \pi(x)^\alpha |x-c|^\beta.
\]
\end{proof}

\subsubsection{Proof of Theorem~\ref{th:mec-pi-star}}
\label{appendix:proof-mec-pi-star}

\mecPiStar*
\begin{proof}
Since $\pi^*(0)=\pi^*(1)=\Mass{\pi^*}/2$, the distribution $\pi^*$ is symmetric about $1/2$. Hence, by Lemma~\ref{lem:symmetry-mec},
\[
\MEC(\pi^*)=\EC[\alpha,\beta][\pi^*](1/2).
\]
Evaluating the effort at $1/2$ gives
\[
\EC[\alpha,\beta][\pi^*](1/2)
=
\left(\frac{\Mass{\pi^*}}{2}\right)^\alpha\!\left(\frac12\right)^\beta
+
\left(\frac{\Mass{\pi^*}}{2}\right)^\alpha\!\left(\frac12\right)^\beta
=
2\left(\frac{\Mass{\pi^*}}{2}\right)^\alpha\left(\frac12\right)^\beta.
\]
\end{proof}

\subsection{Population Scaling}

\subsubsection{Proof of Theorem~\ref{th:multbyscalar}}
\label{appendix:proof-population-scaling}

\multByScalar*
\begin{proof}
By definition,
\[
\MEC(\lambda\pi)
=
\min_{y\in[0,1]}
\sum_{i=1}^{n}(\lambda\pi_i)^{\alpha}|x_i-y|^{\beta}.
\]
Since
\[
(\lambda\pi_i)^{\alpha}=\lambda^{\alpha}\pi_i^{\alpha},
\]
it follows that
\[
\MEC(\lambda\pi)
=
\min_{y\in[0,1]}
\lambda^{\alpha}\sum_{i=1}^{n}\pi_i^{\alpha}|x_i-y|^{\beta}.
\]
Because $\lambda^{\alpha}$ does not depend on $y$, it can be factored out of the minimum:
\[
\MEC(\lambda\pi)
=
\lambda^{\alpha}
\min_{y\in[0,1]}
\sum_{i=1}^{n}\pi_i^{\alpha}|x_i-y|^{\beta}.
\]
Therefore,
\[
\MEC(\lambda\pi)=\lambda^{\alpha}\MEC(\pi).
\]
\end{proof}

\subsubsection{Proof of Corollary~\ref{cor:condition-H-normalization}}
\label{appendix:proof-condition-H-normalization}

\conditionH*
\begin{proof}
(1) By definition of normalization,
\[
\overline{\pi}=\frac{1}{\Mass{\pi}}\,\pi.
\]
Applying Th.~\ref{th:multbyscalar} with
\[
\lambda=\frac{1}{\Mass{\pi}},
\]
we obtain
\[
\MEC(\overline{\pi})
=
\MEC\!\left(\frac{1}{\Mass{\pi}}\pi\right)
=
\left(\frac{1}{\Mass{\pi}}\right)^\alpha \MEC(\pi).
\]
Therefore,
\[
\MEC(\overline{\pi})
=
\frac{\MEC(\pi)}{(\Mass{\pi})^\alpha},
\]
as claimed.

(2) From Th.~\ref{th:multbyscalar} we have
\[
\MEC(\lambda\pi)=\lambda^{\alpha}\MEC(\pi)
\qquad\text{and}\qquad
\MEC(\lambda\pi')=\lambda^{\alpha}\MEC(\pi').
\]
Since $\lambda^{\alpha}>0$, multiplying the inequality
\[
\MEC(\pi)>\MEC(\pi')
\]
by $\lambda^{\alpha}$ preserves the order, yielding
\[
\MEC(\lambda\pi)>\MEC(\lambda\pi').
\]
Hence MEC satisfies Condition~H.
\end{proof}

\section{Proofs for Theoretical Applications}
\label{appendix:theoretical-applications-proofs}

\subsection{A Minority Principle}

\subsubsection{Proof of Lemma~\ref{lem:yopt-closest-to-largest-general}}
\label{appendix:proof-yopt-closest-to-largest-general}

\yoptClosest*
\begin{proof}
Since \(\beta>1\), the function \(\EC[\alpha,\beta][\pi]\) is strictly convex, hence it has a unique optimal consensus point \(y^*\). On \([l,r]\), we have \(\EC[\alpha,\beta][\pi](y)=A^\alpha (y-l)^\beta + B^\alpha (r-y)^\beta\), and for \(y\in(l,r)\),
\[
\frac{d}{dy}\EC[\alpha,\beta][\pi](y)
=
\beta A^\alpha (y-l)^{\beta-1}
-
\beta B^\alpha (r-y)^{\beta-1}.
\]
Moreover, \(\frac{d}{dy}\EC[\alpha,\beta][\pi](l^+)=-\beta B^\alpha (r-l)^{\beta-1}<0\) and \(\frac{d}{dy}\EC[\alpha,\beta][\pi](r^-)=\beta A^\alpha (r-l)^{\beta-1}>0\). Hence the unique optimal consensus point \(y^*\) lies in \((l,r)\). Since \(\EC[\alpha,\beta][\pi]\) is strictly convex, its derivative is strictly increasing on \((l,r)\).

Suppose first that \(y^*>\frac{l+r}{2}\). Since \(\frac{d}{dy}\EC[\alpha,\beta][\pi](y^*)=0\), strict monotonicity of the derivative gives \(0>\frac{d}{dy}\EC[\alpha,\beta][\pi](\frac{l+r}{2})\). Thus \(0>\beta\left(\frac{r-l}{2}\right)^{\beta-1}(A^\alpha-B^\alpha)\), and therefore \(A^\alpha-B^\alpha<0\). Hence \(B^\alpha>A^\alpha\), so \(B>A\).

If \(y^*<\frac{l+r}{2}\), the same argument yields \(0<\frac{d}{dy}\EC[\alpha,\beta][\pi](\frac{l+r}{2})\), hence \(A^\alpha-B^\alpha>0\), and therefore \(B<A\).

Finally, if \(y^*=\frac{l+r}{2}\), then \(0=\frac{d}{dy}\EC[\alpha,\beta][\pi](\frac{l+r}{2})=\beta\left(\frac{r-l}{2}\right)^{\beta-1}(A^\alpha-B^\alpha)\), so \(A^\alpha=B^\alpha\), and therefore \(A=B\).
\end{proof}

\subsubsection{Proof of Lemma~\ref{lem:greatest-contribution-from-smallest-general}}
\label{appendix:proof-greatest-contribution-from-smallest-general}

\greatestContribution*
\begin{proof}
We prove (1); the other two cases are analogous. Suppose \(y^*>\frac{l+r}{2}\). Then \(y^*\in(l,r)\), so \(\EC[\alpha,\beta][\pi]\) is differentiable at \(y^*\). Since \(y^*\) is an optimal consensus point, we have
\[
\frac{d}{dy}\EC[\alpha,\beta][\pi](y^*)=0.
\]
Thus
\[
A^\alpha (y^*-l)^{\beta-1}=B^\alpha (r-y^*)^{\beta-1}.
\]
Moreover, \(y^*>\frac{l+r}{2}\) implies \(y^*-l>r-y^*\). Multiplying the equal sides above by these unequal positive factors, we obtain
\[
A^\alpha (y^*-l)^\beta > B^\alpha (r-y^*)^\beta.
\]
If \(y^*<\frac{l+r}{2}\), then \(y^*-l<r-y^*\), and the same argument yields
\[
A^\alpha (y^*-l)^\beta < B^\alpha (r-y^*)^\beta.
\]
Finally, if \(y^*=\frac{l+r}{2}\), then \(y^*-l=r-y^*\), and therefore
\[
A^\alpha (y^*-l)^\beta = B^\alpha (r-y^*)^\beta.
\]
\end{proof}

\subsubsection{Proof of Theorem~\ref{thm:minority-principle-general}}
\label{appendix:proof-minority-principle-general}

\minorityPrinciple*
\begin{proof}
Assume first that \(\beta>1\). Suppose \(A<B\). Then \(B>A\). By Lem.~\ref{lem:yopt-closest-to-largest-general}, we cannot have \(y^*<\frac{l+r}{2}\), for that would imply \(B<A\), nor can we have \(y^*=\frac{l+r}{2}\), for that would imply \(B=A\). Hence \(y^*>\frac{l+r}{2}\). It now follows from Lem.~\ref{lem:greatest-contribution-from-smallest-general} that
\[
A^\alpha (y^*-l)^\beta > B^\alpha (r-y^*)^\beta.
\]
The case \(A>B\) is analogous. Indeed, if \(A>B\), then \(B<A\). By Lem.~\ref{lem:yopt-closest-to-largest-general}, we cannot have \(y^*>\frac{l+r}{2}\), nor can we have \(y^*=\frac{l+r}{2}\). Hence \(y^*<\frac{l+r}{2}\), and therefore, by Lem.~\ref{lem:greatest-contribution-from-smallest-general},
\[
A^\alpha (y^*-l)^\beta < B^\alpha (r-y^*)^\beta.
\]
Assume now that \(\beta=1\). By Cor.~\ref{cor:mec-special-cases}, \(y^*\) is a median of \(\pi^\alpha\). Since \(A\neq B\), we also have \(A^\alpha\neq B^\alpha\), so this median is unique and coincides with the location of the larger mass. Thus, if \(A<B\), then \(A^\alpha<B^\alpha\), hence \(y^*=r\). Therefore \(A^\alpha (y^*-l)=A^\alpha(r-l)>0\), while \(B^\alpha(r-y^*)=0\). Hence
\[
A^\alpha (y^*-l)^\beta > B^\alpha (r-y^*)^\beta.
\]
The case \(A>B\) is symmetric.
\end{proof}

\section{Proof that MEC is a Polarization Measure}
\label{Appendix:mec-polarization-measure}
\def\sign{{\textrm{sign}}}
\def\half{{\frac{1}{2}}}

This appendix proves Theorem~\ref{th:mec-is-a-polarization-measure}. The structure of the proof was outlined in Section~\ref{sec:shifting-away}. The key technical results are: (i) Theorem~\ref{thm:mec-increases:body}, which establishes that any non-median shift away from consensus strictly increases polarization; (ii) Theorem~\ref{thm:two_massed_distribution:body}, which shows that among two-point distributions supported at the extremes the balanced one is maximal; and (iii) Theorem~\ref{thm:mec-maximal}, which assembles the previous results into the full claim.

We use the notation $\pi_{a\to b}$ from Definition~\ref{def:movement} throughout.

\subsection{Shifts Away from Consensus Strictly Increase MEC}

\begin{theorem}\label{thm:mec-increases-a1-b1}
Let $\alpha = 1$ and $\beta=1$. For any distribution $\pi$ and $a\in\support(\pi)$, if $\pi_{a\to b}$ is a shift away from consensus and is not a median-shift (Def.~\ref{def:movement}), then $\MEC(\pi_{a\to b}) > \MEC(\pi)$.
\end{theorem}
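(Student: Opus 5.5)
The plan is to use Cor.~\ref{cor:mec-special-cases}(1). For $\alpha=\beta=1$ it gives $\overline{\pi^\alpha}=\overline{\pi}$, so the optimal consensus points of $\pi$ are exactly the medians of $\pi$, which form a closed interval $[m_1,m_2]$. Likewise $\MEC[1,1](\rho)=\EC[1,1][\rho](m)$ for any median $m$ of $\rho$. The strategy is therefore to exhibit a single point $m$ that is a median of both $\pi$ and $\pi_{a\to b}$. Then
\[
\MEC(\pi_{a\to b})-\MEC(\pi)=\EC[1,1][\pi_{a\to b}](m)-\EC[1,1][\pi](m).
\]

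Because $\alpha=1$, the effort is linear in the masses. So if $\pi(b)>0$ and the mass from $a$ merges with the mass already at $b$, this causes no difficulty. Every term with $x\notin\{a,b\}$ cancels, the pre-existing mass at $b$ contributes $\pi(b)|b-m|$ on both sides, and the difference reduces to
\[
\pi(a)\bigl(|b-m|-|a-m|\bigr).
\]
It then suffices to check that $|b-m|>|a-m|$.

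By the reflection lemma (Lem.~\ref{lem:mec-reflection} about $c=1/2$, which maps medians to medians and preserves shifts away from consensus), it suffices to treat the case $b<a\le y^*$ for every optimal $y^*$, i.e.\ $b<a\le m_1$. I split into two subcases.

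(i) If $a<m_1$, take $m=m_1$. Both $a$ and $b$ lie strictly left of $m_1$, so $\sum_{x\le m_1}$ and $\sum_{x\ge m_1}$ are unchanged by the move. Hence $m_1$ remains a median of $\pi_{a\to b}$. Moreover $|b-m_1|=m_1-b>m_1-a=|a-m_1|$.

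(ii) If $a=m_1$, then $m_1<m_2$, since otherwise $a$ would be the unique median and the move would be a median-shift, which is excluded. Take $m=m_2$. Since $b<a<m_2$, the sums $\sum_{x\le m_2}$ and $\sum_{x\ge m_2}$ are unchanged, so $m_2$ is a median of $\pi_{a\to b}$, and again $m_2-b>m_2-a$.

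In both subcases the difference equals $\pi(a)(a-b)>0$, using $\pi(a)>0$ because $a\in\support(\pi)$.

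The main point needing care is subcase (ii), the verification that a common median survives when $a$ sits at an endpoint of the median interval. This is exactly where the non-median-shift hypothesis is used. I would write that step out explicitly, checking both median inequalities for $\pi_{a\to b}$ at $m_2$.
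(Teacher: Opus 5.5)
Your proposal is correct and follows essentially the paper's route: exhibit a median common to $\pi$ and $\pi_{a\to b}$, then evaluate the difference of the two linear effort functions there, which gives $\pi(a)\bigl(|b-m|-|a-m|\bigr)=\pi(a)|b-a|>0$. The only difference is bookkeeping in the boundary case. The paper always uses the endpoint of the median interval nearest to $a$ and splits on whether that interval is degenerate. You instead switch to the far endpoint when $a=m_1<m_2$, which makes the median check immediate because no mass crosses or lands on $m$.
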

\begin{proof}
Let $[l,r]$ be the optimal consensus interval for $\pi$ for $\alpha=\beta=1$. Consider the case $r\le a<b$, as the mirror case $b<a\le l$ follows by symmetry.
We will first show that $r$ is a median for $\pi_{a\to b}$.

If $l<r$, then, given that $r$ is a median for $\pi$ and $a\ge r$, we have $\sum_{x\ge r}\pi_{a\to b}(x)=\sum_{x\ge r}\pi(x)=\frac{\Mass{\pi}}{2}=\frac{\Mass{\pi_{a\to b}}}{2}$. So $r$ is a median for $\pi_{a\to b}$.

If $l=r$, then $\sum_{x\ge r} \pi(x)> \Mass{\pi}/2 > \sum_{x > r} \pi(x)$. Since $\pi_{a\to b}$ is not a median-shift, $a>r$, and therefore
\[
\sum_{x\ge r}\pi_{a\to b}(x)=\sum_{x\ge r} \pi(x),
\qquad
\sum_{x> r}\pi_{a\to b}(x)=\sum_{x> r} \pi(x).
\]
This implies $\sum_{x\ge r} \pi_{a\to b}(x)> \Mass{\pi_{a\to b}}/2 > \sum_{x > r} \pi_{a\to b}(x)$, so $r$ is a median for $\pi_{a\to b}$.

Since $r$ is a median for both $\pi$ and $\pi_{a\to b}$, then
\[
\begin{aligned}
\MEC(\pi_{a\to b}) - \MEC(\pi) &= \EC[\alpha,\beta][\pi_{a\to b}](r) - \EC[\alpha,\beta][\pi](r)\\
&=\pi_{a\to b}(b) |b-r| - \pi(a) |a-r| - \pi(b) |b-r|\\
&=\pi(a) (b-a) > 0.
\end{aligned}
\]
\end{proof}

\begin{theorem}\label{thm:mec-increases-a1-b}
Let $\alpha = 1$ and $\beta>1$. Let $\pi\in\Dset$ and $a\in\support(\pi)$. If $\pi_{a\to b}$ is a shift away from consensus (Def.~\ref{def:movement}), then $\MEC(\pi_{a\to b})>\MEC(\pi)$.
\end{theorem}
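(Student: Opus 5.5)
The case $\alpha=1$, $\beta=2$ is explicit, since by Cor.~\ref{cor:mec-special-cases} the optimum sits at the mean and the increase equals $\Mass{\pi}\cdot\Var$. For general $\beta>1$ I would avoid computing the new optimum and argue by an envelope-type comparison. The plan is to check that the new effort function dominates the old one at the relevant candidate points, and is strictly larger at the new optimizer.

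By symmetry (Lem.~\ref{lem:mec-reflection}, reflecting about a suitable centre) it suffices to treat $y^*\le a<b$, where $y^*$ is the unique optimal consensus point of $\pi$ (unique by Th.~\ref{convexity}). Write $p=\pi(a)$ and $\pi'=\pi_{a\to b}$. Because $\alpha=1$, masses are additive, and this is the crucial simplification: merging $p$ into an existing mass at $b$ gives $(p+\pi(b))^1=p+\pi(b)$. Hence, for every $y$,
\[
\EC[1,\beta][\pi'](y)-\EC[1,\beta][\pi](y)=p\bigl(|b-y|^\beta-|a-y|^\beta\bigr).
\]
So no merging subtlety arises; this is exactly why $\alpha=1$ is separated from the general case.

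Let $y'$ be the unique minimizer of $\EC[1,\beta][\pi']$. The key step is to show $y'\le (a+b)/2$. Given that, $|b-y'|>|a-y'|$, and therefore
\[
\MEC(\pi')=\EC[1,\beta][\pi'](y')>\EC[1,\beta][\pi](y')\ge\MEC(\pi).
\]
To locate $y'$, I would first use the derivatives: $\EC[1,\beta][\pi]$ is $C^1$ with derivative zero at $y^*$, and for $y\le a$ the difference above has derivative
\[
\beta p\bigl((a-y)^{\beta-1}-(b-y)^{\beta-1}\bigr)<0 .
\]
Hence $\tfrac{d}{dy}\EC[1,\beta][\pi'](y^*)<0$, so by strict convexity $y'>y^*$. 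This alone does not yield $y'\le (a+b)/2$, so the cleaner route is a direct comparison. For any $y>(a+b)/2$, let $\tilde y=a+b-y<(a+b)/2$, the reflection of $y$ in the midpoint of $[a,b]$. I would then compare $\EC[1,\beta][\pi'](y)$ with $\EC[1,\beta][\pi](\tilde y)$, or more simply argue the case split: if $y'\le (a+b)/2$ we are done as above.

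If instead $y'>(a+b)/2>a\ge y^*$, I would use monotonicity of $\EC[1,\beta][\pi]$ on $[y^*,1]$. Since $\EC[1,\beta][\pi]$ is nondecreasing to the right of $y^*$, we get $\EC[1,\beta][\pi](y')\ge\EC[1,\beta][\pi](a)$. It then remains to show
\[
\EC[1,\beta][\pi'](y')>\EC[1,\beta][\pi](a)\quad\text{when } y'\ge(a+b)/2 .
\]
Write $\EC[1,\beta][\pi'](y')=R(y')+p|b-y'|^\beta$ and $\EC[1,\beta][\pi](a)=R(a)$, where $R$ denotes the effort of $\pi$ with mass $p$ at $a$ removed. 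I would show $R(y')\ge R(a)+\bigl(\text{slope bound}\bigr)$ using that $R'(a)=\EC[1,\beta][\pi]'(a)\ge 0$ (because the removed term has derivative $0$ at $a$), together with convexity of $R$. Convexity gives $R(y')\ge R(a)$, with strictness coming either from $R'(a)>0$ when $a>y^*$, or from $p|b-y'|^\beta>0$ when $y'\ne b$. The degenerate case $y'=b$ with $a=y^*$ needs a separate look. There, strict convexity of $R$ (when $R$ has other support; otherwise $\pi$ is a consensus at $a=y^*$ and the claim is immediate since the moved mass pays $0$ versus positive) gives $R(b)>R(a)$ for $b\ne a$ whenever $R'(a)\ge0$.

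The main obstacle is this second case: producing strictness cleanly when $a=y^*$, which is precisely the median-shift analogue. For $\beta>1$ strict convexity rescues it, and I expect that to be the heart of the argument.
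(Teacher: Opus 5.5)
Your argument is correct for non-consensus $\pi$, after a small repair, but it takes a different route from the paper.

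\textbf{Comparison with the paper.} The paper treats $t\mapsto\MEC[1,\beta](\pi_{a\to t})$ as a value function. It applies the envelope theorem to get the derivative $\pi(a)\,\beta\,|t-y^*_t|^{\beta-1}\mathrm{sign}(t-y^*_t)$, where $y^*_t$ is the optimum of $\pi_{a\to t}$. It then subtracts the first-order conditions of $\pi$ and $\pi_{a\to t}$ and uses strict monotonicity of $u\mapsto|u|^{\beta-1}\mathrm{sign}(u)$ to get $0<y^*_t-y^*<t-a$. So the new optimum stays strictly left of the moved mass, and the derivative is positive along the whole path.

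You never locate the new optimum. You write $E_\pi$ for the effort of $\pi$ and $R(y)=\sum_{x\ne a}\pi(x)|x-y|^\beta$, use the decomposition $E_{\pi'}=R+p|b-\cdot|^\beta$, $E_\pi=R+p|a-\cdot|^\beta$ (valid exactly because $\alpha=1$), and compare values at the new minimizer $y'$. Your route is more elementary. It needs only convexity of $R$ and $R'(a)=E_\pi'(a)\ge 0$, not differentiability of a minimum-value function, whose hypotheses the paper leaves implicit. The paper's route gives more in return: strict monotonicity of MEC along the whole path $t\mapsto\pi_{a\to t}$, and the localization $y^*<y^*_{a\to b}<y^*+(b-a)$.

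\textbf{First repair: the case split.} Your ``key step'' $y'\le(a+b)/2$ is false in general. Take a tiny mass at $0$ and a large mass at $a$; after the move, $y'$ is close to $b$. So the case split is essential, not optional. Also, at $y'=(a+b)/2$ the two distances coincide, so the strict inequality in your first case fails there.

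It is cleaner to split at $a$:
\begin{itemize}
\item If $y'<a$, then $(b-y')^\beta>(a-y')^\beta$ gives $E_{\pi'}(y')>E_\pi(y')\ge\MEC[1,\beta](\pi)$.
\item If $y'\ge a$, then $E_{\pi'}(y')=R(y')+p|b-y'|^\beta\ge R(a)=E_\pi(a)\ge\MEC[1,\beta](\pi)$. Strictness comes from $p|b-y'|^\beta>0$ when $y'\ne b$, and from strict convexity of $R$ when $y'=b>a$; the latter requires $R\not\equiv 0$.
\end{itemize}
This split makes both the derivative argument for $y'>y^*$ and the sub-case $a>y^*$ unnecessary.

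\textbf{Second repair: the consensus case.} This case is not ``immediate''. If $\support(\pi)=\{a\}$, then $\pi_{a\to b}$ is the consensus at $b$. The moved mass pays $0$ at the new optimum $y'=b$, and both MEC values are $0$. This is a genuine counterexample to the statement as written, which needs the hypothesis $|\support(\pi)|\ge 2$. The paper's proof uses this hypothesis silently: its function $g$ is identically $0$, not strictly increasing, when $\support(\pi)=\{a\}$. The hypothesis does hold in the paper's applications.

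\textbf{Minor remark.} For $\beta=2$ it is $\MEC[1,2](\pi)$ itself, not the increase, that equals $\Mass{\pi}\cdot\mathrm{Var}(\overline{\pi})$. That remark is not used in your argument anyway.
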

\begin{proof}
Let $y^*$ be the optimal consensus point for $\pi$.
Consider the case $y^*\le a<b$, as the mirror case $b<a\le y^*$ follows by symmetry.
To prove that $\MEC$ increases as $b$ increases away from $y^*$, it suffices to show that $\frac{\partial}{\partial b} \MEC(\pi_{a\to b})>0$.

The Envelope Theorem states that
\[
\frac{\partial}{\partial b} \MEC(\pi_{a\to b})
=\left.\frac{\partial}{\partial b} \EC[\alpha,\beta][\pi_{a\to b}](y)\right|_{y=y^*_{a\to b}}\hspace{-2.5em} = \pi(a)\, \beta |b-y^*_{a\to b}|^{\beta-1} \sign(b-y^*_{a\to b}),
\]
so it remains to show that $b > y^*_{a\to b}$.

From the optimality of $y^*$ and $y^*_{a\to b}$, and letting $f(t):=\beta |t|^{\beta-1} \sign(t)$ and  $g(y) := \sum_{x \ne a} \pi(x) f(y-x)$, it holds that
\begin{align}
0 &= \left.\frac{\partial}{\partial y} \EC[\alpha,\beta][\pi](y)\right|_{y=y^*}
= g(y^*) + \pi(a) f(y^*-a),\label{eq:894}\\
0 &= \left.\frac{\partial}{\partial y} \EC[\alpha,\beta][\pi_{a\to b}](y)\right|_{y=y^*_{a\to b}}
= g(y^*_{a\to b}) + \pi(a) f(y^*_{a\to b}-b).\label{eq:895}
\end{align}

Subtracting Equations~\eqref{eq:894} and~\eqref{eq:895} produces
\begin{equation}
\pi(a) (f(y^*-a) - f(y^*_{a\to b}-b)) = g(y^*_{a\to b}) - g(y^*).
\label{eq:896}
\end{equation}
Note that $f$ is strictly increasing, hence so is $y\mapsto f(y-x)$ for every $x$, and so is the function $g$.
Using the monotonicity of $f$ and $g$ on Equation~\eqref{eq:896} yields the following sign equalities
\def\eqsign{\stackrel{\sign}{=}}
\begin{align}
    f(y^*-a) - f(y^*_{a\to b}-b) &\eqsign g(y^*_{a\to b}) - g(y^*)\nonumber\\
    (y^*-a) - (y^*_{a\to b}-b) &\eqsign y^*_{a\to b} - y^*\nonumber\\
    (b-a) - (y^*_{a\to b}-y^*) &\eqsign y^*_{a\to b} - y^*\label{eq:897}
\end{align}

Equation~\eqref{eq:897} has the form $z-w \eqsign w$ with $z>0$, which reduces to $0<w<z$. Therefore, $0 < y^*_{a\to b}-y^* < b-a$, and $b>y^*_{a\to b}>y^*$.
\end{proof}

\begin{theorem}\label{thm:mec-increases-aux}
For any $\alpha,\beta\geq 1$, any distribution $\pi$, any $a\in\support(\pi)$ and any $b$, it holds that
\[
\MEC[1,\beta]((\pi^\alpha)_{a\to b}) \leq \MEC[1,\beta]((\pi_{a\to b})^\alpha).
\]
\end{theorem}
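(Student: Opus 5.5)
Write $\rho\defsymbol\pi^\alpha$. The two distributions in the statement differ only at the point $b$. Off $\{a,b\}$ both coincide with $\rho$, and both vanish at $a$ when $a\neq b$.

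At $b$, the left-hand side $\rho_{a\to b}$ carries mass $\pi(a)^\alpha+\pi(b)^\alpha$. The right-hand side $(\pi_{a\to b})^\alpha$ carries mass $(\pi(a)+\pi(b))^\alpha$. If $a=b$ both sides equal $\MEC[1,\beta](\pi^\alpha)$ and there is nothing to prove.

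The plan is a pointwise comparison of effort functions followed by taking minima. The first step is the elementary inequality
\[
u^\alpha+v^\alpha\le (u+v)^\alpha\qquad (u,v\ge 0,\ \alpha\ge 1),
\]
the superadditivity of $t\mapsto t^\alpha$. To prove it, assume $u+v>0$ and divide by $(u+v)^\alpha$. With $s=u/(u+v)\in[0,1]$ it becomes $s^\alpha+(1-s)^\alpha\le s+(1-s)=1$, which holds because $s^\alpha\le s$ on $[0,1]$. This handles $\pi(b)=0$ and $\pi(b)>0$ uniformly.

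Next fix any $y\in[0,1]$ and compare the two $(1,\beta)$-effort functions term by term. Every support point other than $b$ contributes the same amount to both. The point $b$ contributes $(\pi(a)^\alpha+\pi(b)^\alpha)|b-y|^\beta$ on the left and $(\pi(a)+\pi(b))^\alpha|b-y|^\beta$ on the right. Since $|b-y|^\beta\ge 0$, the inequality above gives
\[
\EC[1,\beta][(\pi^\alpha)_{a\to b}](y)\le \EC[1,\beta][(\pi_{a\to b})^\alpha](y)
\]
for every $y\in[0,1]$.

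Finally, take minima over $y\in[0,1]$ on both sides. Pointwise domination of functions is preserved by the minimum: choose $y'$ minimizing the right-hand side and evaluate the left-hand side there. This yields the claim.

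The only step needing care is the bookkeeping of the support. One must verify that $a$ is dropped on both sides and that points of $\support(\pi)$ outside $\{a,b\}$ are untouched. One must also allow $b\notin\support(\pi)$ (the case $\pi(b)=0$), where equality holds at $b$. I expect no genuine obstacle: the result is essentially the superadditivity of $t^\alpha$ transported through the linear dependence of the effort on the masses.
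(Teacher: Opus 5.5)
Your proof is correct, and it reaches the conclusion by a more elementary route than the paper's.

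Both arguments start from the same observation. The two distributions agree away from $b$, and at $b$ they carry masses $c_0=\pi(a)^\alpha+\pi(b)^\alpha\le c_1=(\pi(a)+\pi(b))^\alpha$. You also prove this superadditivity, which the paper simply asserts.

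The paper then introduces the family $\mu_c$ with mass $c$ at $b$. It shows $c\mapsto\MEC[1,\beta](\mu_c)$ is non-decreasing via the Envelope Theorem, computing the derivative $|b-y^*_c|^\beta\ge 0$ at an optimal consensus point $y^*_c$. You instead observe that the $(1,\beta)$-effort function is linear in the mass at $b$, with nonnegative coefficient $|b-y|^\beta$. Hence the two effort functions are ordered pointwise, and the minimum inherits that order.

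Your version needs neither differentiability of the value function nor uniqueness of minimizers. Both are delicate when $\beta=1$: there the optimal consensus points can form an interval, and $c\mapsto\MEC[1,1](\mu_c)$ is piecewise linear with kinks. So your argument is the cleaner justification of this non-strict inequality. The envelope computation does give a derivative formula, which would matter only if one wanted strict monotonicity; that is not needed here.
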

\begin{proof}
For any $c\ge 0$, let $\mu_c$ denote the distribution given by
\[
\mu_c(x)=\pi^\alpha(x) \text{ for all } x\notin \set{a,b},
\qquad
\mu_c(a)=0,
\qquad
\mu_c(b)=c.
\]
Then for $c_0:=(\pi^\alpha)_{a\to b}(b)$ we have $\mu_{c_0}=(\pi^\alpha)_{a\to b}$, and for $c_1:= (\pi_{a\to b})^\alpha(b)$ we have $\mu_{c_1} = (\pi_{a\to b})^\alpha$.

Since
\[
 c_0=\pi(a)^\alpha + \pi(b)^\alpha \leq (\pi(a)+\pi(b))^\alpha = c_1,
\]
it suffices to show that the mapping $g(c)=\MEC[1,\beta](\mu_c)$ is non-decreasing over the interval $c\in[c_0, c_1]$. Applying the Envelope Theorem to $g$ and $f(c,y) := \EC[1,\beta][\mu_c](y)$ yields $g'(c) =\frac{\partial}{\partial c} f(c, y)|_{y=y^*_c}$, where $y^*_c$ is an optimal consensus point of $\mu_c$.
So $g'(c) = \frac{\partial}{\partial c} c |b-y^*_c|^\beta = |b-y^*_c|^\beta \geq 0$.
\end{proof}

\shiftingAway*
\begin{proof}
Note that
\[
\MEC(\pi) = \MEC[1,\beta](\pi^\alpha)
\qquad\text{and}\qquad
\MEC(\pi_{a\to b})=\MEC[1,\beta]((\pi_{a\to b})^\alpha).
\]
Hence it suffices to show that 
\[
\MEC[1,\beta](\pi^\alpha) < \MEC[1,\beta]((\pi^\alpha)_{a\to b}) \leq \MEC[1,\beta]((\pi_{a\to b})^\alpha).
\]
The first inequality was shown in Theorems~\ref{thm:mec-increases-a1-b1} and~\ref{thm:mec-increases-a1-b} (for $\beta=1$ and $\beta>1$ respectively), and the second in Theorem~\ref{thm:mec-increases-aux}.
\end{proof}

\subsection{Two-Point Polarization}

\begin{lemma}\label{lem:two-point-max-simple}
Let \(k>0\) and \(A\in[0,k]\). Let \(\rho_A\in\Dset_k\) be the two-point distribution at $\{0,1\}$ with $\rho_A(0)=A$ and $\rho_A(1)=k-A$, and write \(g(A)\defsymbol \MEC(\rho_A)\). Then \(g(A)\le g(k/2)\) for every \(A\in[0,k]\).
\end{lemma}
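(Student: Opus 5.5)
We reduce to \(k=1\) via Condition~H and then show that \(g\) is symmetric about \(1/2\) and non-decreasing on \([0,1/2]\).

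\emph{Normalization.} By Th.~\ref{th:multbyscalar}, \(\MEC(\rho_A)=k^\alpha\MEC(\rho_{A/k}^{(1)})\), where \(\rho^{(1)}\) denotes the mass-one version. So it suffices to treat \(k=1\) and \(A\in[0,1]\). At the endpoints \(A\in\{0,1\}\) the distribution is a consensus, so \(g=0\le g(1/2)\), and we may assume \(0<A<1\).

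\emph{Symmetry.} By Lem.~\ref{lem:mec-reflection} with \(c=1/2\), the reflection of \(\rho_A\) is \(\rho_{1-A}\). Hence \(g(A)=g(1-A)\), and it suffices to show \(g(A)\le g(1/2)\) for \(A\in(0,1/2]\).

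\emph{Case \(\beta=1\).} By Cor.~\ref{cor:mec-special-cases}, the minimum is attained at a median of \(\overline{\rho_A^\alpha}\). For \(A<1/2\) we have \(A^\alpha<(1-A)^\alpha\), so the median is \(1\) and \(g(A)=A^\alpha\). This is increasing in \(A\), and at \(A=1/2\) it gives \((1/2)^\alpha=g(1/2)\), in agreement with Th.~\ref{th:mec-pi-star}.

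\emph{Case \(\beta>1\).} Write \(E(A,y)=A^\alpha y^\beta+(1-A)^\alpha(1-y)^\beta\), so that \(g(A)=\min_y E(A,y)\). The cleanest route is to avoid the envelope theorem and argue by direct comparison. For every fixed \(y\in[0,1/2]\) and \(A\le 1/2\), the difference \(E(1/2,y)-E(A,y)\) is \(\bigl((1/2)^\alpha-A^\alpha\bigr)y^\beta-\bigl((1-A)^\alpha-(1/2)^\alpha\bigr)(1-y)^\beta\). Its sign is unclear, so pointwise comparison fails and we need to use the minimizer instead. We therefore use the envelope argument. \(E\) is smooth in \(A\), and by Th.~\ref{convexity} the minimizer \(y^*(A)\) is unique and interior (as in the proof of Lem.~\ref{lem:yopt-closest-to-largest-general}). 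Moreover \(y^*\) depends continuously on \(A\), by strict convexity and the implicit function theorem, since \(\partial_y^2E>0\) on \((0,1)\). The envelope theorem then gives
\[g'(A)=\alpha A^{\alpha-1}y^{*\beta}-\alpha(1-A)^{\alpha-1}(1-y^*)^\beta.\]
The first-order condition reads \(A^\alpha y^{*\beta-1}=(1-A)^\alpha(1-y^*)^{\beta-1}\). Multiplying the first term of \(g'(A)\) by \(A/A\) and substituting,
\[g'(A)=\alpha (1-A)^\alpha(1-y^*)^{\beta-1}\Bigl(\frac{y^*}{A}-\frac{1-y^*}{1-A}\Bigr).\]
So \(g'(A)\ge0\) if and only if \(y^*\ge A\).

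\emph{Main obstacle: showing \(y^*\ge A\) for \(A\le 1/2\).} Evaluate \(\partial_yE\) at \(y=A\). It has the sign of \(A^\alpha A^{\beta-1}-(1-A)^\alpha(1-A)^{\beta-1}=A^{\alpha+\beta-1}-(1-A)^{\alpha+\beta-1}\), which is \(\le0\) because \(A\le 1-A\) and \(\alpha+\beta-1>0\). Since \(\partial_yE\) is increasing in \(y\) by strict convexity, \(y^*\ge A\). Thus \(g\) is non-decreasing on \((0,1/2]\) and continuous on \([0,1/2]\); continuity holds because \(g\) is a minimum of functions jointly continuous on a compact set. Hence \(g(A)\le g(1/2)\), and by symmetry this holds on all of \([0,1]\).
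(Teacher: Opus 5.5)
Your proof is correct. For $\beta>1$, however, it takes a different route from the paper's.

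The paper never differentiates in $A$. It uses the first-order condition at the optimal point $y_A$ to write $g(A)=A^\alpha y_A^{\beta-1}=(k-A)^\alpha(1-y_A)^{\beta-1}$. Multiplying these two expressions gives $g(A)^2=[A(k-A)]^\alpha[y_A(1-y_A)]^{\beta-1}$. Bounding the brackets by their maxima $k^2/4$ and $1/4$ then yields exactly $g(k/2)^2$. That argument is purely algebraic and works for arbitrary $k$ and all $A\in(0,k)$ at once. It needs neither the scaling reduction, nor the reflection symmetry, nor any regularity of $A\mapsto y^*(A)$.

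Your argument normalizes, symmetrizes, and then shows $g'(A)\ge 0$ on $(0,1/2]$ via the envelope formula together with $y^*\ge A$, which you get from $\partial_yE(A,A)\le 0$. This is heavier, since it relies on the implicit function theorem. That step is legitimate here because $\partial_y^2E>0$ on $(0,1)$ and $y^*$ is interior. In exchange, you prove more: since $\partial_yE(A,A)<0$ strictly for $A<1/2$, you get $y^*>A$, so $g$ is strictly increasing on $[0,1/2]$ and hence unimodal with its peak at $1/2$. That monotonicity-plus-symmetry structure is exactly what the paper's prose outline of Th.~\ref{thm:two_massed_distribution:body} describes, even though the appendix proof uses the product trick instead.

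The aside about pointwise comparison of $E(1/2,y)$ and $E(A,y)$ is a dead end and can simply be deleted.
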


\begin{proof}
By definition,
\[
g(A)=\min_{y\in[0,1]}\,\bigl(A^\alpha y^\beta +(k-A)^\alpha(1-y)^\beta\bigr).
\]
Write $f_A(y)\defsymbol A^\alpha y^\beta +(k-A)^\alpha(1-y)^\beta=\EC[\alpha,\beta][\rho_A](y)$. If \(A=0\) or \(A=k\), then $\rho_A$ is a consensus distribution, so \(g(A)=0\) and the conclusion is immediate. Thus assume \(0<A<k\).

If \(\beta=1\), then
\[
g(A)=\min\{A^\alpha,(k-A)^\alpha\},
\]
which is plainly maximal at \(A=k/2\).

Assume now that \(\beta>1\). By Theorem~\ref{convexity}, $f_A$ is strictly convex, so $\rho_A$ has a unique optimal consensus point \(y_A\in(0,1)\) satisfying
\[
A^\alpha y_A^{\beta-1}=(k-A)^\alpha(1-y_A)^{\beta-1}.
\]
Using this identity, we obtain
\[
\begin{aligned}
g(A)
&=A^\alpha y_A^\beta +(k-A)^\alpha(1-y_A)^\beta \\
&=A^\alpha y_A^\beta + A^\alpha y_A^{\beta-1}(1-y_A) \\
&=A^\alpha y_A^{\beta-1}.
\end{aligned}
\]
Similarly,
\[
g(A)=(k-A)^\alpha(1-y_A)^{\beta-1}.
\]
Multiplying these two expressions gives
\[
g(A)^2=[A(k-A)]^\alpha [y_A(1-y_A)]^{\beta-1}.
\]
Now \(A(k-A)\le (k/2)^2\) and \(y_A(1-y_A)\le 1/4\). Hence
\[
g(A)^2
\le
\left(\frac{k}{2}\right)^{2\alpha}\left(\frac14\right)^{\beta-1}.
\]
On the other hand, when \(A=k/2\) the distribution $\rho_{k/2}$ is symmetric about $1/2$, so by Lemma~\ref{lem:symmetry-mec} its optimal consensus point is $y=1/2$, and therefore
\[
g(k/2)
=
2\left(\frac{k}{2}\right)^\alpha \left(\frac12\right)^\beta
=
\left(\frac{k}{2}\right)^\alpha \left(\frac12\right)^{\beta-1}.
\]
Thus
\[
g(A)^2\le g(k/2)^2.
\]
Since \(g(A)\ge 0\), it follows that
\[
g(A)\le g(k/2).
\]
\end{proof}

\twoPointPolarization*
\begin{proof}
By Corollary~\ref{cor:condition-H-normalization}, we may assume $\Mass{\pi_A}=1$, so $\pi_A(0)=A\in[0,1]$ and $\pi_A(1)=1-A$. In the notation of Lemma~\ref{lem:two-point-max-simple} with $k=1$, we have $\pi_A=\rho_A$, hence $\MEC(\pi_A)=g(A)$. The lemma gives $g(A)\le g(1/2)$ for all $A\in[0,1]$, with equality at $A=1/2$. Since $\pi_A^*$ corresponds precisely to $A=1/2$, we conclude that $\MEC(\pi_A)\le \MEC(\pi_A^*)$.
\end{proof}

\subsection{Proof of Theorem~\ref{th:mec-is-a-polarization-measure}}

We first establish the auxiliary result that $\pi^*$ maximizes $\MEC$.

\begin{theorem}\label{thm:mec-maximal}
For any $\alpha,\beta\geq 1$ and any $\pi\in\Dset$, we have $\MEC(\pi)\le\MEC(\pi^*)$.
\end{theorem}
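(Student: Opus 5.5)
We follow the two-stage strategy outlined in Section~\ref{sec:shifting-away}: push $\pi$ to a distribution concentrated at the extremes by a finite sequence of non-median shifts away from consensus, then compare with $\pi^*$ via Theorem~\ref{thm:two_massed_distribution:body}. The argument is an induction on $N(\pi)\defsymbol|\support(\pi)\setminus\{0,1\}|$, the number of interior support points. If $\pi$ is a consensus, $\MEC(\pi)=0\le\MEC(\pi^*)$ and there is nothing to prove. The base case $N(\pi)=0$, i.e.\ $\support(\pi)\subseteq\{0,1\}$, is exactly Theorem~\ref{thm:two_massed_distribution:body} (with $A=\pi(0)$; the degenerate cases $A\in\{0,\Mass{\pi}\}$ are consensus). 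Throughout, shifts preserve total mass, so $\pi^*$ is unchanged.

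\textbf{Inductive step, $\beta>1$.} Let $y^*$ be the unique optimal consensus point of $\pi$ (Theorem~\ref{convexity}), and pick an interior support point $a\in(0,1)$. If $a\ge y^*$, set $b=1$. If $a<y^*$, set $b=0$. In either case $\pi_{a\to b}$ is a shift away from consensus in the sense of Def.~\ref{def:movement}, and it is automatically not a median-shift since $\beta>1$. Theorem~\ref{thm:mec-increases:body} then gives $\MEC(\pi)<\MEC(\pi_{a\to b})$. Since $N(\pi_{a\to b})=N(\pi)-1$ (merging into $0$ or $1$ is allowed), the induction hypothesis yields $\MEC(\pi_{a\to b})\le\MEC(\pi^*)$.

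\textbf{Inductive step, $\beta=1$.} The optimal consensus points form the median interval $[l,r]$ of $\overline{\pi^\alpha}$ (Cor.~\ref{cor:mec-special-cases}). We look for an interior support point $a$ with $a\ge r$ and $a\neq r$ if $l=r$, or symmetrically with $a\le l$. Such an $a$ is shifted to $1$, respectively $0$, and this is a non-median shift away from consensus. The only interior support points lying inside $(l,r)$ must be absent, because a median interval of positive length contains no support mass in its interior. Hence every interior support point is available except in one configuration: $l=r=y^*\in(0,1)$ and $y^*$ is the only interior support point. So either the induction proceeds as before, or $\support(\pi)\subseteq\{0,y^*,1\}$ with $y^*$ the unique median.

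\textbf{Terminal configuration for $\beta=1$.} This is the step needing care, and we argue as sketched in the body. Uniqueness of the median $y^*$ of $\overline{\pi^\alpha}$, with $y^*$ in the support, means $\pi(0)^\alpha<\Mass{\pi^\alpha}/2$, and similarly for $\pi(1)$. We must convert this into $\pi(0)<\Mass{\pi}/2$ and $\pi(1)<\Mass{\pi}/2$. Directly: if $\pi(0)\ge\Mass{\pi}/2$, then $\pi(0)\ge\pi(y^*)+\pi(1)$, so by superadditivity of $t\mapsto t^\alpha$ for $\alpha\ge1$ we get $\pi(0)^\alpha\ge\pi(y^*)^\alpha+\pi(1)^\alpha$. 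That says $0$ is a median of $\overline{\pi^\alpha}$, contradicting uniqueness at $y^*>0$. Since every point of $[0,1]$ is a median of $(\pi^*)^\alpha$, the point $y^*$ is optimal for $\pi^*$. Therefore
\[
\MEC(\pi)=\pi(0)^\alpha y^*+\pi(1)^\alpha(1-y^*)<\pi^*(0)^\alpha y^*+\pi^*(1)^\alpha(1-y^*)=\MEC(\pi^*).
\]
The main obstacle is precisely this $\beta=1$ bookkeeping: checking that admissible shifts exist except in the terminal case, and that the median condition transfers correctly through the $\alpha$-power.
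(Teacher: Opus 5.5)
Your proposal is correct and follows essentially the same route as the paper: successive non-median shifts of interior mass to $0$ or $1$ via Theorem~\ref{thm:mec-increases:body}, a final comparison by Theorem~\ref{thm:two_massed_distribution:body}, and for $\beta=1$ the stalled configuration on $\{0,y^*,1\}$ handled by evaluating both efforts at $y^*$. Your induction on the number of interior support points, your check of which shifts are available when $\beta=1$, and your superadditivity step (transferring the unique-median condition from $\pi^\alpha$ to $\pi$) make explicit what the paper's proof leaves implicit. Excluding consensus distributions at the outset also keeps you clear of the degenerate case where Theorem~\ref{thm:mec-increases:body} would only give equality.
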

\begin{proof}
If $\pi$ is a consensus distribution, then $\MEC(\pi^*)\geq 0 = \MEC(\pi)$, so we may assume $|\support(\pi)|\geq 2$.
It suffices to find a distribution $\mu$ with $\support(\mu)=\set{0,1}$ and $\Mass{\mu}=\Mass{\pi}$ and $\MEC(\mu)\geq \MEC(\pi)$, since applying Theorem~\ref{thm:two_massed_distribution:body} to $\mu$ yields
\[
\MEC(\pi)\leq \MEC(\mu) \leq \MEC(\pi^*).
\]

If $\beta>1$, applying Theorem~\ref{thm:mec-increases:body} successively to $\pi$ with any $a\in\support(\pi)\setminus\{0,1\}$ and $b\in\set{0,1}$ chosen appropriately to move away from consensus yields, after finitely many steps, a distribution $\mu$ with $\support(\mu)=\{0,1\}$ and $\MEC(\mu)\ge\MEC(\pi)$. The conclusion follows.

If $\beta=1$, the same process leads to a distribution $\mu$ with $\support(\mu)\subseteq\set{0,y^*,1}$ for some $y^*\in(0,1)$, where the case $\support(\mu)=\set{0,1}$ corresponds to no stalling. If $\support(\mu)=\set{0,1}$, the proof concludes with Theorem~\ref{thm:two_massed_distribution:body} as above.

Otherwise, $y^*\in\support(\mu)$ and the process stalled because $y^*$ is the unique optimal consensus point of $\mu$. By Cor.~\ref{cor:mec-special-cases}, $y^*$ is then the unique median of $\mu^\alpha$. Therefore, $\mu(0),\mu(1)<\Mass{\pi}/2=\pi^*(0)=\pi^*(1)$, since otherwise $0$ or $1$ would also be a median of $\mu^\alpha$, contradicting uniqueness.

Since $y^*$ is also a median for $\pi^*$ (every point in $[0,1]$ is), then 
\[
\begin{aligned}
\MEC[\alpha,1](\pi)&\leq \MEC[\alpha,1](\mu)
 = \EC[\alpha,1][\mu](y^*)\\
&= \mu(0)^\alpha (y^*-0) + \mu(1)^\alpha (1-y^*)\\
&\leq \pi^*(0)^\alpha (y^*-0) + \pi^*(1)^\alpha (1-y^*)\\
&=\EC[\alpha,1][\pi^*](y^*)=\MEC(\pi^*),
\end{aligned}
\]
where $\mu(0)^\alpha (y^*-0)$ is interpreted as $0$ when $0\notin\support(\mu)$, and similarly for the term at $1$.
\end{proof}

\mecIsPolarization*
\begin{proof}
The first condition of Def.~\ref{pol:def} holds because $\EC(y)=0$ at the consensus point of any consensus distribution, hence $\MEC(\pi)=0$ when $\pi$ is a consensus distribution. The second condition is exactly Theorem~\ref{thm:mec-maximal}.
\end{proof}

\section{Proofs for \texorpdfstring{\ERLabel}{ER} Axioms and Property 1}
\label{appendix:MEC_proofs}

\subsection{Proof of Theorem~\ref{th:mec-axiom-1} (\texorpdfstring{\ERLabel}{ER} Axiom~1)}
\label{appendix:proof-mec-axiom-1}

\mecAxiomOne*
\begin{proof}
\proofpart{Setup}
Fix $p>0$, $x>0$, write $h\defsymbol y-x$, and set
\[
  \pi=((p,q,q),(0,x,x+h)),
  \qquad
  \tilde\pi=((p,2q),(0,x+h/2)).
\]
Define
\[
  E(z)\defsymbol \EC[\alpha,\beta][\pi](z)
  =p^\alpha z^\beta+q^\alpha\lvert x-z\rvert^\beta+q^\alpha\lvert x+h-z\rvert^\beta,
\]
\[
  \widetilde E(z)\defsymbol \EC[\alpha,\beta][\tilde\pi](z)
  =p^\alpha z^\beta+(2q)^\alpha\left\lvert x+\tfrac{h}{2}-z\right\rvert^\beta,
\]
so that $\MEC(\pi)=\min_{z\in[0,1]}E(z)$ and $\MEC(\tilde\pi)=\min_{z\in[0,1]}\widetilde E(z)$.

\proofpart{Central identity and decomposition}

For $z\in(0,x)$, set $A\defsymbol x-z>0$, so that $\lvert x-z\rvert=A$ and
$\lvert x+h-z\rvert=A+h$. Then
\begin{equation}
  \label{eq:central}
  E(z)-\widetilde{E}(z)=q^\alpha\,\Delta(A,h),
\end{equation}
where we define
\[
  \Delta(A,h)\defsymbol A^\beta+(A+h)^\beta-2^\alpha\!\left(A+\tfrac{h}{2}\right)^\beta.
\]
The key to the proof is the decomposition
\begin{equation}
  \label{eq:decomp}
  \Delta(A,h)
  =\underbrace{\left[A^\beta+(A+h)^\beta-2\!\left(A+\tfrac{h}{2}\right)^\beta\right]}_{J(A,h)\;\geq\;0}
  -(2^\alpha-2)\!\left(A+\tfrac{h}{2}\right)^\beta.
\end{equation}
The term $J(A,h)\geq 0$ is the Jensen remainder for the convex function
$t\mapsto t^\beta$: it vanishes identically when $\beta=1$, and is strictly
positive when $\beta>1$ and $h>0$. The subtracted term is strictly positive
when $\alpha>1$, and vanishes when $\alpha=1$.

The sign pattern of $\Delta$ in each regime is therefore:

\begin{equation}
  \label{eq:signs}
  \begin{aligned}
    \alpha>1,\;\beta=1\colon\quad &\Delta=-(2^\alpha-2)\!\left(A+\tfrac h2\right)<0,\\
    \alpha>1,\;\beta>1\colon\quad &\Delta\big\vert_{h=0}=-(2^\alpha-2)A^\beta<0,\\
    \alpha=1,\;\beta=1\colon\quad &\Delta\equiv 0,\\
    \alpha=1,\;\beta>1\colon\quad &\Delta=J>0.
  \end{aligned}
\end{equation}

In the case $\alpha>1$ and $\beta>1$, the inequality $\Delta(A,0)<0$ implies, by continuity and compactness of $[x/2,x]$, that $\Delta(A,h)<0$ for all $A\in[x/2,x]$ and all sufficiently small $h>0$. More precisely, since $\Delta(A,0)=-(2^\alpha-2)A^\beta$ is continuous and strictly negative on the compact interval $[x/2,x]$, there exists $c_x>0$ such that $\Delta(A,0)\le -c_x<0$ for all $A\in[x/2,x]$. By continuity of $\Delta$, there exists $\varepsilon_0>0$ such that $\Delta(A,h)<0$ for all $A\in[x/2,x]$ and $0<h<\varepsilon_0$.

\proofpart{1. Assume $\beta=1$}
Set $\mu\defsymbol\tfrac12$; the condition $q<p/2$ ensures $p>2q$, so $0$ is
the unique weighted median of both $\pi^\alpha$ and $\tilde\pi^\alpha$.
The characterization of $\MEC[\alpha,1]$ by weighted medians then gives exact formulae:
\[
  \MEC(\pi)=q^\alpha(2x+h),
  \qquad
  \MEC(\tilde\pi)=(2q)^\alpha\!\left(x+\tfrac{h}{2}\right)=2^{\alpha-1}q^\alpha(2x+h).
\]
\proofpart{1.1 Assume $\alpha>1$} Then $2^{\alpha-1}>1$ and so $\MEC(\pi)<\MEC(\tilde\pi)$ for every
$h,q>0$: Axiom~1 holds.

\proofpart{1.2 Assume $\alpha=1$} Then $\MEC(\pi)=\MEC(\tilde\pi)$ for every $h,q>0$: Axiom~1 fails.

\proofpart{2. Assume $\beta>1$} First we locate the optimal consensus points.
For any $q<\mu_0 p$, where
\[
  \mu_0\defsymbol\min\!\left\{(1+2^{\beta-1})^{-1/\alpha},\;2^{-(\alpha+\beta-1)/\alpha}\right\}>0,
\]
the conditions $p^\alpha>(1+2^{\beta-1})q^\alpha$ and $p^\alpha>2^{\alpha+\beta-1}q^\alpha$
both hold. Since $h<x/2$ (to be enforced below), we have
$x/2+h<x$ and $x/2+h/2<x$, so
\[
  E'(x/2)
  \;\geq\;
  \beta\!\left(\tfrac x2\right)^{\!\beta-1}\!\!\bigl[p^\alpha-(1+2^{\beta-1})q^\alpha\bigr]>0,
\]
\[
  \widetilde{E}'(x/2)
  \;\geq\;
  \beta\!\left(\tfrac x2\right)^{\!\beta-1}\!\!\bigl[p^\alpha-2^{\alpha+\beta-1}q^\alpha\bigr]>0.
\]
Since $\beta>1$, $\beta p^\alpha z^{\beta-1}\to 0$ as $z\to 0^+$, giving
\[
  E'(0^+)=-\beta q^\alpha x^{\beta-1}-\beta q^\alpha(x+h)^{\beta-1}<0,
\]
\[
  \widetilde{E}'(0^+)=-\beta(2q)^\alpha\!\left(x+\tfrac{h}{2}\right)^{\beta-1}<0.
\]
Combined with the bounds above, we have
\[
  E'(0^+)<0<E'(x/2),
  \qquad
  \widetilde{E}'(0^+)<0<\widetilde{E}'(x/2),
\]
and therefore, by strict convexity, the unique optimal consensus points $z_\pi$ and $z_{\tilde\pi}$ lie in $(0,x/2)$. In particular,
\[
  A_\pi\defsymbol x-z_\pi\in\!\left(\tfrac x2,x\right),
  \qquad
  A_{\tilde\pi}\defsymbol x-z_{\tilde\pi}\in\!\left(\tfrac x2,x\right).
\]

\proofpart{2.1 Assume $\alpha>1$}
Set $\mu\defsymbol\mu_0$. Since $\Delta(A,0)=-(2^\alpha-2)A^\beta<0$ and $\Delta$ is continuous,
compactness of $[x/2,x]$ yields $\varepsilon_0>0$ such that
\[
  \Delta(A,h)<0
  \qquad\text{for all }A\in\bigl[\tfrac x2,x\bigr],\ h\in(0,\varepsilon_0).
\]
Set $\varepsilon\defsymbol\min(\varepsilon_0,x/2)$. This ensures both $h<\varepsilon_0$ (so that the sign
condition on $\Delta$ applies) and $h<x/2$ (so that the location argument
of the previous paragraph applies, giving $A_{\tilde\pi}\in(x/2,x)\subset[x/2,x]$).

Applying identity~\eqref{eq:central} at $z=z_{\tilde\pi}$:
\[
  E(z_{\tilde\pi})-\widetilde{E}(z_{\tilde\pi})=q^\alpha\,\Delta(A_{\tilde\pi},h)<0,
\]
so $E(z_{\tilde\pi})<\widetilde{E}(z_{\tilde\pi})=\MEC(\tilde\pi)$. Since
$\MEC(\pi)\leq E(z_{\tilde\pi})$, Axiom~1 holds.

\proofpart{2.2 Assume $\alpha=1$}
Let $\varepsilon,\mu>0$ be arbitrary. Choose
\[
  0<h<\min\!\left\{\varepsilon,\tfrac x2\right\},
  \qquad
  0<q<\min\!\left\{\mu p,\;\frac{p}{1+2^{\beta-1}}\right\},
\]
and set
\[
  \pi=((p,q,q),(0,x,x+h)),
  \qquad
  \tilde\pi=((p,2q),(0,x+h/2)).
\]
Since $\alpha=1$, the derivatives of $E$ satisfy
\[
  E'(x/2)\;\geq\;
  \beta\!\left(\tfrac x2\right)^{\!\beta-1}\!\bigl[p-(1+2^{\beta-1})q\bigr]>0,
\]
\[
  E'(0^+)=-\beta q\,x^{\beta-1}-\beta q\,(x+h)^{\beta-1}<0.
\]
By strict convexity the unique optimal consensus point $z_\pi$ of $\pi$ lies in $(0,x/2)$, so
$A_\pi\defsymbol x-z_\pi>0$.
By~\eqref{eq:decomp} with $\alpha=1$, we have $\Delta(A,h)=J(A,h)>0$ for all $A,h>0$.
Applying identity~\eqref{eq:central} at $z=z_\pi$:
\[
  E(z_\pi)-\widetilde{E}(z_\pi)=q\,J(A_\pi,h)>0,
\]
so $\widetilde{E}(z_\pi)<E(z_\pi)=\MEC(\pi)$. Since $\MEC(\tilde\pi)\leq\widetilde{E}(z_\pi)$,
we conclude $\MEC(\pi)>\MEC(\tilde\pi)$, and Axiom~1 fails.

\medskip
This completes the proof.
\end{proof}

\subsection{Proof of Theorem~\ref{th:mec-axiom-2} (\texorpdfstring{\ERLabel}{ER} Axiom~2)}
\label{appendix:proof-mec-axiom-2}

\mecAxiomTwo*
\begin{proof}
Fix $\alpha\ge 1$ and $\beta\ge 1$. Let $p,q,r>0$ with $p>r$, let $x,y>0$ satisfy $|y-x|<x<y$, and let $0<\Delta<y-x$. Define
\[
\pi=\big((p,q,r),(0,x,y)\big),\qquad
\pi_\Delta=\big((p,q,r),(0,x+\Delta,y)\big).
\]
Since $0<\Delta<y-x$ we have $x+\Delta\notin\{0,x,y\}$, hence $\pi_\Delta=\pi_{x\to x+\Delta}$ in the sense of Def.~\ref{def:movement}. We treat the cases $\beta>1$ and $\beta=1$ separately.

\proofpart{Case $\beta>1$}
Set $f(t):=|t|^{\beta-1}\sign(t)$. The effort function of $\pi$,
\[
\EC[\alpha,\beta][\pi](u)
= p^\alpha|u|^\beta+q^\alpha|u-x|^\beta+r^\alpha|u-y|^\beta,
\]
has derivative
\[
F_\pi(u):=\frac{\partial}{\partial u}\EC[\alpha,\beta][\pi](u)
=\beta\big(p^\alpha f(u)+q^\alpha f(u-x)+r^\alpha f(u-y)\big).
\]
Evaluating at $u=x$,
\begin{align*}
F_\pi(x)
&=\beta\big(p^\alpha f(x)+q^\alpha f(0)+r^\alpha f(x-y)\big) \\
&=\beta\big(p^\alpha x^{\beta-1}-r^\alpha (y-x)^{\beta-1}\big)>0,
\end{align*}
because $p^\alpha>r^\alpha$ (since $\alpha\ge1$ and $p>r$) and
$x^{\beta-1}>(y-x)^{\beta-1}$ (since $\beta>1$ and $x>y-x$).

Since $\beta>1$, $\EC[\alpha,\beta][\pi]$ is strictly convex, so $F_\pi$ is strictly increasing and has a unique zero $y^*(\pi)$. The inequality $F_\pi(x)>0$ implies $y^*(\pi)<x<x+\Delta$, so $\pi_\Delta=\pi_{x\to x+\Delta}$ is a shift away from consensus. Strict convexity also gives uniqueness of $y^*(\pi)$, so the shift is not a median-shift. By Theorem~\ref{thm:mec-increases:body},
\[
\MEC(\pi_\Delta)>\MEC(\pi).
\]

\proofpart{Case $\beta=1$}
By Cor.~\ref{cor:mec-special-cases}, the $(\alpha,1)$-optimal consensus points of $\pi$ are the medians of the renormalized distribution $\overline{\pi^\alpha}$, which has masses proportional to $(p^\alpha,q^\alpha,r^\alpha)$ at the support points $(0,x,y)$. Since $\alpha\ge 1$ and $p>r$ give $p^\alpha>r^\alpha$, the inequality $r^\alpha>p^\alpha+q^\alpha$ cannot hold, so $y$ is never a median. Hence every optimal consensus point of $\pi$ lies in $[0,x]$. The mass profile $(p^\alpha,q^\alpha,r^\alpha)$ is preserved by the shift, so the same conclusion holds for $\pi_\Delta$.

We split into two sub-cases according to whether $\pi_{x\to x+\Delta}$ is a median-shift in the sense of Def.~\ref{def:movement}.

\emph{Sub-case 1: $\pi_{x\to x+\Delta}$ is not a median-shift.}
Since every optimal consensus point $y^*$ of $\pi$ satisfies $y^*\le x<x+\Delta$, the shift $\pi_{x\to x+\Delta}$ is away from consensus by Def.~\ref{def:movement}. Theorem~\ref{thm:mec-increases:body} therefore applies and yields $\MEC(\pi_\Delta)>\MEC(\pi)$.

\emph{Sub-case 2: $\pi_{x\to x+\Delta}$ is a median-shift.}
By Def.~\ref{def:movement}, this means that $x$ is the unique $(\alpha,1)$-optimal consensus point of $\pi$, equivalently the unique median of $\overline{\pi^\alpha}$. Concretely, this corresponds to the strict inequalities
\[
p^\alpha<q^\alpha+r^\alpha
\quad\text{and}\quad
p^\alpha+q^\alpha>r^\alpha.
\]
The same mass conditions hold for $\overline{\pi_\Delta^\alpha}$, so $x+\Delta$ is the unique median of $\overline{\pi_\Delta^\alpha}$, hence the unique optimal consensus point of $\pi_\Delta$. By Cor.~\ref{cor:mec-special-cases},
\begin{align*}
\MEC(\pi)
&=\EC[\alpha,1][\pi](x)
=p^\alpha\,x+r^\alpha(y-x),\\
\MEC(\pi_\Delta)
&=\EC[\alpha,1][\pi_\Delta](x+\Delta)
=p^\alpha(x+\Delta)+r^\alpha\big(y-x-\Delta\big).
\end{align*}
Subtracting,
\[
\MEC(\pi_\Delta)-\MEC(\pi)
=\big(p^\alpha-r^\alpha\big)\,\Delta>0,
\]
since $p>r$ and $\alpha\ge 1$ give $p^\alpha>r^\alpha$, and $\Delta>0$.

In either sub-case, $\MEC(\pi_\Delta)>\MEC(\pi)$, which completes the proof.
\end{proof}

\subsection{Proof of Theorem~\ref{th:mec-axiom-3} (\texorpdfstring{\ERLabel}{ER} Axiom~3)}
\label{appendix:proof-mec-axiom-3}

\mecAxiomThree*
\begin{proof}
Fix $\alpha\ge 1$ and $\beta\ge 1$. Fix $p,q>0$ and $x>0$, let $y=2x$, and choose any $\Delta\in(0,q/2)$. Define
\[
\pi := \big((p,q,p), (0,x,2x)\big), \qquad
\tilde{\pi} := \big((p+\Delta, q-2\Delta, p+\Delta), (0,x,2x)\big).
\]
For $\rho\in\{\pi,\tilde{\pi}\}$, consider the effort function
\[
\EC[\alpha,\beta][\rho](u):=\sum_{t\in\{0,x,2x\}} \rho(t)^\alpha\,|u-t|^\beta.
\]
Since $\beta\ge1$, the map $u\mapsto \EC[\alpha,\beta][\rho](u)$ is convex. Moreover, because $\rho(0)=\rho(2x)$ for both
$\rho=\pi$ and $\rho=\tilde{\pi}$, both distributions are symmetric about $x$. By Lemma~\ref{lem:symmetry-mec},
\[
\MEC(\rho)=\EC[\alpha,\beta][\rho](x)\qquad\text{for }\rho\in\{\pi,\tilde{\pi}\}.
\]
Evaluating at $u=x$ yields
\[
\MEC(\pi)=\EC[\alpha,\beta][\pi](x)
=p^\alpha|0-x|^\beta+q^\alpha|x-x|^\beta+p^\alpha|2x-x|^\beta
=2p^\alpha x^\beta,
\]
and similarly
\[
\MEC(\tilde{\pi})=\EC[\alpha,\beta][\tilde{\pi}](x)
=2(p+\Delta)^\alpha x^\beta.
\]
Since $\Delta>0$ and $\alpha\ge1$, we have $(p+\Delta)^\alpha>p^\alpha$. Multiplying by $2x^\beta>0$ gives
\[
2p^\alpha x^\beta < 2(p+\Delta)^\alpha x^\beta,
\]
and thus $\MEC(\pi)<\MEC(\tilde{\pi})$.
\end{proof}

\subsection{Proof of Theorem~\ref{th:mec-property-1} (Property~1)}
\label{appendix:proof-mec-property-1}

\mecPropertyOne*
\begin{proof}
Fix $\alpha\ge 1$ and $\beta\ge 1$. Fix the parameters $p,q,x,z,\Delta$ as in Axiom~\ref{axiom:property-1}, with $y=2x$, and define
\[
\pi := \big((p,q,p),(0,x,2x)\big),
\]
\[
\tilde{\pi} := \big((p,\Delta,q-2\Delta,\Delta,p),(0,x-z,x,x+z,2x)\big).
\]
For $\rho\in\{\pi,\tilde{\pi}\}$, consider the effort function
\[
\EC[\alpha,\beta][\rho](u):=\sum_{t\in \support(\rho)} \rho(t)^\alpha\,|u-t|^\beta.
\]
Since $\beta\ge1$, the map $u\mapsto \EC[\alpha,\beta][\rho](u)$ is convex. Both $\pi$ and $\tilde{\pi}$ are symmetric about $x$ (their masses are paired at equal distance to the left and right of $x$). Therefore, by Lemma~\ref{lem:symmetry-mec},
\[
\MEC(\rho)=\EC[\alpha,\beta][\rho](x)\qquad\text{for }\rho\in\{\pi,\tilde{\pi}\}.
\]

Evaluating at $u=x$ yields
\[
\MEC(\pi)=\EC[\alpha,\beta][\pi](x)
=p^\alpha|x-0|^\beta+q^\alpha|x-x|^\beta+p^\alpha|x-2x|^\beta
=2p^\alpha x^\beta,
\]
and
\begin{align*}
\MEC(\tilde{\pi})
&=\EC[\alpha,\beta][\tilde{\pi}](x) \\
&=p^\alpha|x-0|^\beta+\Delta^\alpha|x-(x-z)|^\beta+(q-2\Delta)^\alpha|x-x|^\beta \\
&\quad +\Delta^\alpha|x-(x+z)|^\beta+p^\alpha|x-2x|^\beta \\
&=2p^\alpha x^\beta+2\Delta^\alpha z^\beta.
\end{align*}
Since $\Delta>0$ and $z>0$, we have $2\Delta^\alpha z^\beta>0$, and hence
\[
\MEC(\pi)=2p^\alpha x^\beta \;<\; 2p^\alpha x^\beta+2\Delta^\alpha z^\beta=\MEC(\tilde{\pi}).
\]
\end{proof}

\section{Auxiliary Theorems}
\label{appendix:auxiliary-theorems}

This appendix collects the closed-form propositions used by the Tipping Point method of Section~\ref{sec:tipping-point}.

We begin with an elementary identity used repeatedly below.

\begin{lemma}[Weighted-mean identity]\label{lem:weighted-variance}
Let \(w_1,\dots,w_s\geq 0\) with \(W\defsymbol\sum_{i=1}^s w_i>0\), and let \(x_1,\dots,x_s\in\mathbb{R}\). Set \(\mu\defsymbol\frac{1}{W}\sum_{i=1}^s w_ix_i\). Then
\[
\min_{y\in\mathbb{R}}\sum_{i=1}^s w_i(x_i-y)^2
=\sum_{i=1}^s w_i(x_i-\mu)^2
=\frac{1}{W}\sum_{1\leq i<j\leq s}w_iw_j(x_i-x_j)^2.
\]
\end{lemma}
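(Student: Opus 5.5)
The first equality is the standard bias--variance decomposition, and the second follows from the classical pairwise expansion of a weighted variance; I would prove them in that order.

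\textbf{First equality.} For any \(y\in\mathbb{R}\) I write \(x_i-y=(x_i-\mu)+(\mu-y)\) and expand the square:
\[
\sum_{i=1}^s w_i(x_i-y)^2=\sum_{i=1}^s w_i(x_i-\mu)^2+2(\mu-y)\sum_{i=1}^s w_i(x_i-\mu)+W(\mu-y)^2 .
\]
By the definition of \(\mu\), \(\sum_i w_i(x_i-\mu)=\sum_i w_ix_i-W\mu=0\), so the cross term vanishes. Since \(W>0\), the last term is nonnegative and vanishes exactly at \(y=\mu\). Hence the minimum over \(\mathbb{R}\) is attained (uniquely) at \(\mu\) and equals \(\sum_i w_i(x_i-\mu)^2\). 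This is the same computation behind the \(\beta=2\) case of Cor.~\ref{cor:mec-special-cases}, now carried out over all of \(\mathbb{R}\) and for arbitrary nonnegative weights.

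\textbf{Second equality.} I would evaluate the symmetric double sum \(S\defsymbol\sum_{i,j}w_iw_j(x_i-x_j)^2\) in two ways. On one hand, the diagonal terms are zero and each unordered pair is counted twice, so \(S=2\sum_{i<j}w_iw_j(x_i-x_j)^2\). On the other hand, expanding \((x_i-x_j)^2=x_i^2-2x_ix_j+x_j^2\) gives
\[
S=2W\sum_i w_ix_i^2-2\Bigl(\sum_i w_ix_i\Bigr)^2=2W\sum_i w_ix_i^2-2W^2\mu^2 .
\]
The first equality, evaluated at \(y=0\), gives \(\sum_i w_ix_i^2=\sum_i w_i(x_i-\mu)^2+W\mu^2\). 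Substituting yields \(S=2W\sum_i w_i(x_i-\mu)^2\). Equating the two expressions for \(S\) and dividing by \(2W>0\) gives the claim.

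No step here is a genuine obstacle. The only points needing care are the bookkeeping of the factor \(2\) between ordered and unordered pairs, and the use of \(W>0\) both to make the minimizer unique and to justify the final division. Zero weights cause no trouble, since they contribute nothing to any of the sums.
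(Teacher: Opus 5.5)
Your proof is correct and takes essentially the paper's route: the second equality is the same expansion and symmetrization of $W\sum_i w_ix_i^2-\bigl(\sum_i w_ix_i\bigr)^2$, and for the first you complete the square where the paper uses convexity and the vanishing of the derivative at $\mu$, a minor variation that also gives uniqueness of the minimizer. One small wording fix: what you evaluate at $y=0$ is the expansion identity from your first step, not the minimization equality itself.
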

\begin{proof}
The function \(y\mapsto\sum_i w_i(x_i-y)^2\) is convex with derivative \(-2\sum_i w_i(x_i-y)\), which vanishes at \(y=\mu\); this gives the first equality. For the second,
\[
\sum_i w_i(x_i-\mu)^2
=\sum_i w_ix_i^2-W\mu^2
=\frac{W\sum_i w_ix_i^2-\bigl(\sum_i w_ix_i\bigr)^2}{W},
\]
and symmetrizing in \((i,j)\) the numerator equals
\(\sum_{i,j}w_iw_j(x_i^2-x_ix_j)=\tfrac12\sum_{i,j}w_iw_j(x_i-x_j)^2=\sum_{i<j}w_iw_j(x_i-x_j)^2\).
\end{proof}

\begin{proposition}[General mass transfer to a more extreme point]
\label{prop:mec-general-mass-shift-formulas}
Let \(0\le l<r<r'\le 1\), let \(m,n>0\), and set \(d\defsymbol r-l\), \(e\defsymbol r'-r\), and \(C\defsymbol m^2+n^2\). Let
\[
\pi=((m,n),(l,r)),
\]
and, for \(k\in(0,n]\), let
\[
\pi_{r:k\rightarrow r'}=((m,n-k,k),(l,r,r')).
\]
Then
\[
\MEC[2,2](\pi)=\frac{m^2n^2d^2}{C},
\]
and
\[
\MEC[2,2](\pi_{r:k\rightarrow r'})
=
\frac{
m^2(n-k)^2d^2+m^2k^2(d+e)^2+(n-k)^2k^2e^2
}{
2k^2-2nk+C
}.
\]
Hence, if
\[
\Delta(k)\defsymbol \MEC[2,2](\pi_{r:k\rightarrow r'})-\MEC[2,2](\pi),
\]
then
\[
\Delta(k)=\frac{k\,R_{m,n,d,e}(k)}{Q_{m,n}(k)},
\]
where
\[
R_{m,n,d,e}(k)
=
e^2Ck^3-2e^2nCk^2+\bigl(2d^2m^4+2de\,m^2C+e^2C^2\bigr)k-2d^2m^4n,
\]
and
\[
Q_{m,n}(k)=C(2k^2-2nk+C).
\]
In particular, \(Q_{m,n}(k)>0\) for all \(k\in[0,n]\), so the sign of \(\Delta(k)\) for \(k>0\) is determined by \(R_{m,n,d,e}(k)\).
\end{proposition}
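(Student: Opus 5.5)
The plan is to compute both quantities in closed form with the weighted-mean identity (Lemma~\ref{lem:weighted-variance}), and then obtain the formula for \(\Delta(k)\) by subtracting the two closed forms and doing algebra. Corollary~\ref{cor:mec-special-cases} tells us that for \(\beta=2\) the optimal consensus point is the mean of \(\overline{\pi^2}\). This mean is a convex combination of support points in \([0,1]\), so minimizing over \([0,1]\) gives the same value as minimizing over \(\mathbb{R}\). We may therefore apply Lemma~\ref{lem:weighted-variance} with weights \(w_i=\pi(x_i)^2\).

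For \(\pi=((m,n),(l,r))\) the weights are \(m^2,n^2\) and \(W=C\). The pairwise form of the lemma gives \(m^2n^2d^2/C\) at once. For \(\pi_{r:k\to r'}\) the weights are \(m^2,(n-k)^2,k^2\), with pairwise distances \(d\), \(d+e\), \(e\). The numerator is then \(m^2(n-k)^2d^2+m^2k^2(d+e)^2+(n-k)^2k^2e^2\), and the denominator is \(W=m^2+(n-k)^2+k^2=2k^2-2nk+C\). This is exactly the stated formula. The case \(k=n\) is fine: the middle weight vanishes and the lemma still applies, since it allows \(w_i\ge 0\).

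Positivity of \(Q_{m,n}\) is immediate. Indeed \(2k^2-2nk+C=m^2+(n-k)^2+k^2\ge m^2>0\), so \(Q_{m,n}(k)=C\bigl(m^2+(n-k)^2+k^2\bigr)>0\) for every \(k\), not only on \([0,n]\).

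For \(\Delta(k)\) we put both terms over the common denominator \(Q_{m,n}(k)\). This gives
\[
\Delta(k)=\frac{C\,N(k)-m^2n^2d^2\,(2k^2-2nk+C)}{Q_{m,n}(k)},
\]
where \(N(k)\) is the numerator above. We then expand in powers of \(k\) and check the coefficients in turn:
\begin{itemize}
\item The constant term vanishes, since \(N(0)=m^2n^2d^2\). Hence \(k\) factors out.
\item In the quartic coefficient, the \(m^2d^2\) contributions cancel against each other and against the cross terms from \((d+e)^2\). What remains is \(C(2m^2de+m^2e^2+e^2\cdot\text{stuff})\), which we must match to \(e^2C\).
\end{itemize}

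The main obstacle is this coefficient matching. The expansion is routine but error-prone, and the claimed coefficients (for example \(e^2C^2\) and \(2de\,m^2C\) in the linear term) mix several sources. To organize it, we would write \(N(k)=m^2d^2(n-k)^2+m^2(d+e)^2k^2+e^2k^2(n-k)^2\) and expand it as a polynomial in \(k\) of degree \(4\). We then collect \(C\cdot N(k)-m^2n^2d^2(2k^2-2nk+C)\) coefficient by coefficient, reduce using \(C=m^2+n^2\) where needed, and compare with \(k\,R_{m,n,d,e}(k)\). As a consistency check we would verify the two boundary values \(R(0)=-2d^2m^4n\) and \(R(n)=e(2d+e)m^2nC\). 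We would also confirm the numerical example \(m=40\), \(n=60\), \(d=\tfrac12\), \(e=\tfrac14\), which yields \(325k^3-39000k^2+5050000k-76800000\) up to the common factor.
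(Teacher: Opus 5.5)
Your proposal is correct and follows essentially the same route as the paper: you apply Lemma~\ref{lem:weighted-variance} to both distributions, put $\Delta(k)$ over the common denominator $Q_{m,n}(k)$, factor out $k$ because the numerator vanishes at $k=0$, and expand. Your remark that the weighted mean lies in $[0,1]$, so minimizing over $[0,1]$ agrees with minimizing over $\mathbb{R}$, makes explicit a step the paper leaves implicit.

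There is one slip in your sketch of the expansion. The $k^4$ coefficient of the numerator is simply $e^2C$, and it comes only from $e^2k^2(n-k)^2$. The $m^2d^2$ cancellation you describe occurs instead in the $k^2$ coefficient: there $2m^2d^2C-2m^2n^2d^2=2d^2m^4$ and $m^2e^2+n^2e^2=e^2C$ give the linear coefficient $2d^2m^4+2de\,m^2C+e^2C^2$ of $R_{m,n,d,e}$.
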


\begin{proof}
By Lem.~\ref{lem:weighted-variance} applied to \(\pi\) with weights \(m^2\) and \(n^2\) at positions \(l\) and \(r\), and using \(r-l=d\),
\[
\MEC[2,2](\pi)=\frac{m^2n^2(r-l)^2}{m^2+n^2}=\frac{m^2n^2d^2}{C}.
\]
Applying the same lemma to \(\pi_{r:k\rightarrow r'}\), whose weights are \(m^2\), \((n-k)^2\), \(k^2\) at positions \(l\), \(r\), \(r'\), with the squared distances \((r-l)^2=d^2\), \((r'-l)^2=(d+e)^2\), \((r'-r)^2=e^2\),
\[
\MEC[2,2](\pi_{r:k\rightarrow r'})
=
\frac{m^2(n-k)^2d^2+m^2k^2(d+e)^2+(n-k)^2k^2e^2}{m^2+(n-k)^2+k^2}.
\]
Since \(m^2+(n-k)^2+k^2=2k^2-2nk+C\), this gives the stated formula for \(\MEC[2,2](\pi_{r:k\rightarrow r'})\).

Set \(Q_{m,n}(k)\defsymbol C(2k^2-2nk+C)\). Putting \(\Delta(k)=\MEC[2,2](\pi_{r:k\rightarrow r'})-\MEC[2,2](\pi)\) over this common denominator,
\[
\Delta(k)
=
\frac{C\bigl[m^2(n-k)^2d^2+m^2k^2(d+e)^2+(n-k)^2k^2e^2\bigr]-m^2n^2d^2(2k^2-2nk+C)}{Q_{m,n}(k)}.
\]
At \(k=0\) the second weight reduces to \(n^2\) and the third weight vanishes, so \(\pi_{r:0\rightarrow r'}=\pi\) as opinion distributions and the numerator vanishes. The numerator is therefore divisible by \(k\), and direct expansion and simplification yield
\[
\Delta(k)=\frac{k\,R_{m,n,d,e}(k)}{Q_{m,n}(k)},
\]
with \(R_{m,n,d,e}(k)\) the cubic polynomial given in the statement.

Finally, completing the square in the inner factor of \(Q_{m,n}\),
\[
2k^2-2nk+C
=2\bigl(k-\tfrac{n}{2}\bigr)^{2}+\bigl(m^2+\tfrac{n^2}{2}\bigr)>0
\quad\text{for all }k\in\mathbb{R}.
\]
Since also \(C=m^2+n^2>0\), we have \(Q_{m,n}(k)>0\) on the whole real line, in particular on \([0,n]\). Therefore, for \(k>0\), the sign of \(\Delta(k)\) is determined by \(R_{m,n,d,e}(k)\).
\end{proof}

\begin{proposition}[Uniqueness of the tipping point]
\label{prop:unique-root-tipping-general}
Let \(m,n,d,e>0\). Then the polynomial \(R_{m,n,d,e}\) has exactly one root in \((0,n)\). Moreover, this root is simple.
\end{proposition}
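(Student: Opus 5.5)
The plan is to combine an intermediate-value argument, which gives existence, with a sign argument at the roots, which gives uniqueness and simplicity. By Proposition~\ref{prop:mec-general-mass-shift-formulas},
\[
R_{m,n,d,e}(k)=e^2Ck\bigl((k-n)^2+m^2\bigr)+2de\,m^2Ck+2d^2m^4(k-n).
\]
I will first check this regrouping by expansion: it uses $k^2-2nk+C=(k-n)^2+m^2$ and reproduces the coefficients of $k^3,k^2,k,1$. From it, $R(0)=-2d^2m^4n<0$ and $R(n)=e(2d+e)m^2nC>0$, as already recorded in Example~\ref{ex:tipping-mass-shift-general}. Since $R$ is a cubic with positive leading coefficient, continuity gives an odd number of roots in $(0,n)$, counted with multiplicity, so there is at least one.

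For uniqueness I will not try to prove $R'>0$ on all of $[0,n]$. Writing $R'(k)=e^2C\bigl((k-n)(3k-n)+m^2\bigr)+2de\,m^2C+2d^2m^4$ shows this can fail when $m$ is small relative to $n$. Instead I will show that $R'(k_0)>0$ at every root $k_0\in(0,n)$. A $C^1$ function that crosses zero only upward on an interval can have at most one zero there, and the same fact gives simplicity. The tool is the identity
\[
k_0R'(k_0)=k_0R'(k_0)-R(k_0)=2e^2Ck_0^2(k_0-n)+2d^2m^4n .
\]
The root equation, divided by $k_0$, can be rewritten as
\[
2d^2m^4\,\frac{n-k_0}{k_0}=e^2C\bigl((n-k_0)^2+m^2\bigr)+2de\,m^2C .
\]
I will use it to replace the constant $2d^2m^4n$, giving
\[
k_0R'(k_0)=k_0\Bigl[e^2C\bigl((n-k_0)^2-2k_0(n-k_0)+m^2\bigr)+2de\,m^2C+2d^2m^4\Bigr].
\]
I then need to show this bracket is positive.

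This positivity is the main obstacle, because $(n-k_0)(n-3k_0)+m^2$ can be negative for $k_0\in(n/3,n)$. My first attempt is to avoid eliminating the constant term completely. The root equation also bounds $k_0$ in terms of $d,e,m,n$: since the left side of the rewritten root equation decreases in $k_0$, a large root forces $d$ to be large relative to $e$. I will try to show that in the regime where $(n-k_0)(n-3k_0)+m^2<0$, the surviving terms $2de\,m^2C+2d^2m^4$ dominate $e^2C\bigl(2k_0(n-k_0)-(n-k_0)^2\bigr)\le e^2Cn^2/3$. The fallback is a change of variables. I would consider $\varphi(k)=e^2Ck(n-k)+Kk/(n-k)$ with $K=e^2Cm^2+2de\,m^2C$, so that roots of $R$ in $(0,n)$ are exactly the solutions of $\varphi(k)=2d^2m^4$. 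I would then show that $\varphi$ can cross the level $2d^2m^4$ only once. The reason is that $\varphi$ is increasing on $(0,n/2]$ and convex with $\varphi\to\infty$ near $n$, so any decreasing stretch lies at heights the level cannot reach without an earlier crossing, which I would rule out. If both attempts fail, I would compute the cubic discriminant of $R$ directly and show it is negative, or else that the two critical points of $R$ have values of the same sign, whenever two roots would otherwise lie in $(0,n)$.
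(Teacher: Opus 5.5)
Your overall scheme is sound, and all the algebra you display checks out: the regrouping of $R_{m,n,d,e}$, the values $R(0)<0<R(n)$, the formula for $R'$, and the identity $k_0R'(k_0)=2e^2Ck_0^2(k_0-n)+2d^2m^4n$ at a root. Existence by the intermediate value theorem, together with uniqueness and simplicity from $R'(k_0)>0$ at every root $k_0\in(0,n)$, would prove the statement. The gap is that this positivity, which is the whole content of the uniqueness claim, is never established.

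Your substitution step goes in a circle. Since $(n-k_0)^2-2k_0(n-k_0)=(k_0-n)(3k_0-n)$, the bracket you arrive at is literally $R'(k_0)$, so you have rewritten ``$k_0R'(k_0)>0$'' as ``$k_0R'(k_0)>0$''. What follows are plans rather than arguments:
\begin{itemize}
\item The first attempt stops at ``I will try to show''.
\item The $\varphi$ fallback hinges on an unargued claim (``which I would rule out''). Its description of the shape of $\varphi$ is also wrong in general: $\varphi''(k)=-2e^2C+2Kn/(n-k)^3$ is negative near $k=0$ whenever $K<e^2Cn^2$.
\item The discriminant fallback is exactly the paper's proof, not carried out.
\end{itemize}
As submitted, uniqueness and simplicity are unproved.

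The missing step is short, and your own unsubstituted identity is the right tool. Start from the root equation
\[
e^2Ck_0\bigl((n-k_0)^2+m^2\bigr)+2de\,m^2Ck_0=2d^2m^4(n-k_0).
\]
Dropping its first (positive) term gives $eCk_0<dm^2(n-k_0)$. Then, using $0<n-k_0<n$ and $C>n^2$,
\[
e^2Ck_0^2(n-k_0)=\frac{(eCk_0)^2(n-k_0)}{C}<\frac{d^2m^4(n-k_0)^3}{C}<d^2m^4n,
\]
so $k_0R'(k_0)=2\bigl(d^2m^4n-e^2Ck_0^2(n-k_0)\bigr)>0$.

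Your first attempt also closes. If $k_0>n/3$, then $(n-k_0)/k_0<2$, so the divided root equation gives $eC<2dm^2$. The negative part of the bracket therefore has size at most $e^2Cn^2/3<e^2C^2/3<\tfrac43 d^2m^4<2d^2m^4$.

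With this step inserted, your argument is complete and genuinely different from the paper's. The paper normalizes to $\widetilde R_{x,s}$ and shows its discriminant is negative for all $x,s>0$. That is a heavy symbolic computation, and it yields the stronger fact that $R$ has a single real root overall. Your route uses only the root equation and elementary bounds, and it controls exactly the roots in $(0,n)$, which is all the proposition needs.
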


\begin{proof}
Set \(x\defsymbol n^2/m^2>0\), \(s\defsymbol e/d>0\), and \(t\defsymbol k/n\). Then \(k\in(0,n)\) if and only if \(t\in(0,1)\), and a direct computation gives
\[
R_{m,n,d,e}(k)=d^2m^4n\,\widetilde R_{x,s}(t),
\]
where
\[
\widetilde R_{x,s}(t)
=
s^2x(x+1)t^3-2s^2x(x+1)t^2+\bigl(s^2(x+1)^2+2s(x+1)+2\bigr)t-2.
\]
Since \(d^2m^4n>0\), the roots of \(R_{m,n,d,e}\) in \((0,n)\) are exactly the roots of \(\widetilde R_{x,s}\) in \((0,1)\).

Now \(\widetilde R_{x,s}(0)=-2<0\), while
\[
\widetilde R_{x,s}(1)=s(x+1)(s+2)>0.
\]
Hence \(\widetilde R_{x,s}\) has at least one root in \((0,1)\).

To prove uniqueness, compute the discriminant of \(\widetilde R_{x,s}\) with respect to \(t\):
\[
\Disc_t(\widetilde R_{x,s})=-4s^2x(x+1)\,P(x,s),
\]
where
\[
P(x,s)=(s^6+2s^5)x^5+x^2(Ax^2+Bx+C)+Dx+E,
\]
with
\[
\begin{aligned}
A&=5s^6+14s^5+8s^4,\\
B&=10s^6+36s^5+26s^4-12s^3,\\
C&=10s^6+44s^5+46s^4+8s^3+23s^2,
\end{aligned}
\]
and
\[
\begin{aligned}
D&=5s^6+26s^5+46s^4+52s^3+59s^2+24s,\\
E&=s^6+6s^5+18s^4+32s^3+36s^2+24s+8.
\end{aligned}
\]
Clearly \(A,C,D,E>0\). Moreover,
\[
4AC-B^2
=
4s^6(s+2)\bigl(25s^5+130s^4+212s^3+204s^2+234s+74\bigr)>0.
\]
Thus the quadratic polynomial \(Ax^2+Bx+C\) is positive for every \(x>0\). Since also \((s^6+2s^5)x^5>0\), \(Dx>0\), and \(E>0\), it follows that \(P(x,s)>0\) for all \(x,s>0\). Therefore
\[
\Disc_t(\widetilde R_{x,s})<0.
\]
A cubic with negative discriminant has exactly one real root. Hence \(\widetilde R_{x,s}\) has exactly one real root. Since \(\widetilde R_{x,s}(0)<0<\widetilde R_{x,s}(1)\), that root lies in \((0,1)\). Therefore \(R_{m,n,d,e}\) has exactly one root \(k^\star\in(0,n)\). Since the discriminant is non-zero, this root is simple.
\end{proof}

\subsection*{Proof of Proposition~\ref{prop:mec22-mass-transfer}}

\massTransfer*
\begin{proof}
Let \(w_i=p_i^2\), \(D=\sum_{i=1}^r w_i\), and \(N=\sum_{1\leq i<j\leq r}w_iw_j(x_i-x_j)^2\). By Lem.~\ref{lem:weighted-variance},
\[
\MEC[2,2](\pi)=\frac{N}{D}.
\]
For \(\pi^{(k)}\), set \(x_{r+1}=1\) and
\[
w_i(k)=
\begin{cases}
p_i^2 & i\le r-1,\\
(p_r-k)^2 & i=r,\\
k^2 & i=r+1,
\end{cases}
\]
and let \(D^{(k)}=\sum_{i=1}^{r+1}w_i(k)=\sum_{i=1}^{r-1}p_i^2+(p_r-k)^2+k^2\) and \(N^{(k)}=\sum_{1\leq i<j\leq r+1}w_i(k)w_j(k)(x_i-x_j)^2\). By Lem.~\ref{lem:weighted-variance} again, \(\MEC[2,2](\pi^{(k)})=N^{(k)}/D^{(k)}\).

Set \(Q_\pi(k)\defsymbol D\,D^{(k)}\). Since \(a^2+b^2\geq\tfrac12(a+b)^2\),
\[
D^{(k)}\geq(p_r-k)^2+k^2\geq\tfrac12 p_r^2>0,
\]
so \(Q_\pi(k)>0\) for every \(k\in\mathbb{R}\), in particular on \([0,p_r]\).

Putting \(\MEC[2,2](\pi^{(k)})-\MEC[2,2](\pi)\) over the common denominator \(Q_\pi(k)\),
\[
\MEC[2,2](\pi^{(k)})-\MEC[2,2](\pi)=\frac{N^{(k)}D-N\,D^{(k)}}{Q_\pi(k)}.
\]
The weights \(w_i(k)\) are polynomials in \(k\) of degree at most \(2\), so \(N^{(k)}\) is a polynomial of degree at most \(4\) and \(D^{(k)}\) a polynomial of degree \(2\). Hence the numerator above is a polynomial of degree at most \(4\). At \(k=0\) we have \(w_{r+1}(0)=0\) and \(w_r(0)=p_r^2\), so \(D^{(0)}=D\) and \(N^{(0)}=N\), giving \(N^{(0)}D-N\,D^{(0)}=0\). The numerator is therefore divisible by \(k\), and we may write
\[
\MEC[2,2](\pi^{(k)})-\MEC[2,2](\pi)=\frac{k\,P_\pi(k)}{Q_\pi(k)}
\]
with \(P_\pi(k)\) a polynomial of degree at most \(3\).

Finally, \(P_\pi\) is genuinely cubic. The only contribution to \(k^4\) in \(N^{(k)}D-N\,D^{(k)}\) comes from the cross term \(w_r(k)w_{r+1}(k)(x_r-1)^2D=(p_r-k)^2k^2(1-x_r)^2D\), whose \(k^4\) coefficient is \(D(1-x_r)^2\). Since \(x_r<1\) and \(D>0\),
\[
[k^3]P_\pi(k)=D\,(1-x_r)^2>0,
\]
so \(P_\pi\) has degree exactly \(3\) with positive leading coefficient.
\end{proof}

\end{document}